\newtheorem{example}{Example}
\newtheorem{theorem}{Theorem}
\newtheorem{definition}{Definition}
\newtheorem{proposition}{Proposition}
\newtheorem{lemma}{Lemma}
    \title{Do Repeat Yourself:\\Understanding Sufficient Conditions for Restricted Chase Non-Termination}
    \title{Do Repeat Yourself:\\Understanding Sufficient Conditions for Restricted Chase Non-Termination \\ (Technical Report) }
\author{%
    Lukas Gerlach$^1$\and
    David Carral$^2$ \\
    \affiliations
    $^1$Knowledge-Based Systems Group, TU Dresden, Dresden, Germany\\
    $^2$LIRMM, Inria, University of Montpellier, CNRS, Montpellier, France \\
    \emails
    lukas.gerlach@tu-dresden.de,
    david.carral@inria.fr
}
\newtheorem*{rep@theorem}{\rep@title}
\newcommand{\newreptheorem}[2]{%
    \newenvironment{rep#1}[1]{%
        \def\rep@title{#2 \ref{##1}}%
        \begin{rep@theorem}}%
        {\end{rep@theorem}}}
\newcommand{\FormatEntitySet}[1]{\ensuremath{\mathtt{#1}}}
\newcommand{\FormatFormulaSet}[1]{\ensuremath{\mathcal{#1}}}
\newcommand{\FormatFunction}[1]{\ensuremath{\mathsf{#1}}}
\newcommand{\FormatPredicate}[1]{\ensuremath{\mathsf{#1}}}
\newcommand{\FP}[1]{\FormatPredicate{#1}}
\newcommand{\FirstItem}{\textit{(i)}\xspace}
\newcommand{\SecondItem}{\textit{(ii)}\xspace}
\newcommand{\ThirdItem}{\textit{(iii)}\xspace}
\newcommand{\Tuple}[1]{\ensuremath{\langle{#1}\rangle}\xspace}
\renewcommand{\vec}[1]{\boldsymbol{#1}}
\newcommand{\FunctionSymbols}{\FormatEntitySet{Funs}\xspace}
\newcommand{\Funs}{\FunctionSymbols}
\newcommand{\Variables}{\FormatEntitySet{Vars}\xspace}
\newcommand{\Vars}{\Variables}
\newcommand{\Predicates}{\FormatEntitySet{Preds}\xspace}
\newcommand{\Preds}{\Predicates}
\newcommand{\Constants}{\FormatEntitySet{Cons}\xspace}
\newcommand{\C}{\Constants}
\newcommand{\Terms}{\FormatEntitySet{Terms}\xspace}
\newcommand{\EntitiesIn}[2]{\ensuremath{#1(#2)}\xspace}
\newcommand{\EI}[2]{\EntitiesIn{#1}{#2}}
\newcommand{\Arity}{\ensuremath{\FormatFunction{ar}}\xspace}
\newcommand{\Depth}{\ensuremath{\FormatFunction{depth}}\xspace}
\newcommand{\Max}{\ensuremath{\FormatFunction{max}}\xspace}
\newcommand{\Subterms}{\ensuremath{\FormatFunction{subterms}}\xspace}
\newcommand{\Vx}{{\ensuremath{\vec{x}}}\xspace}
\newcommand{\Vy}{\ensuremath{\vec{y}}\xspace}
\newcommand{\Vz}{\ensuremath{\vec{z}}\xspace}
\newcommand{\Vt}{{\ensuremath{\vec{t}}}\xspace}
\newcommand{\Vs}{\ensuremath{\vec{s}}\xspace}
\newcommand{\Vc}{\ensuremath{\vec{c}}\xspace}
\newcommand{\Formula}{\ensuremath{\upsilon}\xspace}
\newcommand{\Expression}{\ensuremath{\phi}\xspace}
\newcommand{\Rule}{\ensuremath{\rho}\xspace}
\newcommand{\RuleAux}{\ensuremath{\psi}\xspace}
\newcommand{\Query}{\ensuremath{\gamma}\xspace}
\newcommand{\Body}{\ensuremath{\beta}\xspace}
\newcommand{\BodyFunction}{\FormatFunction{body}\xspace}
\newcommand{\BF}{\BodyFunction}
\newcommand{\Head}{\ensuremath{\eta}\xspace}
\newcommand{\HeadFunction}{\FormatFunction{head}\xspace}
\newcommand{\HF}{\HeadFunction}
\newcommand{\Frontier}{\FormatFunction{frontier}\xspace}
\newcommand{\BranchFactor}{\FormatFunction{branching}\xspace}
\newcommand{\FactSet}{\ensuremath{\FormatFormulaSet{F}}\xspace}
\newcommand{\F}{\FactSet}
\newcommand{\FactSetAux}{\ensuremath{\FormatFormulaSet{G}}\xspace}
\newcommand{\FAux}{\FactSetAux}
\newcommand{\Instance}{\ensuremath{\FormatFormulaSet{I}}\xspace}
\newcommand{\Database}{\ensuremath{\FormatFormulaSet{D}}\xspace}
\newcommand{\D}{\Database}
\newcommand{\Rules}{\ensuremath{\FormatFormulaSet{R}}\xspace}
\newcommand{\R}{\Rules}
\newcommand{\KB}{\ensuremath{\FormatFormulaSet{K}}\xspace}
\newcommand{\ChaseTree}{\ensuremath{T}\xspace}
\newcommand{\CT}{\ChaseTree}
\newcommand{\Substitution}{\ensuremath{\sigma}\xspace}
\newcommand{\Subs}{\Substitution}
\newcommand{\SubstitutionAux}{\ensuremath{\tau}\xspace}
\newcommand{\SubsAux}{\SubstitutionAux}
\newcommand{\SkolemFunction}[2]{\ensuremath{f^{#1}_{#2}}}
\newcommand{\SF}[2]{\SkolemFunction{#1}{#2}}
\newcommand{\Skolemise}{\ensuremath{\FormatFunction{sk}}}
\newcommand{\Trigger}{\ensuremath{\lambda}\xspace}
\newcommand{\TriggerSeq}{\ensuremath{\Lambda}\xspace}
\newcommand{\Output}{\ensuremath{\FormatFunction{out}}\xspace}
\newcommand{\LabelFacts}{\FormatFunction{fct}\xspace}
\newcommand{\LF}{\LabelFacts}
\newcommand{\LabelTrigger}{\FormatFunction{trg}\xspace}
\newcommand{\LT}{\LabelTrigger}
\newcommand{\MFA}{\text{MFA}\xspace}
\newcommand{\RMFA}{\text{RMFA}\xspace}
\newcommand{\RMFAFunction}{\FormatFunction{RMFA}\xspace}
\newcommand{\RMFAF}{\RMFAFunction}
\newcommand{\BirthFacts}[2]{\ensuremath{\FormatFunction{BirthF}_{#1}(#2)}\xspace}
\newcommand{\Skeleton}[2]{\ensuremath{\FormatFunction{skeleton}_{#1}(#2)}\xspace}
\newcommand{\DMFC}{\text{DMFC}\xspace}
\newcommand{\RMFC}{\text{RMFC}\xspace}
\newcommand{\DRPC}{\text{DRPC}\xspace}
\newcommand{\DRPCFunction}{\FormatFunction{DRPC}\xspace}
\newcommand{\DRPCF}{\DRPCFunction}
\newcommand{\RPC}{\text{RPC}\xspace}
\newcommand{\RPCFunction}{\FormatFunction{RPC}\xspace}
\newcommand{\RPCF}{\RPCFunction}
\newcommand{\RuleDatabase}[1]{\ensuremath{\D_{#1}}\xspace}
\newcommand{\UCSubs}{\ensuremath{\Subs_{\textit{uc}}}\xspace}
\newcommand{\HeadChoice}{\FormatFunction{hc}\xspace}
\newcommand{\Branch}{\FormatFunction{branch}\xspace}
 \newcommand{\DoubleExpTime}{\textsc{2ExpTime}\xspace}
\begin{document}

\maketitle

\begin{abstract}
  The disjunctive restricted chase is a sound and complete procedure for solving boolean conjunctive query entailment over knowledge bases of disjunctive existential rules.
  Alas, this procedure does not always terminate and checking if it does is undecidable.
  However, we can use acyclicity notions (sufficient conditions that imply termination) to effectively apply the chase in many real-world cases.
  To know if these conditions are as general as possible, we can use cyclicity notions (sufficient conditions that imply non-termination).
  In this paper, we discuss some issues with previously existing cyclicity notions, propose some novel notions for non-termination by dismantling the original idea, and empirically verify the generality of the new criteria.
\end{abstract}

\section{Introduction}

The \emph{(disjunctive) chase} \cite{DBLP:journals/tods/BourhisMMP16}
is a sound and complete
bottom-up materialization procedure to reason with \emph{knowledge bases} (KBs) featuring
\emph{(disjunctive existential) rules}.
In some cases, we can apply the chase to determine if a conjunctive query or a fact is a consequence of a KB under standard first-order semantics.

\begin{example}\label{exp:main}
Consider the KB $\KB = \Tuple{\R, \D}$ where $\R$ is the rule set $\{(\ref{main1}\text{--}\ref{main4})\}$ and $\D$ is the database $\{\FP{Engine}(d)\}$.
\begin{align}
\FP{Engine}(x) &\to \big(\exists  v . \FP{IsIn}(x, v) \land \FP{Bike}(v)\big) \lor \FP{Spare}(x) \label{main1}\\
\FP{Bike}(x) &\to \exists w . \FP{Has}(x, w) \land \FP{Engine}(w) \label{main2}\\
\FP{IsIn}(x, y) &\to \FP{Has}(y, x) \label{main3} \\
\FP{Has}(x, y) &\to \FP{IsIn}(y, x) \label{main4}
\end{align}

We can apply the chase procedure to verify if the fact $\FP{Spare}(c)$ follows from \KB.
In this case, the \emph{restricted chase} produces a universal model set with two models for \KB; namely, $\{\FP{Engine}(d), \FP{Spare}(d)\}$
and $\{\FP{Engine}(d), \FP{IsIn}(d, \SF{}{v}(d)), \FP{Bike}(\SF{}{v}(d)), \FP{Has}(\SF{}{v}(d), d)\}$, where \SF{}{v}(d)) is a fresh term introduced to satisfy the existential quantifier in \eqref{main1}.
Since the second model does not contain $\FP{Spare}(d)$, this fact is not entailed by \KB.
\end{example}

Since boolean conjunctive query entailment is undecidable \cite{DBLP:conf/icalp/BeeriV81}, the chase may not always terminate.
Even worse, we cannot decide if the chase terminates on a particular KB or if a rule set \R is \emph{acyclic} \cite{AllInstanceTerminationUndecidable,AnatomyOfTheChase}; that is, if the chase terminates for every KB with \R.
We can still verify rule set termination in practice using \emph{acyclicity notions}; that is, sufficient conditions that imply termination
\cite{DBLP:journals/tcs/FaginKMP05,GeneralizedSchemaMappings,DBLP:conf/ijcai/KrotzschR11,MFA,RMFA,DBLP:conf/ecai/BagetGMR14,DBLP:journals/tplp/KarimiZY21}.
However, if an acyclicity notion is not able to classify a rule set \R as terminating, we never know if this notion is just not ``general enough'' or if the rule set is indeed non-terminating.

To address this issue, we study \emph{cyclicity notions}, which imply non-termination.
As a long-term motivation, these approaches can also help to fix potential modelling mistakes.
To the best of our knowledge, only one such notion has been proposed for the restricted chase variant, namely \emph{Restricted Model Faithful Cyclicity (\RMFC)}.
Alas, many non-terminating rule sets with disjunctions are not classified as such by this notion \cite{RMFA}.
Worse still, the correctness proof of \RMFC does not hold in its presented form (see Section~\ref{section:related-work}).
Recently, \citeauthor{DMFA} have also proposed \emph{Disjunctive Model Faithful Cyclicity (\DMFC)} for the skolem chase.
Note however that a cyclicity notion for the skolem chase is not directly a valid condition for restricted non-termination since the former variant terminates less often than the latter.

In this paper, our overarching goal is to improve our understanding of existing cyclicity notions such as \RMFC and \DMFC.
We reconsider the underlying ideas and dismantle them into an extensible framework.
We provide examples to explain how these notions work and clarify why certain technical details are necessary.
As more tangible contributions,
\FirstItem we come up with novel cyclicity notions named \emph{restricted prefix cyclicity (\RPC)} and \emph{deterministic \RPC (\DRPC)},
and \SecondItem we empirically evaluate the generality of these two.

To this aim, the key points of the sections are as
follows:\begin{paper}
  \footnote{More proof details are online \cite{kr2023tr}.}
\end{paper}
\begin{tr}
  \footnote{This report features additional proof details in the appendix.}
\end{tr}

\begin{enumerate}
  \item[S3.] \emph{Cyclicity sequences} that guarantee \emph{never-termination} by making use of \emph{g-unblockability}.
  \item[S4.] Checkable conditions that ensure g-unblockability.
  \item[S5.] \emph{Cyclicity prefixes} that allow cyclicity sequences.
  \item[S6.] The notion (D)RPC that guarantees a cyclicity prefix.
  \item[S7.] Detailed relations to \DMFC and \RMFC in particular.
  \item[S8.] Empirical evaluation of the generality of (D)RPC.
\end{enumerate}

\section{Preliminaries}\label{section:prelim}

We define \Preds, \Funs, \Constants, and \Vars to be mutually disjoint, countably infinite sets of predicates, function symbols, constants, and variables, respectively.
Every $s \in \Preds \cup \Funs$ is associated with an \emph{arity} $\Arity(s) \geq 1$.
The set $\Terms \supseteq \C \cup \Vars$ contains $f(t_1, \dots, t_n)$ for every $n \geq 1$, every $f \in \Funs$ with $\Arity(f) = n$, and every $t_1, \ldots, t_n \in \Terms$.
For some $\FormatEntitySet{X} \in \{\Preds, \Funs, \Constants, \Vars, \Terms\}$ and an expression $\Expression$, we write $\EI{\FormatEntitySet{X}}{\Expression}$ to denote the set of all elements of $\FormatEntitySet{X}$ that syntactically occur in $\Expression$.

A term $t \notin \Variables \cup \Constants$ is \emph{functional}.
For a term $t$; let $\Depth(t) = 1$ if $t$ is not functional, and $\Depth(t) = 1 + \Max(\Depth(s_1), \ldots, \Depth(s_n))$ if $t$ is of the form $f(s_1, \ldots, s_n)$.
We write lists $t_1, \ldots, t_n$ of terms as $\Vt$ and often treat these as sets.
A term $s$ is a \emph{subterm} of another term $t$ if $t = s$, or $t$ is of the form $f(\Vs)$ and $s$ is a subterm of some term in $\Vs$.
For a term $t$, let $\Subterms(t)$ be the set of all subterms of $t$.
A term is \emph{cyclic} if it has a subterm of the form $f(\Vs)$ with $f \in \EI{\Funs}{\Vs}$.

An \emph{atom} is an expression of the form $\FP{P}(\vec{t})$ with \FP{P} a predicate and $\vec{t}$ a list of terms such that $\Arity(\FP{P}) = \vert \Vt \vert$.
A \emph{fact} is a variable-free atom.
For a formula \Formula, we write $\Formula(\Vx)$ to denote that \Vx is the set of all free variables that occur in \Formula.

\begin{definition}
A \emph{(disjunctive existential) rule} is a constant- and function-free first-order formula of the form
\begin{align}
\forall \vec{w}, \vec{x}.[\Body(\vec{w}, \vec{x}) \to \bigvee\nolimits_{i=1}^{n} \exists \vec{y}_{i}. \Head_{i}(\vec{x}_{i}, \vec{y}_{i})] \label{rule}
\end{align}
where $n \geq 1$; $\vec{w}, \vec{x}, \vec{y}_{1}, \dots,$ and $\vec{y}_{n}$ are pairwise disjoint lists of variables; $\bigcup_{i=1}^{n} \vec{x}_{i} = \vec{x}$; and $\Body, \Head_{1}, \dots,$ and $\Head_{n}$ are non-empty conjunctions of atoms.
\end{definition}

We omit universal quantifiers when writing rules and often treat conjunctions as sets.
The \emph{frontier} of a rule \Rule such as \eqref{rule} is the variable set $\Frontier(\Rule) = \Vx$.
Moreover, let $\BF(\Rule) = \Body$, let $\HF_{i}(\Rule) = \Head_{i}$ for every $1 \leq i \leq n$, and let $\BranchFactor(\Rule) = n$.
The rule \Rule is \emph{deterministic} if $n = 1$, \emph{generating} if $\vec{y}_{i}$ is non-empty for some $1 \leq i \leq n$, and \emph{datalog} if it is deterministic and non-generating.

A \emph{(ground) substitution} is partial function from variables to ground terms; that is, to variable-free terms.
We write $[x_1 / t_1, \dots, x_n / t_n]$ to denote the substitution mapping $x_{1}, \dots, x_{n}$ to $t_{1}, \dots, t_{n}$, respectively.
For an expression $\Expression$ and a substitution \Subs, let $\Expression\Subs$ be the expression that results from $\Expression$ by uniformly replacing every syntactic occurrence of every variable $x$ by $\Subs(x)$ if the latter is defined.

For a rule \Rule such as \eqref{rule}, let $\Subs_\Rule$ be the substitution mapping $y$ to $\SF{\Rule}{i, y}(\Vx)$ for every $1 \leq i \leq n$ and every $y \in \Vy_i$ with \SF{\Rule}{i, y} a fresh function symbol unique for $\Tuple{\Rule, i, y}$.
If $y$ uniquely identifies the tuple $\Tuple{\Rule, i, y}$, we also write $\SF{}{y}(\Vx)$. (This is the case in all our examples.)
The \emph{skolemization} $\Skolemise(\Rule)$ of \Rule is the expression $\Body \to (\bigvee_{i=1}^{n} \Head_i)\Subs_\Rule$.
For every $1 \leq i \leq n$, let $\HF_i(\Skolemise(\Rule)) = \Head_i\Subs_\Rule$.

A \emph{trigger} \Trigger is a pair $\Tuple{\Rule, \Subs}$ with \Rule a rule and \Subs a substitution with domain $\EI{\Variables}{\BF(\Rule)}$.
A trigger is \emph{loaded} for a fact set \F if $\BF(\Rule)\Subs \subseteq \F$.
It is \emph{obsolete} for \F if there is a substitution \SubsAux that extends \Subs such that $\HF_{i}(\Rule)\SubsAux \subseteq \F$ for some $1 \leq i \leq \BranchFactor(\Rule)$.
Let $\Output_{i}(\Trigger) = \HF_{i}(\Skolemise(\Rule))\Subs$ for every $1 \leq i \leq \BranchFactor(\Rule)$, and $\Output(\Trigger) = \{\Output_{i}(\Trigger) \mid 1 \leq i \leq \BranchFactor(\Rule)\}$.

Consider a rule set \R.
An \emph{\R-term} is a term defined using the function symbols that occur in $\Skolemise(\R)$, some constants, and some variables.
A substitution is an \emph{\R-substitution} if its range is a set of \R-terms.
An \emph{\R-trigger} is a trigger with a rule from \R and an \R-substitution.

A fact set \F \emph{satisfies} a rule \Rule if all triggers with \Rule are not loaded or obsolete for \F.
A \emph{knowledge base (KB)} is a pair $\KB = \Tuple{\R, \D}$ of a rule set \R and a \emph{database} \D; that is, a function-free fact set.
The \emph{restricted chase} on input \KB exhaustively applies the outputs of triggers that are loaded and not obsolete in a tree with root \D branching on disjunctions.


\subsubsection{Restricted Chase}
We present a variant of the disjunctive chase \cite{DBLP:journals/tods/BourhisMMP16} where the application of rules is \emph{restricted}; that is, rules are only applicable if their heads are not obsolete with respect to previously derived facts.
Moreover, we impose an order of rule applications that prioritises the application of (triggers with) datalog rules.

\begin{definition}\label{def:chase-tree}
A \emph{chase tree} $\CT = \Tuple{V, E, \LF, \LT}$ for a KB $\Tuple{\R, \D}$ is a directed tree where $V$ is a set of vertices, $E$ is a set of edges, \LF is a function mapping vertices to fact sets, and \LT is a function mapping every non-root vertex to a trigger.
Moreover, the following hold:
\begin{enumerate}
\item For the root $r \in V$ of \CT, we have that $\LF(r) = \D$.
\item For every non-leaf vertex $v \in V$; there is an \R-trigger $\Trigger = \Tuple{\Rule, \Subs}$ that is loaded and not obsolete for $\LF(v)$ such that
\FirstItem the set $\LF(v)$ satisfies all datalog rules in \R if the rule in \Trigger is not datalog,
\SecondItem $v$ has exactly $n = \BranchFactor(\Rule)$ children $c_{1}, \dots, c_{n}$ (via $E$) with
$\LF(c_{i}) = \LF(v) \cup \Output_{i}(\Trigger)$
and
$\LT(c_{i}) = \Trigger$ for each $1 \leq i \leq n$.


\item For every leaf vertex $v \in V$, the set $\LF(v)$ satisfies all of the rules in \R.
For every \R-trigger \Trigger that is loaded for $\LF(v)$ for some $v \in V$, there is a $k \geq 0$ such that \Trigger is obsolete for $\LF(u)$ for each $u \in V$ reachable from $v$ by a path of length $k$.
That is, fairness.
\end{enumerate}
\end{definition}

We refer to $\LF(v)$ and $\LT(v)$ for some $v \in V$ as the \emph{fact-} and \emph{trigger-label} of $v$, respectively.
Informally, we say that a trigger (resp. a rule) is applied in a chase tree to signify that some vertex in the tree is labelled with this trigger (resp. a trigger with this rule).

\begin{example}
\label{example:chase-tree}
The KB from Example~\ref{exp:main} only admits one chase tree, which is depicted in
 Figure~\ref{fig:ctExpMain}.
 \begin{figure}
 \centering
 \begin{tikzpicture}
 \node(A) at (-2,0) {$\FP{Engine}(d)$};

 \node(B) at (2,0) {$\dots, \FP{Spare}(d) \Tuple{\eqref{main1}, [x / d]}$};

 \node(C) at (0,-0.8) {$\dots, \FP{IsIn}(d, \SF{}{v}(d)), \FP{Bike}(\SF{}{v}(d)) \Tuple{\eqref{main1}, [x / d]}$};

 \node(D) at (0,-1.8) {$\dots, \FP{Has}(\SF{}{v}(d), d) \Tuple{\eqref{main3}, [x / d, y / \SF{}{v}(d)]}$};

 \path [thick, ->](A) edge (B);
 \path [thick, ->](A) edge (C);
 \path [thick, ->](C) edge (D);
 \end{tikzpicture}
 \caption{Chase Tree for Example~\ref{exp:main}}
 \label{fig:ctExpMain}
 \end{figure}
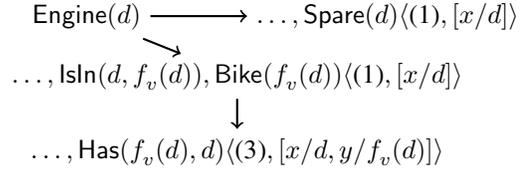
\end{example}

A \emph{branch} of a chase tree $\CT = \Tuple{V, E, \LF, \LT}$ is a maximal path in \CT starting at the root; that is, a vertex sequence $B = v_{1}, v_{2}, \dots$ where $v_1$ is the root, $\Tuple{v_{i}, v_{i+1}} \in E$ for every $1 \leq i < \vert B \vert$, and the last element of $B$ is a leaf if $B$ is finite.
The \emph{result} of $\CT$ is the set $\{\bigcup_{v \in B} \LF(v) \mid B \text{ a branch of } \CT \}$; chase results can be used to solve query entailment:

\begin{proposition}
\label{proposition:chase-soundness}
Consider the result $\mathfrak{R}$ of some (arbitrarily chosen) chase tree of a \KB.
Then, \KB entails a query $\Query = \exists \Vy . \Body$ iff $\F \models \Query$ for every $\F \in \mathfrak{R}$ iff for every $\F \in \mathfrak{R}$ there is a substitution \Subs with $\Body\Subs \subseteq \F$.
\end{proposition}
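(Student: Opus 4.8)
The plan is to invoke the standard fact that the result $\mathfrak{R}$ of any chase tree of $\KB$ is a \emph{universal model set} for $\KB$, and then read off both equivalences. Concretely, I would first isolate two auxiliary claims about the construction in Definition~\ref{def:chase-tree}; each element of $\mathfrak{R}$ is of the form $\bigcup_{v \in B} \LF(v)$ for a branch $B$ of the tree, and I view such a fact set $\F$ as a first-order structure whose domain is the set of ground terms occurring in $\F$ and in which every ground term denotes itself. \textbf{(Soundness.)} Every $\F \in \mathfrak{R}$ is a model of $\KB$: we have $\D \subseteq \F$, and $\F$ satisfies every rule of $\R$ — for a finite branch this is condition~3 applied to the leaf, and for an infinite branch it follows from the fairness clause of condition~3, since any trigger that ever becomes loaded along $B$ is eventually rendered obsolete. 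Here one also uses the routine observation that the trigger-based notion ``$\F$ satisfies $\Rule$'' from the preliminaries coincides with genuine first-order satisfaction of $\Rule$ in this Herbrand-style structure. \textbf{(Universality.)} For every model $M$ of $\KB$ there are $\F \in \mathfrak{R}$ and a homomorphism $h : \F \to M$; this is built top-down along a branch, starting from $\D \subseteq M$, by observing that whenever a vertex $v$ carrying a homomorphism $h_v : \LF(v) \to M$ is expanded by a trigger $\Tuple{\Rule, \Subs}$, the facts $M \models \Rule$ and $h_v(\BF(\Rule)\Subs) \subseteq M$ force $M$ to satisfy the $i$-th disjunct of the head for some $i$, so one may descend into the child $c_i$ and extend $h_v$ to $\LF(c_i)$ by sending the fresh functional terms of $\Output_{i}(\Tuple{\Rule, \Subs})$ to the corresponding witnesses in $M$; the union of the $h_v$ along the resulting (maximal) branch is the desired $h$.

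Granting these two claims, the first equivalence is immediate. For the ``$\Rightarrow$'' direction: if $\KB \models \Query$ then, by Soundness, every $\F \in \mathfrak{R}$ is a model of $\KB$ and hence $\F \models \Query$. For ``$\Leftarrow$'': assume $\F \models \Query$ for every $\F \in \mathfrak{R}$ and let $M$ be an arbitrary model of $\KB$; by Universality there are $\F \in \mathfrak{R}$ and a homomorphism $h : \F \to M$, and since $\Query = \exists \Vy . \Body$ is an existentially quantified conjunction of atoms it is preserved along homomorphisms, so $\F \models \Query$ yields $M \models \Query$. As $M$ was arbitrary, $\KB \models \Query$.

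The second equivalence is essentially the definition of satisfaction in the structure $\F$. Because the domain of $\F$ is exactly the set of ground terms occurring in $\F$, with each term denoting itself, a variable assignment witnessing $\F \models \exists \Vy . \Body$ is precisely a ground substitution $\Subs$ with domain $\EI{\Vars}{\Body}$, and $\F$ makes every conjunct of $\Body\Subs$ true exactly when $\Body\Subs \subseteq \F$. One should also note that $\Query$ is Boolean, i.e. every variable of $\Body$ occurs in $\Vy$, so the two quantifier prefixes match and no free variables are left over.

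The hard part will be making the Universality claim precise: one has to verify that a homomorphism into a fixed model $M$ can always be pushed across a single trigger application — in particular that suitable images for the freshly introduced functional terms exist in $M$ — and that the choices made at successive vertices are mutually compatible so that their union along the branch is a well-defined homomorphism; fairness is what guarantees that the branch constructed this way is genuinely maximal (so that $\bigcup_{v\in B}\LF(v)\in\mathfrak{R}$) and, on the Soundness side, that it is a model at all. The bookkeeping lemmas — equivalence of the trigger-based rule-satisfaction notion with first-order satisfaction, and homomorphism-preservation of conjunctive queries — are standard (cf.\ \cite{DBLP:journals/tods/BourhisMMP16}) but should be stated explicitly, since the entire argument rests on them.
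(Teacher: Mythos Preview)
The paper does not actually supply a proof of Proposition~\ref{proposition:chase-soundness}; it is stated without argument in the preliminaries as a known property of the disjunctive restricted chase (implicitly deferring to \cite{DBLP:journals/tods/BourhisMMP16}). Your sketch is the standard universal-model-set argument and is correct in outline, so there is nothing to compare against.

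One minor quibble: in your Universality paragraph you say that fairness is what guarantees the branch you construct is maximal. That is not quite right. Maximality of the branch is automatic from the tree structure in Definition~\ref{def:chase-tree}: at a non-leaf you always have a child to descend into (by condition~2), so you either reach a leaf or produce an infinite path, and in both cases the path is a branch. Fairness is only needed on the Soundness side, to argue that the union along an \emph{infinite} branch satisfies every rule (any trigger that becomes loaded is eventually obsolete). This does not affect the validity of your argument, but you should relocate the appeal to fairness accordingly.
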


Therefore, it is interesting to know if a rule set admits finite chase trees.
A rule set \R \emph{terminates} if every chase tree of every KB with \R is finite, it \emph{sometimes-terminates} if every KB with \R admits a finite chase tree, 
and it \emph{never-terminates} if some KB with \R has no finite chase trees.

We use skolem function names to backtrack the facts along which a term $t$ appears in the chase; that is, the \emph{birth facts} of $t$.
For a constant $c$, let $\BirthFacts{\R}{c} = \emptyset$; for a rule set \R and an \R-term $t$ of the form $\SF{\Rule}{i, v}(\Vs)$, let $\BirthFacts{\R}{t} = \Output_i(\Tuple{\Rule, \Subs}) \cup \bigcup_{s \in \Vs} \BirthFacts{\R}{s}$ where $\Subs$ is a substitution with $\Frontier(\Rule)\Subs = \Vs$.
For a trigger $\Tuple{\RuleAux, \SubsAux}$, let $\BirthFacts{\R}{\Tuple{\RuleAux, \SubsAux}} = \bigcup\nolimits_{x \in \Frontier(\RuleAux)}\BirthFacts{\R}{\SubsAux(x)}$.
The \emph{term-skeleton} $\Skeleton{\R}{\Tuple{\RuleAux, \SubsAux}}$ of $\Tuple{\RuleAux, \SubsAux}$ consists of $\EI{\Terms}{\BirthFacts{\R}{\Tuple{\RuleAux, \SubsAux}}}$ and
every constant $c$ with $c = \SubsAux(x)$ for an $x \in \Frontier(\RuleAux)$.

\begin{example}\label{exp:birthAndSkeleton}
  Let $\R = \{\eqref{main1}, \eqref{main2}\}$
  and $\Trigger = \Tuple{\eqref{main2}, [x / \SF{}{v}(d)]}$.
  We have $\BirthFacts{\R}{\Trigger} = \{ \FP{IsIn}(d, \SF{}{v}(d)), \FP{Bike}(\SF{}{v}(d)) \}$
  and $\Skeleton{\R}{\Trigger} = \{ d, \SF{}{v}(d) \}$. (Note again that $\SF{}{v} = \SF{\eqref{main1}}{1, v}$.)
\end{example}

\section{Cyclicity Sequences}

In this section, we introduce the notion of a \emph{cyclicity sequence} \TriggerSeq for a KB $\KB = \Tuple{\R, \D}$ (see Definition~\ref{definition:cyclicity-sequence}) and show that its existence implies that \KB only admits infinite chase trees (see Theorem~\ref{theorem:cyclicity-sequence}).
Intuitively, \TriggerSeq is an infinite sequence of \R-triggers that are applied in some (infinite) branch of every chase tree of \KB.
To identify these branches, we consider the following notion from \cite{DMFA}.

\begin{definition}
\label{definition:head-choice}
A \emph{head-choice} for a rule set \R is a function $\HeadChoice$ that maps every rule $\Rule \in \R$ to an element of $\{1, \ldots, \BranchFactor(\Rule)\}$.
For a rule $\Rule \in \R$ and a trigger \Trigger with \Rule, we write $\HF_{\HeadChoice}(\Rule)$ and $\Output_{\HeadChoice}(\Trigger)$ instead of $\HF_{\HeadChoice(\Rule)}(\Rule)$ and $\Output_{\HeadChoice(\Rule)}(\Trigger)$, respectively.
For a chase tree $\CT = \Tuple{V, E, \LF, \LT}$ of a KB with \R, we define $\Branch(\CT, \HeadChoice) = v_1, v_2, \ldots$ as the branch of \CT such that $\LF(v_{i+1}) = \LF(v_{i}) \cup \Output_\HeadChoice(\LT(v_{i+1}))$ for every $1 \leq i < \vert \Branch(\CT, \HeadChoice) \vert$;
note that every branch starts with the root.
\end{definition}

Cyclicity sequences must satisfy three requirements (see Definition~\ref{definition:cyclicity-sequence}); let's start with the first two:

\begin{definition}
Consider a KB $\KB = \Tuple{\R, \D}$, a head-choice \HeadChoice for \R, and a sequence $\TriggerSeq = \Trigger_1, \Trigger_2, \ldots$ of \R-triggers.
\begin{itemize}
\item Let $\F_0(\KB, \HeadChoice, \TriggerSeq) = \D$, and let $\F_{i+1}(\KB, \HeadChoice, \TriggerSeq) = \F_{i}(\KB, \HeadChoice, \TriggerSeq) \cup \Output_{\HeadChoice}(\Trigger_{i+1})$ for every $0 \leq i < \vert \TriggerSeq \vert$.
\item The sequence \TriggerSeq is \emph{loaded} for \KB and \HeadChoice if $\Trigger_{i+1}$ is loaded for $\F_{i}(\KB, \HeadChoice, \TriggerSeq)$ for every $0 \leq i < \vert \TriggerSeq \vert$.
\item The sequence \TriggerSeq is \emph{growing} for \KB and \HeadChoice if, for every $0 \leq i < \vert \TriggerSeq \vert$, there is some $j > i$ and a term that occurs in $\F_{j}(\KB, \HeadChoice, \TriggerSeq)$ but not in $\F_{i}(\KB, \HeadChoice, \TriggerSeq)$.
Note that \TriggerSeq is infinite if this requirement is satisfied.
\end{itemize}
\end{definition}

For some variants of the chase, existence of a loaded and growing sequence of \R-triggers for a KB and a head-choice may suffice to witness non-termination.
This is not the case for the restricted chase:

\begin{example}
\label{example:loaded-growing-not-enough}
Consider the KB $\KB = \Tuple{\R, \D}$ from Example~\ref{exp:main}, and the head-choice $\HeadChoice$ that maps every rule in $\R$ to $1$.
Moreover, consider the infinite sequence $\TriggerSeq = \Trigger_1, \Trigger_2, \ldots$ of triggers such that:
\begin{itemize}
\item Let $t_0 = d$, let  $t_i = \SF{}{v}(t_{i-1})$ for every odd $i \geq 1$, and let $t_i = \SF{}{w}(t_{i-1})$ for every even $i \geq 1$.
\item For every $i \geq 1$; let $\Trigger_i = \{\eqref{main1}, [x / t_{i-1}]\}$ if $i$ is odd, and $\Trigger_i = \{\eqref{main2}, [x / t_{i-1}]\}$ otherwise.
\end{itemize}
The (infinite) sequence $\TriggerSeq$ is loaded and growing; however, the KB $\KB$ terminates!
Note that this KB only admits one chase tree, which is finite (see Example~\ref{example:chase-tree}).
\end{example}

The issue in the previous example is that the second trigger of \TriggerSeq cannot be applied in any chase tree of $\KB$;
this is because we must apply datalog triggers with \eqref{main3} before we apply triggers with \eqref{main2}.
To address this issue, we introduce a third requirement for cyclicity sequences (see Definition~\ref{definition:cyclicity-sequence}):

\begin{definition}
\label{definition:unblockability}
Consider a KB $\KB = \Tuple{\R, \D}$, a head-choice \HeadChoice for \R, and a sequence of \R-triggers $\TriggerSeq = \Trigger_1, \Trigger_2, \ldots$
\begin{itemize}
\item An \R-trigger \Trigger is \emph{g-unblockable}\footnote{The ``g-'' prefix stands for ``general-''; we introduce more specific unblockability notions in the following section.} for \KB and \HeadChoice if, for every chase tree $\CT = \Tuple{V, E, \LF, \LT}$ and every $v$ in $\Branch(\CT, \HeadChoice)$ such that $\Trigger$ is loaded for $\LF(v)$, there is some $u$ in $\Branch(\CT, \HeadChoice)$ with $\Output_{\HeadChoice}(\Trigger) \subseteq \LF(u)$.
Intuitively, if \Trigger is loaded for a vertex in the \HeadChoice-branch of a chase tree \CT of \KB, then its output according to \HeadChoice eventually appears in this branch.
  \item If every trigger in \TriggerSeq is g-unblockable, then this sequence is \emph{g-unblockable}
        for \KB and \HeadChoice.
\end{itemize}
\end{definition}

The sequence \TriggerSeq in Example~\ref{example:loaded-growing-not-enough} is not g-unblockable because its second trigger does not satisfy this property.
However, \TriggerSeq is g-unblockable for a slightly different input KB:

\begin{example}
  Consider the rule set $\R = \{\eqref{main1}, \eqref{main2}\}$;
  and the database $\D$, the head-choice $\HeadChoice$, and the sequence $\TriggerSeq$
  from Example~\ref{example:loaded-growing-not-enough}.
Since $\R$ contains neither \eqref{main3} nor \eqref{main4}, the sequence $\TriggerSeq$ is g-unblockable for $\Tuple{\R, \D}$ and $\HeadChoice$.
Note that the KB $\Tuple{\R, \D}$ only admits one chase tree, which is infinite, and hence, \R is never-terminating.
\end{example}

We are ready to define cyclicity sequences:

\begin{definition}
\label{definition:cyclicity-sequence}
A sequence $\TriggerSeq = \Trigger_1, \Trigger_2, \ldots$ of \R-triggers is a \emph{cyclicity sequence} of a KB $\KB = \Tuple{\R, \D}$ and a head-choice \HeadChoice if \TriggerSeq is (infinite,) loaded, growing, and g-unblockable.
Note that \TriggerSeq is infinite if it is growing.
\end{definition}


\begin{theorem}
\label{theorem:cyclicity-sequence}
A rule set \R never-terminates if there is a cyclicity sequence for a KB with \R and some head-choice.
\end{theorem}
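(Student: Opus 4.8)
\emph{Proof plan.} I would show that the very KB $\KB = \Tuple{\R, \D}$ that witnesses the hypothesis admits only infinite chase trees; this gives immediately that $\R$ never-terminates. So fix a head-choice $\HeadChoice$ and a cyclicity sequence $\TriggerSeq = \Trigger_1, \Trigger_2, \ldots$ for $\KB$ and $\HeadChoice$, let $\CT = \Tuple{V, E, \LF, \LT}$ be an arbitrary chase tree of $\KB$, and put $B = \Branch(\CT, \HeadChoice) = v_1, v_2, \ldots$; the aim is to prove $B$ infinite, whence $\CT$ is infinite. Two elementary facts about $B$ do the work. First, fact-labels are monotone along $B$: by Definition~\ref{def:chase-tree} we have $\LF(v_{k+1}) = \LF(v_k) \cup \Output_{\HeadChoice}(\LT(v_{k+1}))$, so $\LF(v_k) \subseteq \LF(v_\ell)$ whenever $k \leq \ell$, and any two vertices of $B$ are comparable since they lie on a common path. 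Second, each $\Output_{\HeadChoice}(\LT(v_{k+1}))$ is a finite fact set.

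The heart of the argument is the claim, proved by induction on $i$, that for every $i \geq 0$ there is a vertex $u_i$ on $B$ with $\F_i(\KB, \HeadChoice, \TriggerSeq) \subseteq \LF(u_i)$. For $i = 0$ take $u_0 = v_1$, since $\F_0(\KB, \HeadChoice, \TriggerSeq) = \D = \LF(v_1)$. For the step, write $\Trigger_{i+1} = \Tuple{\Rule, \Subs}$; loadedness of $\TriggerSeq$ gives $\BF(\Rule)\Subs \subseteq \F_i(\KB, \HeadChoice, \TriggerSeq)$, which by the induction hypothesis is contained in $\LF(u_i)$, so $\Trigger_{i+1}$ is loaded for $\LF(u_i)$. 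Since $\Trigger_{i+1}$ is g-unblockable (Definition~\ref{definition:unblockability}) and $u_i$ lies on $\Branch(\CT, \HeadChoice)$, there is a vertex $u$ on $B$ with $\Output_{\HeadChoice}(\Trigger_{i+1}) \subseteq \LF(u)$. Letting $u_{i+1}$ be whichever of $u_i$ and $u$ is the descendant on $B$ (either one, if they coincide), monotonicity yields both $\F_i(\KB, \HeadChoice, \TriggerSeq) \subseteq \LF(u_{i+1})$ and $\Output_{\HeadChoice}(\Trigger_{i+1}) \subseteq \LF(u_{i+1})$, hence $\F_{i+1}(\KB, \HeadChoice, \TriggerSeq) \subseteq \LF(u_{i+1})$, completing the induction.

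To finish, the claim and monotonicity give $\bigcup_{i \geq 0}\F_i(\KB, \HeadChoice, \TriggerSeq) \subseteq \bigcup_{v \in B}\LF(v)$. Iterating the growing condition of $\TriggerSeq$ from index $0$ produces indices $0 = i_0 < i_1 < i_2 < \cdots$ and terms $t_1, t_2, \ldots$ with $t_k$ occurring in $\F_{i_k}(\KB, \HeadChoice, \TriggerSeq)$ but not in $\F_{i_{k-1}}(\KB, \HeadChoice, \TriggerSeq)$; since the sets $\F_j(\KB, \HeadChoice, \TriggerSeq)$ are nondecreasing, the $t_k$ are pairwise distinct and none occurs in $\F_0(\KB, \HeadChoice, \TriggerSeq) = \D$. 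Were $B = v_1, \ldots, v_m$ finite, monotonicity would give $\bigcup_{v \in B}\LF(v) = \LF(v_m) = \D \cup \bigcup_{k=2}^{m}\Output_{\HeadChoice}(\LT(v_k))$, a finite union of finite fact sets, which therefore contains only finitely many terms outside $\EI{\Terms}{\D}$; but the infinitely many distinct $t_k$ are all such terms and all occur in $\LF(v_m)$, a contradiction. Hence $B$, and so $\CT$, is infinite; since $\CT$ was an arbitrary chase tree of $\KB$, the rule set $\R$ never-terminates.

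The step that needs care is the induction: g-unblockability is quantified over all chase trees and all loaded vertices of the associated $\HeadChoice$-branch, so one must be precise that combining loadedness of $\TriggerSeq$ with the induction hypothesis genuinely makes $\Trigger_{i+1}$ loaded at the concrete vertex $u_i$ of the concrete tree $\CT$, and then that $u_i$ and the vertex $u$ it produces always admit a common descendant on $B$ carrying both $\F_i(\KB, \HeadChoice, \TriggerSeq)$ and $\Output_{\HeadChoice}(\Trigger_{i+1})$. The remaining ingredients — monotonicity of fact-labels along a branch and the counting argument against a finite branch — are routine.
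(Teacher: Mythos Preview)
Your proof is correct and follows essentially the same approach as the paper's: both establish by induction (using loadedness plus g-unblockability) that each $\F_i(\KB,\HeadChoice,\TriggerSeq)$ is contained in the fact-labels along $\Branch(\CT,\HeadChoice)$, and then invoke the growing condition to conclude this branch is infinite. The only cosmetic difference is that you explicitly track a witness vertex $u_i$ at each step and spell out the finiteness contradiction, whereas the paper works directly with the union $\bigcup_{v\in B}\LF(v)$ and leaves the extraction of a concrete vertex implicit.
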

\begin{proof}
Assume that there is a cyclicity sequence $\TriggerSeq = \Trigger_1, \Trigger_2, \ldots$ of some KB such as $\KB = \Tuple{\R, \D}$ and some head-choice \HeadChoice, and consider some chase tree $\CT = \Tuple{V, E, \LF, \LT}$ of $\Tuple{\R, D}$ and the sequence $\Branch(\CT, \HeadChoice) = v_1, v_2, \ldots$
To show the theorem, we prove that the fact set $\F(\CT, \HeadChoice) = \bigcup_{i \geq 0} \LF(v_i)$ is infinite, which implies that both $\Branch(\CT, \HeadChoice)$ and \CT are infinite.
This holds by infinity of $\bigcup_{i \geq 0} \F_i(\KB, \HeadChoice, \TriggerSeq)$ (since \TriggerSeq is growing) and $\F_i(\KB, \HeadChoice, \TriggerSeq) \subseteq \F(\CT, \HeadChoice)$ for every $i \geq 0$ (which follows by induction since \TriggerSeq is loaded and g-unblockable.)
\end{proof}

\section{Infinite Unblockable Sequences}
\label{section:unblockability}

Theorem~\ref{theorem:cyclicity-sequence} provides a blueprint to show never-termination of a rule set \R: One ``simply'' has to show that \R admits a cyclicity sequence (see Section~\ref{section:cyclicity-notions}).
To show that such sequences exist, we have developed techniques to find infinite sequences of triggers that are g-unblockable.
This is a rather challenging task; note that we cannot even decide if a single trigger is g-unblockable (by reduction from fact entailment \cite{DBLP:conf/icalp/BeeriV81}):
\begin{theorem}
\label{theorem:g-unblockability-undecidable}
The problem of deciding if a trigger is g-unblockable for a KB and a head-choice is undecidable.
\end{theorem}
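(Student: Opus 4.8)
The plan is to reduce from the undecidable fact entailment problem for (disjunctive existential) rules, invoking the classical result of \citeauthor{DBLP:conf/icalp/BeeriV81}. Given a KB $\KB = \Tuple{\R, \D}$ and a fact $\Fact = \FP{P}(\Vc)$ (with $\FP{P}$ a fresh predicate of suitable arity if need be), I want to construct in polynomial time a KB $\KB' = \Tuple{\R', \D'}$, a head-choice $\HeadChoice'$ for $\R'$, and a trigger $\Trigger'$ such that $\Trigger'$ is g-unblockable for $\KB'$ and $\HeadChoice'$ if and only if $\KB \models \Fact$. Since the latter is undecidable, so is the former.

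First I would design $\Trigger'$ so that its loadedness is governed by $\Fact$: introduce a fresh nullary (or unary) predicate $\FP{Go}$ and a fresh generating rule, say $\Rule^* : \FP{Go} \to \exists z.\, \FP{Q}(z)$ for fresh $\FP{Q}$, and let $\Trigger' = \Tuple{\Rule^*, \Subs}$ be the trivially-loadable trigger whose body is $\FP{Go}$. Then I add a ``guard'' rule $\Fact \to \FP{Go}$ (more precisely $\FP{P}(\vec x) \to \FP{Go}$ after generalizing the constants $\Vc$ of $\Fact$ to frontier variables, or if $\Fact$ is ground simply $\FP{P}(\Vc) \to \FP{Go}$), and set $\D' = \D$ and $\R' = \R \cup \{\Fact \to \FP{Go},\ \Rule^*\}$, with $\HeadChoice'$ arbitrary (every rule except $\Rule^*$ can be made deterministic by a standard trick, or we just pick head $1$ everywhere — none of the added rules are disjunctive). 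The point is: in any chase tree of $\KB'$, along the $\HeadChoice'$-branch the trigger $\Trigger'$ becomes loaded iff $\FP{Go}$ is derived iff $\Fact$ is derived, and because the chase is fair and $\FP{Q}$ is fresh, once $\FP{Go}$ is present the output $\{\FP{Q}(z_0)\}$ of $\Trigger'$ is necessarily produced (no obsoleteness can block it, since $\FP{Q}$ appears nowhere else). Conversely, if $\Fact$ is never derived, $\FP{Go}$ never appears, so $\Trigger'$ is never loaded in that branch, hence vacuously ``satisfied'' by g-unblockability's quantifier — but that is the wrong direction, so I need to be careful here.

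The subtlety — and the step I expect to be the main obstacle — is that g-unblockability is a \emph{universally} quantified statement over chase trees and over vertices where the trigger happens to be loaded, so a trigger that is never loaded is vacuously g-unblockable. To make g-unblockability track $\KB \models \Fact$ faithfully I must instead ensure that $\Trigger'$ is \emph{always} loaded (so the vacuous case cannot rescue it) but its output is blockable \emph{unless} $\Fact$ holds. One clean way: let $\Trigger' = \Tuple{\Rule^*, \emptyset}$ with $\Rule^* : {} \to \exists z.\, \FP{Q}(z) \land \FP{R}(z)$ having empty body (or body a predicate always present in $\D'$), so $\Trigger'$ is loaded at the root of every chase tree; then add a rule $\FP{P}(\vec x) \to \FP{Q}(c_0) \land \FP{R}(c_0)$ for a fresh constant $c_0 \in \D'$-terms — wait, $\FP{Q}(c_0)$ with a constant would make the head obsolete via that constant only once $\Fact$ is derived. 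Thus: $\Output(\Trigger') = \{\FP{Q}(z_0), \FP{R}(z_0)\}$ for a null $z_0$; it is obsolete for a fact set $\F$ iff $\F$ contains $\FP{Q}(t) \land \FP{R}(t)$ for some term $t$; and the only way such a pair arises (besides from $\Trigger'$ itself) is via the rule $\FP{P}(\vec x) \to \FP{Q}(c_0) \land \FP{R}(c_0)$, which fires iff $\Fact$ is eventually derived. If $\KB \models \Fact$, then in every branch $\FP{P}(\Vc)$ (hence $\FP{Q}(c_0) \land \FP{R}(c_0)$) eventually appears, so $\Trigger'$ becomes obsolete and the chase never needs to apply it; but g-unblockability demands that $\Output(\Trigger')$ \emph{appears}, which it need not — so this still gives the wrong polarity.

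Resolving the polarity is exactly the crux, and the fix is to reverse the roles: arrange that $\Trigger'$ is loaded in every branch, and that its output \emph{is always eventually produced} (hence $\Trigger'$ g-unblockable) \emph{precisely when} $\KB \not\models \Fact$ — i.e. reduce from non-entailment, which is also undecidable (equivalently, reduce from the complement). Concretely: make $\FP{P}(\vec x)$ the \emph{blocker}. Keep $\Rule^* : {}\to \exists z.\FP{Q}(z)$ with empty body so $\Trigger'$ is always loaded; let $\FP{Q}$ be fresh so nothing else ever produces a $\FP{Q}$-atom; and additionally add $\FP{P}(\vec x) \to \bot$-style machinery — but there is no $\bot$; instead use a fresh rule $\FP{P}(\vec x) \to \FP{Stop}$ and modify $\Rule^*$ to $\FP{Stop}\text{-free guard}$: actually the honest route is $\Rule^* : \FP{Run} \to \exists z.\FP{Q}(z)$, $\D'\ni\FP{Run}$, and a datalog rule $\FP{P}(\vec x) \to \FP{Block}(c_0) \land \FP{Q}(c_0)$... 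I will make $\FP{Q}(c_0)$ pre-empt the existential output of $\Rule^*$, so if $\Fact$ is derived the output of $\Trigger'$ need never appear (not g-unblockable, using a chase tree that applies the datalog rule first — which is forced anyway by the datalog-priority clause of Definition~\ref{def:chase-tree}), whereas if $\Fact$ is never derived then $\FP{Q}$-atoms come only from $\Rule^*$ and fairness forces $\Output(\Trigger')$ into the branch ($\Trigger'$ g-unblockable). This reduces the complement of fact entailment — undecidable — to g-unblockability, completing the proof. The bookkeeping to verify (a) $\Trigger'$ is indeed a legitimate $\R'$-trigger loaded along the $\HeadChoice'$-branch of every chase tree, (b) the datalog-priority and fairness conditions of Definition~\ref{def:chase-tree} do what I claim, and (c) freshness of $\FP{Q}$ rules out all other sources of obsoleteness, is routine but must be done carefully; that verification, together with pinning down the polarity as above, is the only real work.
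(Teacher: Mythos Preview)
Your reduction idea is on the right track and the final construction can be made to work, but as written it has a real gap and misses a much simpler argument.

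The gap: your blocking rule $\FP{P}(\vec x)\to\FP{Q}(c_0)$ fires on \emph{any} $\FP{P}$-atom, not on $\FP{P}(\Vc)$ specifically. If $\KB$ derives some $\FP{P}(\Vt)$ with $\Vt\neq\Vc$ but not $\FP{P}(\Vc)$, your trigger gets blocked even though $\KB\not\models\Fact$, and the equivalence fails. (The aside ``with $\FP{P}$ a fresh predicate'' does not help as stated: a genuinely fresh $\FP{P}$ is never derived at all.) This is repairable by encoding $\Vc$ via fresh unary guard predicates in the body, but you need to say so. Two smaller points: rules in this paper are constant-free, so $c_0$ in a head is illegal (again patchable with a guard predicate); and datalog priority does \emph{not} force your blocker before $\Rule^*$ --- the blocker only becomes loaded after $\FP{P}(\Vc)$ appears, which may itself require existential $\R$-rules, so you must separately exhibit a chase tree that postpones $\Rule^*$ long enough. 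That argument is not hard, but it is not the one-liner you suggest.

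The paper's proof sidesteps all of this by embracing precisely the vacuous case you worked to eliminate. It adds only the tautological rule $\Rule=\FP{P}(\vec x)\to\exists\vec y.\,\FP{P}(\vec y)$ and takes $\Trigger=\Tuple{\Rule,[\vec x/\Vc]}$. This trigger is loaded iff $\FP{P}(\Vc)$ is present, and is \emph{obsolete whenever it is loaded}, since the body atom itself witnesses the existential head. Hence if $\KB\models\FP{P}(\Vc)$ the trigger is eventually loaded in every chase tree but its output (a $\FP{P}$-fact over a fresh skolem null) can never appear --- not g-unblockable; if $\KB\not\models\FP{P}(\Vc)$ the trigger is never loaded --- vacuously g-unblockable. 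One extra rule, no constants in rules, no blocking gadget, no chase-scheduling argument.
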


In this section, we first discuss ways to detect if a trigger is g-unblockable in some cases.
Then, we devise strategies to show that some infinite sequences of triggers are g-unblockable, i.e. that unblockability ``propagates''.

\subsubsection{Detecting Unblockability}

To detect if a trigger \Trigger is g-unblockable,
we make use of chase over-approximations before the application of \Trigger:

\begin{definition}\label{def:semantic-over-approximation}
Consider a rule set \R, a head-choice \HeadChoice, and some \R-trigger $\Trigger = \Tuple{\Rule, \Subs}$.
A fact set \F is an \emph{over-approximation} of \R and \HeadChoice before \Trigger
  if there is a function $h$ over the set of terms such that \FirstItem $h(\Subs(x)) = \Subs(x)$ for each $x \in \Frontier(\Rule)$ and, \SecondItem for every $u \in \Branch(\CT, \HeadChoice)$ in every chase tree $\CT = \Tuple{V, E, \LF, \LT}$ of every KB \Tuple{\R, \D} with $\Output_\HeadChoice(\Trigger) \nsubseteq \LF(u)$, we have that $h(\LF(u)) \subseteq \F$.
\end{definition}

Intuitively, a chase over-approximation such as \F above for \R and \HeadChoice before \Trigger is some sort of ``upper-bound'' of all of the facts that can possibly occur in the label of a vertex in the \HeadChoice-branch if this label does not include the output of \Trigger.
If \Trigger is not obsolete for \F, its output eventually appears in the \HeadChoice-branch of the chase:

\begin{lemma}
\label{lemma:over-approximation-hc}
If \Trigger is not obsolete for some over-approximation of a rule set \R and a head-choice \HeadChoice before \Trigger, then \Trigger is g-unblockable for \HeadChoice and every KB with \R.
\end{lemma}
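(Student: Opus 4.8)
The plan is to fix a chase tree $\CT = \Tuple{V, E, \LF, \LT}$ of an arbitrary KB $\Tuple{\R, \D}$ and a vertex $v$ in $\Branch(\CT, \HeadChoice)$ for which the given \R-trigger $\Trigger = \Tuple{\Rule, \Subs}$ is loaded, and to exhibit a vertex $u$ in $\Branch(\CT, \HeadChoice)$ with $\Output_\HeadChoice(\Trigger) \subseteq \LF(u)$; by Definition~\ref{definition:unblockability} this establishes g-unblockability. The key observation is that $\Trigger$ cannot stay non-obsolete forever along the branch: let $\F$ be an over-approximation of \R and \HeadChoice before $\Trigger$, with witnessing function $h$ as in Definition~\ref{def:semantic-over-approximation}. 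Suppose for contradiction that no such $u$ exists, i.e. $\Output_\HeadChoice(\Trigger) \nsubseteq \LF(w)$ for every $w$ in $\Branch(\CT, \HeadChoice)$. Then, in particular, for every such $w$ we have $h(\LF(w)) \subseteq \F$ by clause \SecondItem of the over-approximation definition.

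Next I would use fairness (clause 3 of Definition~\ref{def:chase-tree}) together with loadedness of $\Trigger$ at $v$: since $\Trigger$ is loaded for $\LF(v)$ and $v \in V$, there is a $k \geq 0$ such that $\Trigger$ is obsolete for $\LF(w)$ for every vertex $w$ reachable from $v$ by a path of length $k$; one such $w$ lies on $\Branch(\CT, \HeadChoice)$ (the branch is infinite or, if finite, its leaf already satisfies all rules, so obsoleteness still holds — either way pick the appropriate branch vertex $w$ at or beyond distance $k$, using that $\LF$ is monotone along the branch so obsoleteness persists). Obsoleteness means there is a substitution $\SubsAux$ extending $\Subs$ with $\HF_i(\Rule)\SubsAux \subseteq \LF(w)$ for some $1 \leq i \leq \BranchFactor(\Rule)$. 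Applying $h$ and using clause \FirstItem ($h$ fixes the frontier image $\Subs(x)$ for $x \in \Frontier(\Rule)$), the facts $h(\HF_i(\Rule)\SubsAux) \subseteq h(\LF(w)) \subseteq \F$ witness that $\Trigger$ is obsolete for $\F$ as well — here one has to check that the composed substitution $h \circ \SubsAux$ still extends $\Subs$, which holds precisely because $h$ is the identity on the frontier terms $\Subs(\Frontier(\Rule))$. This contradicts the hypothesis that $\Trigger$ is not obsolete for $\F$.

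Hence some branch vertex $u$ with $\Output_\HeadChoice(\Trigger) \subseteq \LF(u)$ must exist, which is exactly g-unblockability of $\Trigger$ for \HeadChoice and every KB with \R. The main obstacle I anticipate is the bookkeeping around substitutions and the image under $h$: one must argue carefully that obsoleteness of $\Trigger$ for $\LF(w)$ really does transfer to obsoleteness for $\F$, which requires that $h$ respects the frontier assignment of $\Trigger$ (clause \FirstItem) so that the obsolescence-witnessing extension of $\Subs$ is preserved — a point where the definition of over-approximation is tailored exactly to make this step go through. A secondary subtlety is handling a finite $\Branch(\CT, \HeadChoice)$: there the leaf satisfies all rules of \R, so $\Trigger$ (being loaded somewhere in $V$) is obsolete for the leaf's label by the leaf condition, and again we reach the same conclusion.
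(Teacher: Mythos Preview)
Your proposal is correct and follows essentially the same approach as the paper: both argue by contradiction/contrapositive, use fairness (Definition~\ref{def:chase-tree}) to find a branch vertex $w$ where $\Trigger$ is obsolete while $\Output_\HeadChoice(\Trigger) \nsubseteq \LF(w)$, then push obsoleteness through $h$ into $\F$ using that $h$ fixes the frontier image. Your treatment is slightly more explicit about the finite-branch case and the substitution bookkeeping for $h \circ \SubsAux$, but the argument is the same.
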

\begin{proof}
To show the contrapositive, we assume that $\Trigger = \Tuple{\Rule, \Subs}$ is not g-unblockable for some $\Tuple{\R, \D}$ and \HeadChoice.
Then, there exists a chase tree $\CT = \Tuple{V, E, \LF, \LT}$ of $\Tuple{\R, \D}$ and \HeadChoice such that \FirstItem \Trigger is loaded for some $v \in \Branch(\CT, \HeadChoice)$ and \SecondItem $\Output_{\HeadChoice}(\Trigger) \nsubseteq \LF(u)$ for every $u \in \Branch(\CT, \HeadChoice)$.
Moreover, \Trigger is obsolete for $\LF(w)$ for some $w \in \Branch(\CT, \HeadChoice)$ by Definition~\ref{def:chase-tree}.
By Definition~\ref{def:semantic-over-approximation}, we find $h(\LF(w)) \subseteq \F$ for every over-approximation \F of \R and \HeadChoice before \Trigger with term mapping $h$.
Then, $\Tuple{\Rule, h \circ \Subs}$ is obsolete for \F; hence so is $\Trigger$ since $h \circ \Subs$ and $\Subs$ agree on 
all frontier variables of $\Rule$.
\end{proof}



Lemma~\ref{lemma:over-approximation-hc}
provides a strategy to detect g-unblockability for a given trigger \Trigger: Compute some over-approximation \F and then check if \Trigger is obsolete for \F.
We consider two alternative ways of computing these over-approximations:
\begin{definition}
\label{definition:over-approximation}
For a trigger \Trigger, let $h_\Trigger^\star$ be the function over the set of terms that maps every $t \in \Skeleton{\R}{\Trigger}$ to itself and every other term to the special constant $\star$.
Moreover, let $h_\Trigger^\textit{uc}$ be another such function that maps every functional term $f(\vec{t}) \notin \Skeleton{\R}{\Trigger}$ to a fresh constant $c_f$, every term in \Skeleton{\R}{\Trigger} and every constant of the form $c_{f}$ to itself, and every other term to $\star$.


For a rule set \R, a head-choice \HeadChoice, some \R-trigger $\Trigger = \Tuple{\Rule, \Subs}$,
and some $h \in \{h_\Trigger^\star, h_\Trigger^\textit{uc}\}$; let $\mathcal{O}(\R, \HeadChoice, \Trigger, h)$ be the minimal fact set that
\begin{enumerate}
\item contains every fact that can be defined using a predicate occurring in \R and constants in $\Skeleton{\R}{\Trigger} \cup \{ \star \}$, \label{condition:crit-inst}
\item includes $\BirthFacts{\R}{\Trigger}$, and \label{condition:birth-facts}
  \item includes $h(\Output_\HeadChoice(\Trigger'))$ for every \R-trigger $\Trigger'$ such that $\Trigger'$ is loaded for $\mathcal{O}(\R, \HeadChoice, \Trigger, h)$ and $\Output_\HeadChoice(\Trigger') \neq \Output_\HeadChoice(\Trigger)$.
\end{enumerate}
Moreover, we define $\mathcal{O}(\R, \Trigger, h)$ as the minimal fact set that satisfies \eqref{condition:crit-inst}
and \eqref{condition:birth-facts} above, and includes $h(\bigcup \Output(\Trigger'))$
for every \R-trigger $\Trigger' = \Tuple{\RuleAux, \SubsAux}$ such that $\Trigger'$ is loaded for $\mathcal{O}(\R, \Trigger, h)$
and if $\RuleAux = \Rule$, then $\Output_{i}(\Trigger') \neq \Output_{i}(\Trigger)$ for some $1 \leq i \leq \BranchFactor(\Rule)$.
\end{definition}

Intuitively, $\mathcal{O}(\R, \Trigger, h)$ views outputs as if disjunctions were replaced by conjunctions in rules.

\begin{example}
For the rule set $\R = \{\eqref{main1}, \eqref{main2}\}$, the head-choice $\HeadChoice_1 : \R \to \{1\}$, some constant $d$, and the trigger $\Trigger = \Tuple{\eqref{main2}, [x / \SF{}{v}(d)]}$; the set $\mathcal{O}(\R, \HeadChoice_{1}, \Trigger, h^\textit{uc}_\Trigger)$ equals
\begin{align*}
&\{ \FP{Engine}(s), \FP{Bike}(s), \FP{Spare}(s), \FP{IsIn}(s, t) \mid s, t \in \{\star, d\}\} \cup~ \\
&\BirthFacts{\R}{\Trigger} \cup \{ \FP{IsIn}(\star, c_{v}), \FP{Bike}(c_{v}), \FP{Has}(\star, c_{w}), \\
&\FP{Has}(d, c_{w}), \FP{Engine}(c_{w}), \FP{IsIn}(c_{w}, c_{v}), \FP{Has}(c_{v}, c_{w}) \}.
\end{align*}
In the above, we write $c_v$ and $c_w$ to refer to the fresh constants unique for $\SF{}{v}$ and $\SF{}{w}$, respectively.
Note that $\mathcal{O}(\R, \HeadChoice_{1}, \Trigger, h^\textit{uc}_\Trigger)$ does not contain $\FP{IsIn}(d, c_{v})$ since $h^{\textit{uc}}_{\Trigger}$ maps $\SF{}{v}(d)$ to itself,
or $\FP{Has}(\SF{}{v}(d), c_{w})$ since this fact is in the output of a trigger excluded by Item~3 in Definition~\ref{definition:over-approximation}.
The set $\mathcal{O}(\R, \Trigger, h^\textit{uc}_\Trigger)$ includes $\mathcal{O}(\R, \HeadChoice_{1}, \Trigger, h^\textit{uc}_\Trigger)$ and additionally contains $\FP{Spare}(c_{w})$.
If we replace all occurrences of $c_v$ and $c_w$ in $\mathcal{O}(\R, \HeadChoice_{1}, \Trigger, h^\textit{uc}_\Trigger)$ (resp. $\mathcal{O}(\R, \Trigger, h^\textit{uc}_\Trigger))$ with $\star$, we obtain $\mathcal{O}(\R, \HeadChoice_{1}, \Trigger, h^\star_\Trigger)$ (resp. $\mathcal{O}(\R, \Trigger, h^\star_\Trigger)$).
\end{example}



\begin{lemma}
\label{lemma:star-uc-over-approximations}
For a rule set \R, a head-choice \HeadChoice, an \R-trigger $\Trigger = \Tuple{\Rule, \Subs}$, and some $h \in \{h_\Trigger^\star, h_\Trigger^\textit{uc}\}$;
$\mathcal{O}(\R, \HeadChoice, \Trigger, h)$ and $\mathcal{O}(\R, \Trigger, h)$ are over-approximations of \R and \HeadChoice before \Trigger.
\end{lemma}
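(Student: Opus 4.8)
The plan is to check, for each of $\mathcal{O}(\R, \HeadChoice, \Trigger, h)$ and $\mathcal{O}(\R, \Trigger, h)$, the two requirements of Definition~\ref{def:semantic-over-approximation}, in both cases using $h$ itself as the witnessing term mapping. I spell out the argument for $\mathcal{O} := \mathcal{O}(\R, \HeadChoice, \Trigger, h)$; the case of $\mathcal{O}(\R, \Trigger, h)$ is analogous and is noted at the end. Condition~\FirstItem is immediate: every frontier term $\Subs(x)$ with $x \in \Frontier(\Rule)$ --- and, since $\EI{\Terms}{\cdot}$ also picks up subterms, every subterm of it --- occurs in $\BirthFacts{\R}{\Trigger}$ or is a constant chosen by $\Subs$, hence belongs to $\Skeleton{\R}{\Trigger}$, on which both $h_\Trigger^\star$ and $h_\Trigger^\textit{uc}$ act as the identity.

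For condition~\SecondItem, fix a KB $\Tuple{\R, \D}$ (with $\D$ over the predicates occurring in $\R$, as customary), a chase tree $\CT = \Tuple{V, E, \LF, \LT}$, and let $\Branch(\CT, \HeadChoice) = v_1, v_2, \ldots$ Since fact sets grow monotonically along a branch, the vertices $v_i$ with $\Output_\HeadChoice(\Trigger) \nsubseteq \LF(v_i)$ form an initial segment of this branch, and I prove $h(\LF(v_i)) \subseteq \mathcal{O}$ for all of them by induction on $i$. The base case $\LF(v_1) = \D$ is covered by item~\ref{condition:crit-inst} of Definition~\ref{definition:over-approximation}, since $\D$ is function-free and hence $h$ maps each of its facts to a fact over a predicate of $\R$ whose arguments lie in $\Skeleton{\R}{\Trigger} \cup \{\star\}$.

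For the inductive step, put $\Trigger' = \Tuple{\Rule', \Subs'} = \LT(v_{i+1})$; by Definition~\ref{def:chase-tree} this is an $\R$-trigger loaded for $\LF(v_i)$ with $\LF(v_{i+1}) = \LF(v_i) \cup \Output_\HeadChoice(\Trigger')$. As rule bodies are function-free, $h$ commutes with body instantiation, so $\BF(\Rule')(h \circ \Subs') = h(\BF(\Rule')\Subs') \subseteq h(\LF(v_i)) \subseteq \mathcal{O}$ by the induction hypothesis, whence the $\R$-trigger $\Trigger'_h := \Tuple{\Rule', h \circ \Subs'}$ is loaded for $\mathcal{O}$. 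The technical heart is the sub-claim that $h(\Output_j(\Trigger')) = h(\Output_j(\Trigger'_h))$ for every disjunct $j$ of $\Rule'$: a head term of $\Rule'$ is either a frontier variable, where idempotence of $h$ settles the matter, or a Skolem term $\SF{\Rule'}{j,z}(\Frontier(\Rule'))$, and for such a term one distinguishes whether its instantiation lies in $\Skeleton{\R}{\Trigger}$ --- if it does, subterm-closedness of $\Skeleton{\R}{\Trigger}$ makes all its arguments already $h$-fixed, so nothing changes, and if it does not, both sides collapse to the same value ($\star$ under $h_\Trigger^\star$, the constant $c_{\SF{\Rule'}{j,z}}$ under $h_\Trigger^\textit{uc}$), where one uses that no term occurring in $\BirthFacts{\R}{\Trigger}$ or in the range of $\Subs$, hence no term of $\Skeleton{\R}{\Trigger}$, contains $\star$ or a constant of the form $c_f$. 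Granting the sub-claim, as long as $\Output_\HeadChoice(\Trigger'_h) \neq \Output_\HeadChoice(\Trigger)$ the closure item of Definition~\ref{definition:over-approximation} applies to $\Trigger'_h$ and yields $h(\Output_\HeadChoice(\Trigger')) = h(\Output_\HeadChoice(\Trigger'_h)) \subseteq \mathcal{O}$, which together with the induction hypothesis gives $h(\LF(v_{i+1})) \subseteq \mathcal{O}$.

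The main obstacle is the remaining case, in which $\Output_\HeadChoice(\Trigger'_h) = \Output_\HeadChoice(\Trigger)$, so that the closure item does not cover $\Trigger'_h$. I plan to dispose of it by distinguishing whether $\Rule$ is generating in its head $\HeadChoice(\Rule)$. If it is not, then $\Output_\HeadChoice(\Trigger) = \HF_{\HeadChoice(\Rule)}(\Rule)\Subs$ has all its arguments among the frontier terms of $\Trigger$, hence in $\Skeleton{\R}{\Trigger}$, so $h(\Output_\HeadChoice(\Trigger')) = h(\Output_\HeadChoice(\Trigger)) = \Output_\HeadChoice(\Trigger)$ is again covered by item~\ref{condition:crit-inst}. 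If $\Rule$ is generating in head $\HeadChoice(\Rule)$, then $\Output_\HeadChoice(\Trigger)$ contains a fact carrying a term $t_0 = \SF{\Rule}{\HeadChoice(\Rule),y}(\Subs(\Frontier(\Rule)))$; tracking the possible ways $t_0$ can occur in $\Output_\HeadChoice(\Trigger'_h)$, and ultimately in $\LF(v_i)$, the uniqueness of the Skolem symbol $\SF{\Rule}{\HeadChoice(\Rule),y}$ forces $t_0$ to have entered the branch only via the application of some $\R$-trigger with rule $\Rule$ and frontier image $\Subs|_{\Frontier(\Rule)}$ and the subsequent step down disjunct $\HeadChoice(\Rule)$ --- which would have placed $\Output_\HeadChoice(\Trigger)$ into $\LF(v_{i+1})$, contradicting the standing assumption $\Output_\HeadChoice(\Trigger) \nsubseteq \LF(v_{i+1})$. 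Hence this case cannot occur, the induction closes, and the proof is complete. For $\mathcal{O}(\R, \Trigger, h)$ the same argument goes through after replacing every ``$\Output_\HeadChoice(\cdot)$'' with ``$\bigcup \Output(\cdot)$'' and using the corresponding exclusion clause (``$\Rule' = \Rule$ and $\Subs'$, $\Subs$ agree on $\Frontier(\Rule)$'') in place of ``$\Output_\HeadChoice(\Trigger') = \Output_\HeadChoice(\Trigger)$''.
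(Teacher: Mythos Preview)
Your argument follows the same inductive route as the paper's. There is one localized error: in your non-generating sub-case you appeal to item~\ref{condition:crit-inst} of Definition~\ref{definition:over-approximation} to place $\Output_\HeadChoice(\Trigger)$ inside $\mathcal{O}$, but that item only covers facts built from \emph{constants} in $\Skeleton{\R}{\Trigger} \cup \{\star\}$, so it fails whenever some frontier value $\Subs(x)$ is functional. The repair is to observe that this sub-case is actually vacuous. When $\HF_{\HeadChoice(\Rule)}(\Rule)$ has no existentials, every term of $\Output_\HeadChoice(\Trigger)$ lies in $\Skeleton{\R}{\Trigger}$ and is an \R-term; since $h$ only outputs an \R-term by acting as identity on the skeleton (its other values are $\star$ or the fresh $c_f$'s), the assumed equality $\Output_\HeadChoice(\Trigger'_h) = \Output_\HeadChoice(\Trigger)$ forces $h(\Subs'(x')) = \Subs'(x')$ for every variable $x'$ appearing in the relevant head of $\Rule'$ --- and, if that head carries an existential, for every frontier variable of $\Rule'$ (via subterm-closedness of the skeleton applied to the Skolem term). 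Either way $\Output_\HeadChoice(\Trigger') = \Output_\HeadChoice(\Trigger'_h) = \Output_\HeadChoice(\Trigger) \subseteq \LF(v_{i+1})$, contradicting the standing assumption. So no separate non-generating case is needed.

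Two smaller remarks. In the generating case, ``ultimately in $\LF(v_i)$'' is misleading: $t_0$ need not occur in $\LF(v_i)$. What you are really analysing is how $t_0$ can occur in $\Output_\HeadChoice(\Trigger'_h)$ itself --- either as an $h$-value (impossible since $t_0 \notin \Skeleton{\R}{\Trigger}$ by a depth count) or as a fresh Skolem term, which pins down $\Rule' = \Rule$ and $(h\circ\Subs')|_{\Frontier(\Rule)} = \Subs|_{\Frontier(\Rule)}$. Second, for $\mathcal{O}(\R, \Trigger, h)$ the paper simply notes $\mathcal{O}(\R, \HeadChoice, \Trigger, h) \subseteq \mathcal{O}(\R, \Trigger, h)$ by Definition~\ref{definition:over-approximation}, so the latter inherits the over-approximation property; this is shorter than rerunning your induction.
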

\begin{proof}

  We show that $\mathcal{O}(\R, \HeadChoice, \Trigger, h)$
  is an over-approximation of \R and \HeadChoice before \Trigger.
  The first condition of Definition~\ref{def:semantic-over-approximation} holds for $h$ by
  Definition~\ref{definition:over-approximation}.
  For every $u \in \Branch(\CT, \HeadChoice)$ 
  in every chase tree \CT of every KB $\Tuple{\R, \D}$,
  we can verify the second condition via induction on 
  the path $u_{1}, \dots, u_{n}$ in \CT with the root $u_{1}$ and $u_{n} = u$
  assuming that
  $\Output_{\HeadChoice}(\Trigger) \nsubseteq \LF(u)$.
  The base case holds since the facts in $h(\D)$ are contained in the facts defined 
  by \eqref{condition:crit-inst} in Definition~\ref{definition:over-approximation}.
  For the induction step it is important to realize that for each $2 \leq i \leq n$,
  $\Output_{\HeadChoice}(\LT(u_{i})) \neq \Output_{\HeadChoice}(\Trigger)$ holds by $\Output_{\HeadChoice}(\Trigger) \nsubseteq \LF(u)$.

  For the second part of the claim, $\mathcal{O}(\R, \Trigger, h)$
  is an over-approximation of \R and \HeadChoice before \Trigger
  since $\mathcal{O}(\R, \HeadChoice, \Trigger, h) \subseteq \mathcal{O}(\R, \Trigger, h)$ by Definition~\ref{definition:over-approximation}.
\end{proof}

Using the over-approximations from Definition~\ref{definition:over-approximation}, we define two different types of unblockability:
\begin{definition}
\label{definition:decidable-unblockability}
Consider a rule set \R and an \R-trigger $\Trigger$.
Then, \Trigger is \emph{$\star$-unblockable} for \R if it features a datalog rule or it is not obsolete for $\mathcal{O}(\R, \Trigger, h^\star_\Trigger)$.
Moreover, it is \emph{$\textit{uc}$-unblockable} for \R and some \HeadChoice if it features a datalog rule or it is not obsolete for $\mathcal{O}(\R, \HeadChoice, \Trigger, h^\textit{uc}_\Trigger)$.
\end{definition}

\begin{example}\label{exp:mainUnblk}
Consider the rule set $\R = \{\eqref{main1}, \eqref{main2}\}$ and the head-choice $\HeadChoice_1$ mapping all rules to 1.
The trigger $\Trigger = \Tuple{\eqref{main2}, [x / \SF{}{v}(d)]}$ is \textit{uc}-unblockable for \R and $\HeadChoice_{1}$; it is not for $\R' = \R \cup \{\eqref{main3}\}$ and $\HeadChoice_{1}$.
Note that $\mathcal{O}(\R', \HeadChoice_{1}, \Trigger, h^\textit{uc}_\Trigger)$ includes $\mathcal{O}(\R, \HeadChoice_{1}, \Trigger, h^\textit{uc}_\Trigger) \cup \{\FP{Has}(\SF{}{v}(d), d)\}$ (among other facts).
Therefore, \Trigger is obsolete for $\mathcal{O}(\R', \HeadChoice_{1}, \Trigger, h^\textit{uc}_\Trigger)$ but not for $\mathcal{O}(\R, \HeadChoice_{1}, \Trigger, h^\textit{uc}_\Trigger)$.
\end{example}

By Theorem~\ref{theorem:g-unblockability-undecidable}, we cannot decide if a trigger \Trigger is g-unblockable.
However, we can effectively check $\star$- or $\textit{uc}$-unblockability; both properties imply g-unblockability:

\begin{lemma}
\label{lemma:deciding-unblockability}
If an \R-trigger \Trigger is $\star$-unblockable for a rule set \R, it is \textit{uc}-unblockable for \R and every head-choice.
If \Trigger is \textit{uc}-unblockable for \R and some head-choice \HeadChoice, then it is g-unblockable for \R and \HeadChoice.
\end{lemma}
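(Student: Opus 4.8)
The plan is to prove the two implications separately, in both cases reducing to the semantic over-approximation machinery of Lemma~\ref{lemma:over-approximation-hc} and Lemma~\ref{lemma:star-uc-over-approximations}.

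For the first implication, suppose $\Trigger = \Tuple{\Rule, \Subs}$ is $\star$-unblockable for $\R$. If $\Rule$ is datalog, then $\Trigger$ is $\textit{uc}$-unblockable for $\R$ and every head-choice by definition, so assume $\Rule$ is not datalog; then $\Trigger$ is not obsolete for $\mathcal{O}(\R, \Trigger, h^\star_\Trigger)$. Fix an arbitrary head-choice $\HeadChoice$. I would show the key containment $\mathcal{O}(\R, \HeadChoice, \Trigger, h^\textit{uc}_\Trigger) \subseteq \mathcal{O}(\R, \Trigger, h^\star_\Trigger)$, possibly after composing with the term map $h^\star_\Trigger$ to collapse the fresh constants $c_f$ introduced by $h^\textit{uc}_\Trigger$ down to $\star$. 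Concretely, one applies $h^\star_\Trigger$ to the definition of $\mathcal{O}(\R, \HeadChoice, \Trigger, h^\textit{uc}_\Trigger)$: condition \eqref{condition:crit-inst} is preserved because collapsing $c_f$ to $\star$ keeps every fact inside the critical-instance part; condition \eqref{condition:birth-facts} is preserved because $h^\star_\Trigger$ is the identity on $\Skeleton{\R}{\Trigger}$, hence on $\BirthFacts{\R}{\Trigger}$; and the closure condition (Item~3) for the $\HeadChoice$-variant is weaker than the ``all heads'' closure of $\mathcal{O}(\R, \Trigger, h^\star_\Trigger)$, so every trigger contributing to the former also contributes (via some head index) to the latter. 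Once this containment is established, obsoleteness of $\Trigger$ for $\mathcal{O}(\R, \HeadChoice, \Trigger, h^\textit{uc}_\Trigger)$ would lift to obsoleteness for $\mathcal{O}(\R, \Trigger, h^\star_\Trigger)$ (obsoleteness is monotone under fact-set inclusion and under applying a term map that fixes the frontier), contradicting $\star$-unblockability; hence $\Trigger$ is $\textit{uc}$-unblockable for $\R$ and $\HeadChoice$.

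For the second implication, suppose $\Trigger$ is $\textit{uc}$-unblockable for $\R$ and some $\HeadChoice$. If $\Rule$ is datalog, g-unblockability follows directly from the chase's fairness/datalog-priority discipline in Definition~\ref{def:chase-tree}: a loaded datalog trigger on the $\HeadChoice$-branch must be applied (its output eventually appears) before any non-datalog trigger, so its single-head output lands on the branch. Otherwise $\Trigger$ is not obsolete for $\mathcal{O}(\R, \HeadChoice, \Trigger, h^\textit{uc}_\Trigger)$, which by Lemma~\ref{lemma:star-uc-over-approximations} is an over-approximation of $\R$ and $\HeadChoice$ before $\Trigger$; then Lemma~\ref{lemma:over-approximation-hc} immediately yields that $\Trigger$ is g-unblockable for $\HeadChoice$ and every KB with $\R$, as required. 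I expect the main obstacle to be the first implication: carefully checking that the term-map collapse $h^\textit{uc}_\Trigger \mapsto h^\star_\Trigger$ interacts correctly with the fixpoint definitions — in particular that no fact is lost from the birth-fact core and that the head-choice closure really is subsumed by the full-output closure — and that obsoleteness genuinely transfers across this collapse because $h^\star_\Trigger$ restricts to the identity on $\Skeleton{\R}{\Trigger} \supseteq \Frontier(\Rule)\Subs$, so the extending substitution witnessing obsoleteness is preserved on the frontier.
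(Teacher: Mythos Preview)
Your proposal is correct and follows essentially the same route as the paper: for the first implication you collapse the fresh constants $c_f$ to $\star$ to obtain $h(\mathcal{O}(\R,\HeadChoice,\Trigger,h^{\textit{uc}}_\Trigger)) \subseteq \mathcal{O}(\R,\Trigger,h^{\star}_\Trigger)$ (the paper states exactly this inclusion with $h$ the map sending each $c_f$ to $\star$), and for the second implication you separate the datalog case and then invoke Lemmas~\ref{lemma:over-approximation-hc} and~\ref{lemma:star-uc-over-approximations}, just as the paper does. One small remark: in the datalog case you do not actually need datalog priority---fairness together with the fact that for a datalog rule obsoleteness coincides with $\Output_1(\Trigger)\subseteq\LF(u)$ already gives the conclusion.
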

\begin{proof}
  The first implication holds since $\mathcal{O}(\R, \Trigger, h^\star_\Trigger)$ includes $h(\mathcal{O}(\R, \HeadChoice, \Trigger, h^\textit{uc}_\Trigger))$
  with $h$ the function that maps every fresh constant in the range of $h^\textit{uc}_\Trigger$ to $\star$.
  For the second, note that a trigger \Trigger with a datalog rule is g-unblockable.
  For the non-datalog case, we can apply Lemmas~\ref{lemma:over-approximation-hc} and \ref{lemma:star-uc-over-approximations}.
\end{proof}
By Lemma~\ref{lemma:deciding-unblockability}, a trigger is \textit{g}-unblockable if it is $\star$-unblockable; we can also prove this directly with Lemmas~\ref{lemma:over-approximation-hc} and \ref{lemma:star-uc-over-approximations}.
By Lemma~\ref{lemma:deciding-unblockability}, \textit{uc}-unblockability is more general than $\star$-unblockability; the other direction does not hold:
\begin{example}
  Consider the following rule set \R:
  \begin{align}
    &R(x, y) \to \exists u. R(y, u) \label{ucButNotStarUnblk1}\\
    &R(x, y) \to \exists v. S(y, v)\\
    &R(x, y) \to \exists w. T(y, w)\\
    &S(x, y) \land T(x, y) \to R(x, y)
  \end{align}
  The trigger $\Tuple{\eqref{ucButNotStarUnblk1}, [x / c_{y}, y / \SF{}{u}(c_{y})]}$
  is \textit{uc}-unblockable but not $\star$-unblockable:
  We find $S(\SF{}{u}(c_{y}), \star), T(\SF{}{u}(c_{y}), \star)$ and therefore $R(\SF{}{u}(c_{y}), \star)$
  in $\mathcal{O}(\R, \Trigger, h^\star_\Trigger)$.
  On the other hand, we only find
  $S(\SF{}{u}(c_{y}), c_{v})$ and $T(\SF{}{u}(c_{y}), c_{w})$ but no fact of the form $R(\SF{}{u}(c_{y}), \dots)$
  in $\mathcal{O}(\R, \HeadChoice_1, \Trigger, h^\textit{uc}_\Trigger)$.
\end{example}

\citeauthor{RMFA} and \citeauthor{DMFA} introduce similar notions to $\star$-unblockability in \shortcite{RMFA} and \shortcite{DMFA}.
Here, we not only present a more general criterion (see Definition~\ref{definition:decidable-unblockability}), but a blueprint to produce more
comprehensive notions (see Definitions~\ref{def:semantic-over-approximation}, \ref{definition:over-approximation} and Lemmas~\ref{lemma:over-approximation-hc}, \ref{lemma:star-uc-over-approximations}).

\subsubsection{Propagating Unblockability}

A key feature of \textit{uc}/$\star$-unblockability is that these properties propagate across a \emph{reversible} constant-mappings: \cite{DMFA}

\begin{definition}\label{def:reversible-constant-mapping}
A constant mapping $g$ is a partial function mapping constants to terms.
For an expression \Expression, let $g(\Expression)$ be the expression that results from replacing all syntactic occurrences of every constant $c$ in the domain of $g$ with $g(c)$.

Consider a (possibly finite) set $\mathcal{T}$ of terms that contains every subterm of every $t \in \mathcal{T}$.
A constant mapping $g$ is \emph{reversible} for $\mathcal{T}$ if the following hold:
\begin{enumerate}
\item The function $g$ is defined for every constant in $\mathcal{T}$. \label{condition:reversibility1}
\item For every $t, s \in \mathcal{T}$ with $t \neq s$, we have that $g(t) \neq g(s)$. \label{condition:reversibility2}
\item For every constant $c \in \mathcal{T}$, every subterm $s$ of $g(c)$, and every functional term $u \in \mathcal{T}$; we have that $g(u) \neq s$. \label{condition:reversibility3}
\end{enumerate}
\end{definition}

\begin{lemma}\label{lem:unblk-propagates}
Consider a rule set \R, a head-choice \HeadChoice, an \R-trigger $\Trigger = \Tuple{\Rule, \Subs}$, and a constant mapping $g$ reversible for \Skeleton{\R}{\Trigger}.
If \Tuple{\Rule, g \circ \Subs} is an \R-trigger and $\Tuple{\Rule, \Subs}$ is \textit{uc}/$\star$-unblockable for \R [and $\HeadChoice$], then so is $\Tuple{\Rule, g \circ \Subs}$.
\end{lemma}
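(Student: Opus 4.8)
The plan is to show that the over-approximation sets transform predictably under a reversible constant mapping, and then reduce the claim about $\Tuple{\Rule, g \circ \Subs}$ to the known fact about $\Tuple{\Rule, \Subs}$. The key observation is that, because $g$ is reversible for $\Skeleton{\R}{\Trigger}$, the term-skeleton of $\Tuple{\Rule, g \circ \Subs}$ is exactly $g(\Skeleton{\R}{\Trigger})$: conditions \eqref{condition:reversibility1} and \eqref{condition:reversibility2} guarantee that $g$ is injective on the skeleton and defined everywhere it needs to be, and condition \eqref{condition:reversibility3} ensures that $g$ does not accidentally identify a skeleton term with a proper subterm of the image of another, so no spurious collapsing or merging of birth facts occurs. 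Consequently, $\BirthFacts{\R}{\Tuple{\Rule, g \circ \Subs}} = g(\BirthFacts{\R}{\Tuple{\Rule, \Subs}})$ and $\Output_\HeadChoice(\Tuple{\Rule, g \circ \Subs}) = g(\Output_\HeadChoice(\Tuple{\Rule, \Subs}))$ (and similarly for the non-$\HeadChoice$ output set).

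First I would establish the structural identity $\mathcal{O}(\R, \HeadChoice, \Tuple{\Rule, g \circ \Subs}, h^{\textit{uc}}) = g'(\mathcal{O}(\R, \HeadChoice, \Tuple{\Rule, \Subs}, h^{\textit{uc}}))$ for a suitable extension $g'$ of $g$ (one that maps $\star$ to $\star$ and matches up the fresh constants $c_f$ appropriately). This is an induction on the fixpoint definition of $\mathcal{O}$ in Definition~\ref{definition:over-approximation}: the base facts over $\Skeleton{\R}{\Trigger} \cup \{\star\}$ map bijectively to base facts over $g(\Skeleton{\R}{\Trigger}) \cup \{\star\}$ because $g$ is injective on the skeleton; the birth facts map correctly by the paragraph above; and the closure condition is preserved because $\Trigger'$ is loaded for the original set iff $g'(\Trigger')$ is loaded for its image, using injectivity again, and the exclusion clause $\Output_\HeadChoice(\Trigger') \neq \Output_\HeadChoice(\Trigger)$ transfers since $g'$ is injective on the relevant terms. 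The analogous identity for $\mathcal{O}(\R, \Trigger, h^\star)$ and the $\star$-variant is proved the same way, now with $g'$ additionally collapsing everything outside the (image) skeleton to $\star$.

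Then the conclusion follows quickly. If $\Tuple{\Rule, \Subs}$ features a datalog rule, so does $\Tuple{\Rule, g \circ \Subs}$ and both are trivially unblockable. Otherwise, $\Tuple{\Rule, \Subs}$ is not obsolete for $\mathcal{O}(\R, \HeadChoice, \Tuple{\Rule, \Subs}, h^{\textit{uc}})$ (resp. the $\star$-set). Suppose for contradiction that $\Tuple{\Rule, g \circ \Subs}$ is obsolete for $\mathcal{O}(\R, \HeadChoice, \Tuple{\Rule, g \circ \Subs}, h^{\textit{uc}}) = g'(\mathcal{O}(\R, \HeadChoice, \Tuple{\Rule, \Subs}, h^{\textit{uc}}))$: there is a substitution $\SubsAux$ extending $g \circ \Subs$ with $\HF_i(\Rule)\SubsAux \subseteq g'(\mathcal{O}(\R, \HeadChoice, \Tuple{\Rule, \Subs}, h^{\textit{uc}}))$ for some $i$. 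Pulling back along $g'$ — which is invertible on the set of terms appearing, precisely by the reversibility conditions — yields a substitution extending $\Subs$ witnessing obsoleteness of $\Tuple{\Rule, \Subs}$ for $\mathcal{O}(\R, \HeadChoice, \Tuple{\Rule, \Subs}, h^{\textit{uc}})$, a contradiction. Hence $\Tuple{\Rule, g \circ \Subs}$ is $\textit{uc}$-unblockable (resp. $\star$-unblockable).

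I expect the main obstacle to be making the ``pulling back along $g'$'' step rigorous — i.e. verifying that the witnessing substitution $\SubsAux$ for obsoleteness in the image set has all its values in the range of $g'$ (so that it can be pulled back at all), and that the pulled-back substitution still extends $\Subs$ rather than some unrelated substitution. This is exactly where reversibility condition \eqref{condition:reversibility3} does the work: it prevents a head atom from matching into the image via a term that is a proper subterm of some $g(c)$ rather than an image of a genuine term, which would block the pullback. Handling the interaction between the head existential variables $\vec{y}_i$ (which $\SubsAux$ may send anywhere in the image set) and the frontier variables (pinned down by $g \circ \Subs$) carefully is the delicate part; the rest is bookkeeping over the inductive definition of $\mathcal{O}$.
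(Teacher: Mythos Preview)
Your overall contrapositive strategy---assume $\Tuple{\Rule, g \circ \Subs}$ is obsolete for its over-approximation and pull back to contradict unblockability of $\Tuple{\Rule, \Subs}$---matches the paper's. But your central structural claim, the equality $\mathcal{O}(\R, \HeadChoice, \Tuple{\Rule, g \circ \Subs}, h^{\textit{uc}}) = g'(\mathcal{O}(\R, \HeadChoice, \Tuple{\Rule, \Subs}, h^{\textit{uc}}))$, is false, and so are the auxiliary identities $\Skeleton{\R}{\Tuple{\Rule, g \circ \Subs}} = g(\Skeleton{\R}{\Trigger})$ and $\BirthFacts{\R}{\Tuple{\Rule, g \circ \Subs}} = g(\BirthFacts{\R}{\Trigger})$ that you rely on. The reason is simple: $g$ may send a constant $c \in \Skeleton{\R}{\Trigger}$ to a genuinely functional term $g(c)$. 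Then $\BirthFacts{\R}{c} = \emptyset$ while $\BirthFacts{\R}{g(c)}$ is non-empty, and all proper subterms of $g(c)$ enter the new skeleton without being $g$-images of anything in the old one (reversibility condition~\eqref{condition:reversibility3} even \emph{forces} them not to be images of old functional skeleton terms). So the new skeleton---and hence the base of the new over-approximation---is strictly larger than the $g'$-image of the old one, and your ``pull back along $g'$'' step has no inverse to use on those extra terms.

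The paper does not attempt an equality. Instead it defines a collapsing map $g^{-1}$ that sends every term in the image of the old skeleton to its (unique, by condition~\eqref{condition:reversibility2}) preimage, fixes the fresh $c_f$ constants, and sends \emph{everything else to $\star$}. It then proves only the one-sided inclusion $g^{-1}(\mathcal{O}(\R,[\HeadChoice,]\Tuple{\Rule, g \circ \Subs}, h_{\FAux})) \subseteq \mathcal{O}(\R,[\HeadChoice,]\Trigger, h_{\F})$, which is exactly what the contrapositive needs. The delicate work---and where condition~\eqref{condition:reversibility3} is actually used---is in showing that $g^{-1}$ applied to the \emph{extra} birth facts of $\Tuple{\Rule, g \circ \Subs}$ lands inside the critical-instance part $\F'$ of the old base set (because all their terms get collapsed to constants or $\star$). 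Your proposal correctly senses that condition~\eqref{condition:reversibility3} guards the pullback, but it does so by enabling this collapse, not by making $g'$ a bijection on the relevant terms.
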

\begin{proof}
  We define $g^{-1}$ as follows:
  For a term $t$, let $g^{-1}(t) = s$ if there is a term $s$
  that occurs in $\Skeleton{\R}{\Trigger}$ with $g(s) = t$,
  $g^{-1}(t) = t$ otherwise if $t$ is a constant that does not occur in $\EI{\Constants}{\Skeleton{\R}{\Tuple{\Rule, g \circ \Subs}}}$ (i.e. $g^{-1}$ is the identity on fresh constants introduced by $h_{\dots}^{\textit{uc}}$),
  and $g^{-1}(t) = \star$ otherwise.
  Note that $g^{-1}$ is well-defined because $g$ is reversible (cond. \ref{condition:reversibility2})
  for $\Skeleton{\R}{\Trigger}$.

  Consider the sets $\F'$ and $\FAux'$ of all facts
  that can be defined using any predicate in \R and the constants in
  $\EI{\Constants}{\Skeleton{\R}{\Trigger}} \cup \{\star\}$
  and $\EI{\Constants}{\Skeleton{\R}{\Tuple{\Rule, g \circ \Subs}}} \cup \{\star\}$,
  respectively.
  Moreover, consider the fact sets:
  $\F = \BirthFacts{\R}{\Trigger} \cup \F'$ and
  $\FAux = \BirthFacts{\R}{\Tuple{\Rule, g \circ \Subs}} \cup \FAux'$.
  Also, let the functions $\Tuple{h_{\F}, h_{\FAux}}$ be from $\{\Tuple{h_\Trigger^\star, h_{\Tuple{\Rule, g \circ \Subs}}^\star}, \Tuple{h_\Trigger^\textit{uc}, h_{\Tuple{\Rule, g \circ \Subs}}^\textit{uc}}\}$.

  First claim: $g^{-1}(\FAux) \subseteq \F$.
  Since $g^{-1}(\FAux') \subseteq \F$ follows trivially,
  we only show
  $g^{-1}(\BirthFacts{\R}{t}) \subseteq \F$ for every
  $t \in \EI{\Terms}{\FAux}$ via induction over the structure of terms.
  If $t$ is a constant, then $g^{-1}(\BirthFacts{\R}{t}) = \emptyset$; hence, the base case trivially holds.
  Regarding the induction step, consider a term $t$ that is of the form $\SF{\RuleAux}{\ell, y}(\Vs)$:

  Let $\Vz$ be the list of existentially quantified variables
  in $\HF_\ell(\RuleAux)$.
  Let \SubsAux be a substitution with $\Frontier(\RuleAux)\SubsAux = \Vs$.
  Moreover, let $H = \HF_\ell(\RuleAux)\SubsAux$.
  We only need to show that $g^{-1}(H) \subseteq \F$
  to verify the induction step.
  We observe:
  If $g^{-1}(\SF{\RuleAux}{\ell, z}(\Vs))$ is functional
  for some $z \in \Vz$,
  then $g^{-1}(\SF{\RuleAux}{\ell, z'}(\Vs)) = \SF{\RuleAux}{\ell, z'}(g^{-1}(\Vs))$
  for each $z' \in \Vz$ ($\dagger$).
  Now, we perform a case by case analysis on $g^{-1}(t)$:
  If $g^{-1}(t)$ is a functional term,
  then $g^{-1}(H) = g^{-1}(\HF_\ell(\RuleAux)\SubsAux) = \HF_\ell(\RuleAux)(g^{-1} \circ \SubsAux) \subseteq \F$ (by $\dagger$).
  If $g^{-1}(t) \in \Constants \setminus \{\star\}$,
  then $g^{-1}(\SF{\RuleAux}{\ell, z}(\Vs))$
  must be a constant for every $z \in \Vz$ (by $\dagger$).
  Since $g$ is reversible for $\Skeleton{\R}{\Trigger}$ (cond. \ref{condition:reversibility3}),
  $g^{-1}(s)$ is also a constant (possibly $\star$)
  for every $s \in \Vs$. Hence, $g^{-1}(H) \subseteq \F' \subseteq \F$.
  If $g^{-1}(t) = \star$ and $g^{-1}(t')$ is a constant (or $\star$)
  for every $t' \in \EI{\Terms}{H}$, then $g^{-1}(H) \subseteq \F' \subseteq \F$.
  The remaining case of $g^{-1}(t) = \star$ and $g^{-1}(t') \notin \Constants$ for a $t' \in \EI{\Terms}{H}$,
  contradicts reversibility of $g$ (cond. \ref{condition:reversibility3})
  since then, there must be a constant $c$ with $t' \in \Subterms(g(c))$.

  Second claim:
  The set $g^{-1}(\mathcal{O}(\R, [\HeadChoice,] \Tuple{\Rule, g \circ \Subs}, h_{\FAux}))$ is a subset of $\mathcal{O}(\R, [\HeadChoice,] \Trigger, h_{\F})$.
  There exists a finite list of triggers $\Tuple{\RuleAux_{1}, \SubsAux_{1}}, \dots, \Tuple{\RuleAux_{m}, \SubsAux_{m}}$
  that yields $\mathcal{O}(\R, [\HeadChoice,] \Tuple{\Rule, g \circ \Subs}, h_{\FAux})$ from \FAux according to Definition~\ref{definition:over-approximation}.
  With the first claim as a base case, we can show via induction that the triggers
  $\Tuple{\RuleAux_{1}, g^{-1} \circ \SubsAux_{1}}, \dots, \Tuple{\RuleAux_{m}, g^{-1} \circ \SubsAux_{m}}$
  can be used in the construction of $\mathcal{O}(\R, [\HeadChoice,] \Trigger, h_{\F})$.

  We conclude that $\Tuple{\Rule, g \circ \Subs}$ is \textit{uc}/$\star$-unblockable for \R [and $\HeadChoice$] as follows:
  Suppose for a contradiction that $\Tuple{\Rule, g \circ \Subs}$ is not \textit{uc}/$\star$-unblockable.
  Then, \Rule is not datalog and $\Tuple{\Rule, g \circ \Subs}$ is obsolete for
  $\mathcal{O}(\R, \Tuple{\Rule, g \circ \Subs}, h_{\Tuple{\Rule, g \circ \Subs}}^{\star})$
  [resp. $\mathcal{O}(\R, \HeadChoice, \Tuple{\Rule, g \circ \Subs}, h_{\Tuple{\Rule, g \circ \Subs}}^{\textit{uc}})$].
  By the second claim above, we obtain
  that $\Trigger$ is obsolete for
  $\mathcal{O}(\R, \Trigger, h_{\Trigger}^{\star})$
  [resp. $\mathcal{O}(\R, \HeadChoice, \Trigger, h_{\Trigger}^{\textit{uc}})$].
  Hence, \Trigger is not \textit{uc}/$\star$-unblockable
  which contradicts the premise of the lemma.
\end{proof}


Condition~\ref{condition:reversibility2} in Definition~\ref{def:reversible-constant-mapping} admits an ``inverse'' of $g$ on term-level.
Lemma~\ref{lem:unblk-propagates} breaks without it:

\begin{example}
\label{exp:revInjectivityNecessary}
Consider the rule set $\R = \{(\ref{example:rev-2-1}\text{--}\ref{example:rev-2-4})\}$ and the head-choice $\HeadChoice_1$ mapping all rules to $1$.
\begin{align}
P(x, y) &\to \exists u. R(x, u) \land S(y, u) \label{example:rev-2-1} \\
R(x, y) &\to \exists v. T(y, v) \label{example:rev-2-2} \\
R(x, y) \land S(x, y) &\to T(y, x) \label{example:rev-2-3} \\
T(x, y) &\to P(y, y) \label{example:rev-2-4}
\end{align}
Consider the substitution $\Subs = [x / c_{x}, y / \SF{}{u}(c_{x}, c_{y})]$; the trigger $\Tuple{\eqref{example:rev-2-2}, \Subs}$, which is \textit{uc}-unblockable for \R and $\HeadChoice_1$; and the constant mapping $g$ that maps $c_x$ and $c_y$ to $\SF{}{v}(\SF{}{u}(c_{x}, c_{y}))$, which does not satisfy \eqref{condition:reversibility2} in Definition~\ref{def:reversible-constant-mapping} for $\mathcal{T} = \Skeleton{\R}{\Tuple{\eqref{example:rev-2-2}, \Subs}}$.
Indeed, $\Tuple{\eqref{example:rev-2-2}, g \circ \Subs}$ is not \textit{uc}-unblockable!
Intuitively, rule~\eqref{example:rev-2-3} ``blocks'' rule~\eqref{example:rev-2-2}
if rule~\eqref{example:rev-2-1} is applied with a substitution that maps $x$ and $y$ to the same term, which is the case for $g \circ \Subs$.
\end{example}


Lemma~\ref{lem:unblk-propagates} also breaks without condition \ref{condition:reversibility3}:

\begin{example}
Consider the head-choice $\HeadChoice_1$ mapping all rules to $1$, and the rule set \R containing the following:
\begin{align}
A(x) &\to \exists u. P(x, u) & &B(x) \to \exists v. Q(x, v) \nonumber \\
C(x) &\to \exists w. S(x, w) & &Q(x, y) \to T(x) \nonumber \\
P(x, y) &\to T(y) \lor \exists z. R(x, y, z) \label{rev3Crit}
\end{align}
The trigger $\Tuple{\eqref{rev3Crit}, [x / c, y / \SF{}{u}(d)]}$ is \textit{uc}-unblockable; the constant mapping $g$ that maps $c$ to $\SF{}{w}(\SF{}{v}(\SF{}{u}(d)))$ and $d$ to itself satisfies conditions \eqref{condition:reversibility1} and \eqref{condition:reversibility2} in Definition~\ref{def:reversible-constant-mapping} for $\mathcal{T} = \Skeleton{\R}{\Tuple{\eqref{rev3Crit}, [x / c, y / \SF{}{u}(d)]}}$.
Condition~\eqref{condition:reversibility3} is violated because $\SF{}{u}(d)$ is a subterm of $g(c)$ and $g(\SF{}{u}(d)) = \SF{}{u}(d)$.
Indeed, $\Tuple{\eqref{rev3Crit}, g \circ [x / c, y / \SF{}{u}(d)]}$ is not \textit{uc}-unblockable!
Intuitively, this is because the birth facts feature $Q(\SF{}{u}(d), \SF{}{v}(\SF{}{u}(d)))$,
which yields $T(\SF{}{u}(d))$, thus ``blocking'' the trigger $\Tuple{\eqref{rev3Crit}, [x / \SF{}{w}(\SF{}{v}(\SF{}{u}(d))), y / \SF{}{u}(d)]}$.
\end{example}

\section{Cyclicity Prefixes}
\label{section:cyclicity-prefixes}

Our high-level strategy to show that a rule set \R never-terminates is
to find a cyclicity sequence (see Definition~\ref{definition:cyclicity-sequence} and Theorem~\ref{theorem:cyclicity-sequence}),
which is challenging because it is infinite by definition.
Instead, we construct a \emph{cyclicity prefix} for \R, which is finite and still yields a cyclicity sequence.

Intuitively, a cyclicity prefix is a (finite) list of \textit{uc}-unblockable triggers that, if subsequently applied to a starting database \D, produce an isomorphic copy of \D that contains at least one new term. We can then repeat the prefix to obtain a cyclicity sequence.
To limit the number of starting databases,
we only consider minimal databases for that some trigger with a generating rule in \R is loaded.

\begin{definition}
The \emph{rule-database} of a rule \Rule is the database $\RuleDatabase{\Rule} = \BF(\Rule)\UCSubs$ where \UCSubs is a substitution that maps every variable $x$ to a fresh constant $c_x$ unique for $x$.
\end{definition}

Assume that rule \Rule can indeed be applied twice when starting on \RuleDatabase{\Rule}
and that the second application yields a cyclic term.
If the triggers applied in between the first and last application of \Rule are \textit{uc}-unblockable, then this finite sequence of triggers is a cyclicity prefix, which can be extended into a cyclicity sequence by applying Lemma~\ref{lem:unblk-propagates}.

\begin{definition}
\label{definition:cyclicity-prefix}
A \emph{cyclicity prefix} for a rule set \R, a head-choice \HeadChoice, and a rule \Rule is a (finite) list of \R-triggers $\TriggerSeq = \Tuple{\Rule_0, \Subs_0}$, $\ldots, \Tuple{\Rule_n, \Subs_n}$ such that:
\begin{itemize}
\item Both $\Rule_0 = \Rule$ and $\Subs_0 = \UCSubs$.
\item The sequence \TriggerSeq is loaded for \Tuple{\R, \RuleDatabase{\Rule}} and \HeadChoice.
\item Each trigger $\Tuple{\Rule_i, \Subs_i}$ with $1 \leq i \leq n$ is \textit{uc}-unblockable and $\Tuple{\Rule_{0}, \Subs_{0}}$ is g-unblockable.
\item Both $\Rule_n = \Rule$ and $\Output_{\HeadChoice}(\Tuple{\Rule_{n}, \Subs_{n}})$ features a \Rule-cyclic term; that is, a term $t$ that of the form $f(\Vs)$ with $f \in \EI{\Funs}{\Skolemise(\Rule)}$ and $f \in \EI{\Funs}{\Vs}$,
\item The constant mapping $g_\TriggerSeq$
      with $g_{\TriggerSeq} \circ \Subs_0 = \Subs_{n}$ is
      reversible for $\Skeleton{\R}{\Tuple{\Rule_{i}, g_{\TriggerSeq}^{j} \circ \Subs_{i}}}$ for every $1 \leq i \leq n$ and every $j \geq 0$.
      Note that $g_{\TriggerSeq}^{0}$ is the identity function over constants, and $g_{\TriggerSeq}^{i} = g_{\TriggerSeq} \circ g_{\TriggerSeq}^{i-1}$ for every $i \geq 1$.
\end{itemize}
\end{definition}

We can extend a cyclicity prefix such as \TriggerSeq above into an infinite sequence of triggers that are defined via composition with the constant-mapping $g_\TriggerSeq$;
afterwards, we show that this extension is a cyclicity sequence.
\begin{definition}
\label{definition:extend-prefix}
Given a \emph{cyclicity prefix} $\TriggerSeq = \Tuple{\Rule_0, \Subs_0}$, $\ldots,$ $\Tuple{\Rule_n, \Subs_n}$ for a rule set \R, a head-choice, and a rule in \R; let $\TriggerSeq^{\infty}$ be the (infinite) sequence $\Tuple{\Rule_0, \Subs_0}$, $\Tuple{\Rule_1, \Subs_1^1}$, $\ldots,$ $\Tuple{\Rule_n, \Subs_n^1}, \Tuple{\Rule_1, \Subs_1^2}, \ldots, \Tuple{\Rule_n, \Subs_n^2}, \ldots$ of \R-triggers with $\Subs_{i}^{j} = g_{\TriggerSeq}^{j-1} \circ \Subs_i$ for every $1 \leq i \leq n$ and $j \geq 1$.
\end{definition}

\begin{example}\label{exp:mainCycPrefix}
  The finite trigger sequence $\Tuple{\eqref{main1}, [x / c_{x}]}$, $\Tuple{\eqref{main2}, [x / \SF{}{v}(c_{x})]}$, $\Tuple{\eqref{main1}, [x / \SF{}{w}(\SF{}{v}(c_{x}))]}$ is a cyclicity prefix for the rule set $\{ \eqref{main1}, \eqref{main2} \}$, head-choice $\HeadChoice_1$, and \eqref{main1}.
\end{example}

\begin{theorem}
\label{theorem:cyclicity-prefix}
If $\TriggerSeq$ is a cyclicity prefix for a rule set \R, a head-choice \HeadChoice, and some $\Rule \in \R$; then $\TriggerSeq^\infty$ is a cyclicity sequence for $\Tuple{\R, \RuleDatabase{\Rule}}$ and \HeadChoice and hence, \R never-terminates.
\end{theorem}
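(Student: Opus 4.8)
The plan is to verify that $\TriggerSeq^\infty$ satisfies all four requirements of Definition~\ref{definition:cyclicity-sequence}, i.e. that it is infinite, loaded, growing, and g-unblockable; then Theorem~\ref{theorem:cyclicity-sequence} gives never-termination for free. Infiniteness is immediate from the construction in Definition~\ref{definition:extend-prefix}, since the segment $\Tuple{\Rule_1, \Subs_1^j}, \ldots, \Tuple{\Rule_n, \Subs_n^j}$ is appended once for every $j \geq 1$. So the work is in the other three.

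First I would handle \emph{g-unblockability}. The triggers $\Tuple{\Rule_0, \Subs_0}$ and $\Tuple{\Rule_i, \Subs_i}$ for $1 \le i \le n$ are g-unblockable by hypothesis (the first directly, the rest since \textit{uc}-unblockability implies g-unblockability by Lemma~\ref{lemma:deciding-unblockability}). For the copies $\Tuple{\Rule_i, \Subs_i^j} = \Tuple{\Rule_i, g_\TriggerSeq^{j-1} \circ \Subs_i}$ with $j \geq 2$, I would invoke Lemma~\ref{lem:unblk-propagates}: the last bullet of Definition~\ref{definition:cyclicity-prefix} says exactly that $g_\TriggerSeq$ is reversible for $\Skeleton{\R}{\Tuple{\Rule_i, g_\TriggerSeq^{j'} \circ \Subs_i}}$ for every $j' \geq 0$, so applying Lemma~\ref{lem:unblk-propagates} $(j-1)$ times — stepping from $\Tuple{\Rule_i, g_\TriggerSeq^{j'}\circ\Subs_i}$ to $\Tuple{\Rule_i, g_\TriggerSeq^{j'+1}\circ\Subs_i}$ — propagates \textit{uc}-unblockability, hence g-unblockability, to all copies. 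One must also check each copy is genuinely an \R-trigger (loaded condition handled below), which Lemma~\ref{lem:unblk-propagates} requires as a side hypothesis; this follows from loadedness. (Note that $\Tuple{\Rule_0,\Subs_0}$ appears only once at the very front, so no propagation is needed for it.)

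Next, \emph{loadedness}. I would argue that the fact sets $\F_k(\Tuple{\R,\RuleDatabase{\Rule}}, \HeadChoice, \TriggerSeq^\infty)$ behave ``periodically up to $g_\TriggerSeq$''. Concretely: let $\F^{(0)} = \RuleDatabase{\Rule} \cup \Output_\HeadChoice(\Tuple{\Rule_0,\Subs_0}) \cup \bigcup_{i=1}^n \Output_\HeadChoice(\Tuple{\Rule_i,\Subs_i})$ be the fact set reached after the prefix, and observe that, because $\RuleDatabase{\Rule} = \BF(\Rule)\UCSubs$ and $g_\TriggerSeq \circ \Subs_0 = \Subs_n$ with $\Rule_n = \Rule$, the image $g_\TriggerSeq(\RuleDatabase{\Rule}) = \BF(\Rule)\Subs_n$ is contained in $\Output_\HeadChoice(\Tuple{\Rule_n,\Subs_n}) \subseteq \F^{(0)}$ — this is the ``isomorphic copy of \D'' intuition made precise. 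Hence applying $g_\TriggerSeq$ to the prefix-derivation (which takes $\RuleDatabase{\Rule}$ to $\F^{(0)}$ using triggers $\Tuple{\Rule_i,\Subs_i}$) yields a valid derivation taking the subset $g_\TriggerSeq(\RuleDatabase{\Rule}) \subseteq \F^{(0)}$ to $g_\TriggerSeq(\F^{(0)})$ using exactly the triggers $\Tuple{\Rule_i, \Subs_i^2}$; since fact sets only grow, every such trigger is loaded for the actual accumulated set. Iterating this argument by induction on $j$ establishes loadedness of the whole sequence; the key identity to carry through the induction is $\F_{(j+1)n + 1 + \cdot} \supseteq g_\TriggerSeq(\F_{jn+1+\cdot})$, so a trigger loaded in round $j$ has its $g_\TriggerSeq$-image loaded in round $j+1$.

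Finally, \emph{growing}. The fourth bullet of Definition~\ref{definition:cyclicity-prefix} guarantees $\Output_\HeadChoice(\Tuple{\Rule_n,\Subs_n})$ contains a $\Rule$-cyclic term $t = f(\Vs)$ with $f$ a skolem function of $\Rule$ occurring in $\Vs$. Since $g_\TriggerSeq^{j-1}$ replaces constants by terms, each application strictly increases term depth along these skolem positions, so $\Output_\HeadChoice(\Tuple{\Rule_n,\Subs_n^j})$ contains a term of depth strictly larger than anything in $\F_{(j-1)n+1}$; I would make this precise by tracking the depth of the relevant subterm as a function of $j$ and noting it is unbounded, so for every $i$ there is a later $j$ introducing a new term. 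Alternatively one can note each $g_\TriggerSeq$-copy of the cyclic term is syntactically new since $g_\TriggerSeq$ applied to an already-cyclic-in-$f$ term again properly nests $f$.

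\textbf{Main obstacle.} I expect the delicate point to be the loadedness/periodicity argument: one must carefully match up the indices of $\TriggerSeq^\infty$ with the rounds of the $g_\TriggerSeq$-shifted derivation, confirm that $g_\TriggerSeq(\RuleDatabase{\Rule})$ really re-seeds the derivation (this is where $\Rule_n = \Rule$ and $g_\TriggerSeq \circ \Subs_0 = \Subs_n$ are both essential), and check that the side condition ``$\Tuple{\Rule,g\circ\Subs}$ is an \R-trigger'' needed by Lemma~\ref{lem:unblk-propagates} is met — i.e. that $g_\TriggerSeq^{j-1}\circ\Subs_i$ has range consisting of \R-terms and domain the frontier/body variables. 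The reversibility hypothesis on \emph{all} powers $g_\TriggerSeq^j$ in Definition~\ref{definition:cyclicity-prefix} is precisely what keeps the inductive propagation from degenerating, so the proof should emphasize where that is used.
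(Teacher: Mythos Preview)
Your proposal is correct and follows essentially the same three-part structure (loaded, growing, g-unblockable) as the paper's proof, including the same inductive use of Lemma~\ref{lem:unblk-propagates} with the reversibility hypothesis for unblockability and the same $g_\TriggerSeq$-periodicity argument $g_\TriggerSeq(\F_{(j-1)n + i}) \subseteq \F_{jn + i}$ for loadedness. One small slip to fix: the inclusion $g_\TriggerSeq(\RuleDatabase{\Rule}) = \BF(\Rule)\Subs_n \subseteq \Output_\HeadChoice(\Tuple{\Rule_n,\Subs_n})$ is false in general (bodies are not contained in heads); the correct reason $\BF(\Rule)\Subs_n$ lies in $\F^{(0)}$ is simply that $\Tuple{\Rule_n,\Subs_n}$ is \emph{loaded}, so its body instance already sits in the preceding fact set --- what \emph{does} equal $\Output_\HeadChoice(\Tuple{\Rule_n,\Subs_n})$ is $g_\TriggerSeq(\Output_\HeadChoice(\Tuple{\Rule_0,\Subs_0}))$, which is the other half of the base case you need.
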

\begin{proof}
  Assume that there is a cyclicity prefix $\TriggerSeq = \Tuple{\Rule_{0}, \Subs_{0}}, \dots,$ $\Tuple{\Rule_{n}, \Subs_{n}}$
  for \R, \Rule, and \HeadChoice; and consider the constant-mapping $g_{\TriggerSeq}$
  introduced in Definition~\ref{definition:cyclicity-prefix}.
  To prove Theorem~\ref{theorem:cyclicity-prefix},
  we show that $\TriggerSeq^{\infty}$ is a cyclicity sequence of the KB $\KB = \Tuple{\R, \RuleDatabase{\Rule}}$ and \HeadChoice.
  Namely, we argue that $\TriggerSeq^{\infty}$ is
  (a) loaded,
  (b) growing, and
  (c) g-unblockable.

  (a):
  We show that the trigger $\Tuple{\Rule_{i}, \Subs_{i}^{j}}$ is loaded in $\TriggerSeq^{\infty}$ for every $1 \leq i \leq n$ via induction on $j \geq 1$.
  The base case holds since $\TriggerSeq$ is loaded.
  The induction step from $j-1$ to $j$ holds since $g_{\TriggerSeq}(\F_{(j-1) \times n + i}(\KB, \HeadChoice, \TriggerSeq^{\infty}))$ is included in $\F_{j \times n + i}(\KB, \HeadChoice, \TriggerSeq^{\infty})$.


  (b): Since $\Output_{\HeadChoice}(\Tuple{\Rule_{n}, \Subs_{n}})$
  features a \Rule-cyclic term, there is some $x \in \Frontier(\Rule)$
  such that $\Subs_{n}(x) = g_{\TriggerSeq}(\Subs_{0}(x))$ is functional.
  Note that $\Subs_{0}(x)$ is a constant $c$,
  $\Depth(c) = 0$ and $\Depth(g_{\TriggerSeq}(c)) \geq 1$.
  Furthermore, $g_{\TriggerSeq}(c)$ features $c$ as a subterm.
  Hence, $g_{\TriggerSeq}^{k}(c) > g_{\TriggerSeq}^{k-1}(c)$ for every $k \geq 1$.
  But then, by construction of $\TriggerSeq^{\infty}$,
  $g_{\TriggerSeq}^{k}(c)$ occurs in $\F_{j}(\KB, \HeadChoice, \TriggerSeq^{\infty})$ for some $j \geq 0$.
  Thus, $\TriggerSeq$ must be growing.

  (c): We already know by assumption that $\Tuple{\Rule_{0}, \Subs_{0}}$
  is g-unblockable.
  We can show via induction over $j \geq 1$ that $\Tuple{\Rule_{i}, \Subs_{i}^{j}}$
  is \textit{uc}/$\star$-unblockable for every $1 \leq i \leq n$.
  For the base case with $j = 1$, the claim holds by assumption.
  For the induction step from $j$ to $j+1$,
  the claim follows from Lemma~\ref{lem:unblk-propagates}
  since $g_{\Trigger}$ is reversible for $\Skeleton{\R}{\Tuple{\Rule_{i}, \Subs_{i}^{j}}}$.
  By Lemma~\ref{lemma:deciding-unblockability},
  the sequence $\TriggerSeq^{\infty}$ is g-unblockable.
\end{proof}

\section{Novel Cyclicity Notions}
\label{section:cyclicity-notions}

Theorem~\ref{theorem:cyclicity-prefix} provides a blueprint to show non-termination of a rule set \R:
One simply has to show that \R admits a (finite) cyclicity prefix.
In this section, we present two different ways to do so, namely $\RPC$ and $\DRPC$, which we then use to define several cyclicity notions.

\subsubsection{Restricted Prefix Cyclicity}

We introduce \emph{restricted prefix-cyclicity} as the most general notion that we can define using our previous considerations:

\begin{definition}
\label{definition:rpc}
For a rule set \R, a head-choice \HeadChoice, and a rule $\Rule \in \R$;
let $\RPCF(\R, \HeadChoice, \Rule)$ be the fact set that includes the database $\RuleDatabase{\Rule}$,
the set $\Output_{\HeadChoice}(\Tuple{\Rule, \UCSubs})$,
and $\Output_\HeadChoice(\Trigger)$ for every \R-trigger $\Trigger = \Tuple{\RuleAux, \SubsAux}$ such that
\begin{itemize}
\item there are no cyclic terms in the range of \SubsAux,
\item the trigger \Trigger is loaded for $\RPCF(\R, \HeadChoice, \Rule)$,
\item the trigger \Trigger is \textit{uc}-unblockable for $\R$ and \HeadChoice, and
\item the substitution \Subs is injective if $\RuleAux = \Rule$.
\end{itemize}

A rule set \R is \emph{restricted prefix-cyclic} (\RPC)
if there is some head-choice \HeadChoice,
some (generating rule) $\Rule \in \R$,
and some \Rule-cyclic term that occurs in $\RPCF(\R, \HeadChoice, \Rule)$.
\end{definition}

The first restriction ensures that $\RPCF(\R, \HeadChoice, \Rule)$ is finite.
The second and third are necessary by Definition~\ref{definition:cyclicity-prefix}.
Note that $\Tuple{\Rule, \UCSubs}$ also needs to be g-unblockable but it is not \textit{uc}-unblockable by definition.
We show later that g-unblockability for $\Tuple{\Rule, \UCSubs}$ still follows if we can find a \Rule-cyclic term.
The fourth restriction ensures that we can find a reversible constant mapping for the cyclicity-prefix.
Example~\ref{exp:revInjectivityNecessary} shows a rule set that is terminating but
would be wrongly marked as \RPC if we omit the fourth restriction.

\begin{theorem}
\label{theorem:rpc-cyclicity}
If a rule set \R is \RPC, then it never-terminates.
\end{theorem}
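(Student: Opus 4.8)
The plan is to reduce Theorem~\ref{theorem:rpc-cyclicity} to Theorem~\ref{theorem:cyclicity-prefix}: if \R is \RPC, then \R admits a cyclicity prefix for some head-choice \HeadChoice and some generating rule $\Rule \in \R$, and then never-termination follows immediately. So the whole task is to extract, from the fact set $\RPCF(\R, \HeadChoice, \Rule)$ together with the \Rule-cyclic term witnessing membership in \RPC, a finite list of triggers $\Tuple{\Rule_0, \Subs_0}, \ldots, \Tuple{\Rule_n, \Subs_n}$ satisfying every bullet of Definition~\ref{definition:cyclicity-prefix}.

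First I would fix a head-choice \HeadChoice, a generating rule $\Rule$, and a \Rule-cyclic term $t$ occurring in $\RPCF(\R, \HeadChoice, \Rule)$ witnessing \RPC. By construction, $\RPCF(\R, \HeadChoice, \Rule)$ is built as the least fixed point of a monotone operator that adds trigger outputs one at a time; since it is finite (by the ``no cyclic terms in the range'' restriction, which bounds term depth), there is a finite generating list of triggers $\Tuple{\RuleAux_1, \SubsAux_1}, \ldots, \Tuple{\RuleAux_m, \SubsAux_m}$ — each loaded for the partial fact set accumulated so far, \textit{uc}-unblockable, with no cyclic terms in its range, and injective on $\RuleAux_i = \Rule$ — whose cumulative outputs, together with $\RuleDatabase{\Rule}$ and $\Output_\HeadChoice(\Tuple{\Rule, \UCSubs})$, produce $\RPCF(\R, \HeadChoice, \Rule)$. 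Prepending $\Tuple{\Rule, \UCSubs}$ gives a loaded sequence for $\Tuple{\R, \RuleDatabase{\Rule}}$ and \HeadChoice. The cyclic term $t$ appears in some $\Output_\HeadChoice(\Tuple{\RuleAux_k, \SubsAux_k})$; since $t$ is \Rule-cyclic, $f \in \EI{\Funs}{\Skolemise(\Rule)}$ for the outermost symbol $f$ of $t$, so $f$ is introduced only by a trigger with $\Rule$ itself — hence $\RuleAux_k = \Rule$, and the substitution of that trigger is injective. Truncating the sequence at index $k$ yields triggers $\Tuple{\Rule_0, \Subs_0} = \Tuple{\Rule, \UCSubs}, \Tuple{\Rule_1, \Subs_1}, \ldots, \Tuple{\Rule_n, \Subs_n}$ with $\Rule_0 = \Rule_n = \Rule$, loaded, every intermediate trigger \textit{uc}-unblockable, and $\Output_\HeadChoice(\Tuple{\Rule_n, \Subs_n})$ featuring a \Rule-cyclic term. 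For g-unblockability of $\Tuple{\Rule_0, \Subs_0} = \Tuple{\Rule, \UCSubs}$: a trigger with a generating rule whose output contains a \Rule-cyclic term — i.e. whose frontier substitution sends some frontier variable to a functional term — cannot be obsolete for the initial database $\RuleDatabase{\Rule}$, which contains only constants; more carefully, one argues (as the remark after Definition~\ref{definition:rpc} promises, ``we show later...'') that $\Tuple{\Rule, \UCSubs}$ being obsolete in the \HeadChoice-branch would block the creation of the \Rule-cyclic term, contradicting its presence in $\RPCF(\R, \HeadChoice, \Rule)$; I would make this precise by noting that if $\Tuple{\Rule, \UCSubs}$ were already obsolete in $\RuleDatabase{\Rule}$ nothing would be added, and otherwise it is applied and its output is in every branch.

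The last and main obstacle is the reversibility condition: I must check that the constant mapping $g_\TriggerSeq$ with $g_\TriggerSeq \circ \Subs_0 = \Subs_n$ is reversible for $\Skeleton{\R}{\Tuple{\Rule_i, g_\TriggerSeq^j \circ \Subs_i}}$ for all $1 \le i \le n$, $j \ge 0$. Condition~\ref{condition:reversibility1} (definedness on all constants of the skeleton) holds because $\RuleDatabase{\Rule}$'s constants are exactly the domain of $\UCSubs$, which is the domain of $g_\TriggerSeq$. Condition~\ref{condition:reversibility2} (injectivity on skeleton terms) is where the fourth restriction of \RPC pays off: since $\Subs_n$ is injective (being the substitution of a trigger with $\Rule$ in the generating list) and maps the frontier of $\Rule$ to distinct terms, $g_\TriggerSeq$ is injective on the relevant constants, and one lifts this to terms. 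Condition~\ref{condition:reversibility3} (no functional skeleton term equals a subterm of any $g_\TriggerSeq(c)$) requires that the triggers building $\RPCF(\R, \HeadChoice, \Rule)$ use substitutions without cyclic terms in their range — the first restriction — so that the depths stay controlled and the ``fresh'' functional material introduced after the first application of $\Rule$ cannot coincide with pre-existing skeleton terms; Examples~\ref{exp:revInjectivityNecessary} and the one following it illustrate exactly the failure modes this rules out. I expect verifying condition~\ref{condition:reversibility3} uniformly for all powers $j$ to be the most delicate bookkeeping; the key observation is that $g_\TriggerSeq$ strictly increases depth on its non-trivial constants, so iterating never folds deep terms back onto shallow ones. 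Once all five bullets of Definition~\ref{definition:cyclicity-prefix} are verified, Theorem~\ref{theorem:cyclicity-prefix} gives that $\TriggerSeq^\infty$ is a cyclicity sequence for $\Tuple{\R, \RuleDatabase{\Rule}}$ and \HeadChoice, hence \R never-terminates.
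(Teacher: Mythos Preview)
Your overall architecture is right: extract from $\RPCF(\R,\HeadChoice,\Rule)$ a finite loaded sequence ending in a trigger with $\Rule$ that outputs a $\Rule$-cyclic term, verify the five bullets of Definition~\ref{definition:cyclicity-prefix}, and invoke Theorem~\ref{theorem:cyclicity-prefix}. This is exactly the paper's route. But two of the bullets are not handled, and for one of them your sketch points in the wrong direction.

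\textbf{g-unblockability of $\Tuple{\Rule,\UCSubs}$.} Your argument ``if $\Tuple{\Rule,\UCSubs}$ were already obsolete in $\RuleDatabase{\Rule}$ nothing would be added, and otherwise it is applied and its output is in every branch'' fails. Even if the trigger is not obsolete for $\RuleDatabase{\Rule}$, the chase first saturates datalog rules (Definition~\ref{def:chase-tree}); after that, $\Tuple{\Rule,\UCSubs}$ may have become obsolete via some disjunct $j \neq \HeadChoice(\Rule)$, so it is never applied and $\Output_{\HeadChoice}(\Tuple{\Rule,\UCSubs})$ never enters the branch. The paper's actual argument is the missing idea: assume $\Tuple{\Rule,\UCSubs}$ is \emph{not} g-unblockable, take a chase tree where it becomes obsolete at some vertex $w$ on $\Branch(\CT,\HeadChoice)$ without its $\HeadChoice$-output ever appearing, and then \emph{transport} the whole path to $w$ through $g_{\TriggerSeq}$ and $h^{\textit{uc}}_{\Tuple{\Rule,g_{\TriggerSeq}\circ\UCSubs}}$ into the over-approximation $\mathcal{O}(\R,\HeadChoice,\Tuple{\Rule,g_{\TriggerSeq}\circ\UCSubs},h^{\textit{uc}})$. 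This shows $\Trigger_n = \Tuple{\Rule,g_{\TriggerSeq}\circ\UCSubs}$ is obsolete for its own over-approximation, contradicting that $\Trigger_n$ is \textit{uc}-unblockable. So g-unblockability of the first trigger is pulled back from \textit{uc}-unblockability of the last; nothing in your sketch sets up this transport.

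\textbf{Reversibility, conditions~\ref{condition:reversibility2} and~\ref{condition:reversibility3}.} ``Injectivity on constants lifts to terms'' and ``depths stay controlled'' are not the right mechanisms. Injectivity on constants does \emph{not} automatically give $g_{\TriggerSeq}(c) \neq g_{\TriggerSeq}(u)$ for a constant $c$ and a functional $u$ in the skeleton, and depth alone does not separate subterms of $g_{\TriggerSeq}(c)$ from images of functional skeleton terms. The paper's device is a structural invariant: because the extracted sequence yields the \emph{first} $\Rule$-cyclic term, every $g_{\TriggerSeq}(c)$ is free of \emph{nested} function symbols from $\Skolemise(\Rule)$; conversely, every functional $u$ in any relevant skeleton contains a subterm $f(\Vc)$ with $f \in \EI{\Funs}{\Skolemise(\Rule)}$ (this uses that a non-datalog trigger whose frontier maps only to constants is never \textit{uc}-unblockable), so $g_{\TriggerSeq}(u)$ \emph{does} contain nested $\Skolemise(\Rule)$-symbols. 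This dichotomy cleanly yields both~\ref{condition:reversibility2} (constant vs.\ functional case) and~\ref{condition:reversibility3}, uniformly for all powers $g_{\TriggerSeq}^{j}$. Your depth heuristic does not give this, and without it the reversibility check is open.
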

\begin{proof}
  By Definition~\ref{definition:rpc},
  there exists a finite trigger sequence
  $\TriggerSeq = \Tuple{\Rule, \UCSubs}, \Trigger_{1} = \Tuple{\Rule_{1}, \Subs_{1}}, \dots, \Trigger_{n} = \Tuple{\Rule_{n}, \Subs_{n}}$
  that yields a (first) \Rule-cyclic term in
  $\RPCF(\R, \HeadChoice, \Rule)$
  with
  $\Rule_{n} = \Rule$ and
  no cyclic term in the image of every substitution $\Subs_{i}$ for $1 \leq i \leq n$,
  and
  a constant mapping $g_{\TriggerSeq}$ for that we find $g_{\TriggerSeq} \circ \UCSubs = \Subs_{n}$.
  Furthermore, \TriggerSeq is loaded for $\Tuple{\R, \RuleDatabase{\Rule}}$ and \HeadChoice,
  and the triggers $\Trigger_{1}, \dots, \Trigger_{n}$ are \textit{uc}-unblockable.

  To prove that \TriggerSeq is a cyclicity-prefix for \R, \HeadChoice, and \Rule,
  it only remains to show
  that (A) $g_{\TriggerSeq}$ is reversible for $\Skeleton{\R}{\Tuple{\Rule_{i}, g_{\TriggerSeq}^{j} \circ \Subs_{i}}}$ for each $1 \leq i \leq n$ and $j \geq 0$
  and that (B) $\Tuple{\Rule, \UCSubs}$ is g-unblockable.

  (A):
  Considering Definition~\ref{def:reversible-constant-mapping}, we show conditions \eqref{condition:reversibility1}, \eqref{condition:reversibility2}, and \eqref{condition:reversibility3}.
  Since $\EI{\Constants}{\Skeleton{\R}{\Tuple{\Rule_{i}, g_{\TriggerSeq}^{j} \circ \Subs_{i}}}}$
  may only feature constants from $\RuleDatabase{\Rule}$, \eqref{condition:reversibility1} holds.
  For \eqref{condition:reversibility2} and \eqref{condition:reversibility3}, we make the following observations:

  The substitutions of the triggers in $\TriggerSeq$ do not feature cyclic terms. Hence, for every constant $c$ in $\RuleDatabase{\Rule}$, the term
  $g_{\TriggerSeq}(c)$ does not feature
  nested function symbols from $\Skolemise(\Rule)$.
  \footnote{Consider the rule $\Rule = A(x) \to \exists y, z. R(x, y) \land S(x, z)$.
  Then, $\SF{}{y}(\SF{}{z}(c))$ features nested function symbols from $\Skolemise(\Rule)$ but $\SF{}{w}(\SF{}{y}(d), \SF{}{z}(c))$ does not (assuming $w$ occurs in another rule).}
  We show that, for any functional term $t$ in $\Skeleton{\R}{\Tuple{\Rule_{i}, g_{\TriggerSeq}^{j} \circ \Subs_{i}}}$ (for any $j$ and $i$),
  the term $g(t)$ features nested function symbols from $\Skolemise(\Rule)$:
  Every non-datalog trigger without functional terms in frontier positions
  is not \textit{uc}-unblockable.
  Hence, $t$ must have a subterm of the form $f(\Vc)$ such that $f$ occurs in $\Skolemise(\Rule)$
  and $\Vc = \UCSubs(\Frontier(\Rule))$.
  Also, for some $x \in \Frontier(\Rule_{n})$,
  $g_{\TriggerSeq}(\UCSubs(x))$
  is a functional term from
  $\Output_{\HeadChoice}(\Tuple{\Rule, \UCSubs})$.
  Thus,
  $f(g_{\TriggerSeq}(\Vc)) \in \Subterms(g(t))$ features nested function symbols from $\Skolemise(\Rule)$.


  By Definition~\ref{definition:rpc}, $g_{\TriggerSeq} \circ \UCSubs$ is injective
  and in turn $g_{\TriggerSeq}$ is injective on the constants in $\RuleDatabase{\Rule}$.
  We show \eqref{condition:reversibility2} that $g_{\TriggerSeq}(t) \neq g_{\TriggerSeq}(s)$
  for every $t, s$ in $\Skeleton{\R}{\Tuple{\Rule_{i}, g_{\TriggerSeq}^{j} \circ \Subs_{i}}}$
  with $t \neq s$:
  If $t$ and $s$ are constants, then $g_{\TriggerSeq}(t) \neq g_{\TriggerSeq}(s)$ since $g_{\TriggerSeq}$ is injective.
  If $t$ is a constant and $s$ is functional (or vice versa),
  then $g_{\TriggerSeq}(s)$ features nested function symbols from $\Skolemise(\Rule)$
  and $g_{\TriggerSeq}(t)$ does not (or vice versa) by the above observations.
  If $t$ and $s$ are functional terms of the form $f(\Vt)$ and $h(\Vs)$, respectively,
  with $f \neq h$; then $g_{\TriggerSeq}(t) \neq g_{\TriggerSeq}(s)$ holds.
  If $t$ and $s$ are functional terms (of finite depth)
  of the form $f(t_1, \ldots, t_n)$ and $f(s_1, \ldots, s_n)$, respectively;
  then $t_i \neq s_i$ for some $1 \leq i \leq n$ since $t \neq s$ and we can recurse into one of the cases for $t_{i}, s_{i}$.

  For \eqref{condition:reversibility3}, consider $c \in \EI{\Constants}{\Skeleton{\R}{\Tuple{\Rule_{i}, g_{\TriggerSeq}^{j} \circ \Subs_{i}}}} \subseteq \EI{\Constants}{\RuleDatabase{\Rule}}$
  and some $s \in \Subterms(g_{\TriggerSeq}(c))$.
  If there was a functional term $u \in \Skeleton{\R}{\Tuple{\Rule_{i}, g_{\TriggerSeq}^{j} \circ \Subs_{i}}}$ with $g_{\TriggerSeq}(u) = s$,
  we obtain a contradiction from the above observations,
  as $s \in \Subterms(g_{\TriggerSeq}(c))$
  does not feature nested function symbols
  from $\Skolemise(\Rule)$
  but
  $g_{\TriggerSeq}(u) = s$ does.
  Thus, (A) holds.

  (B): In the remainder of the proof, we show that $\Tuple{\Rule, \UCSubs}$ is g-unblockable for $\Tuple{\R, \RuleDatabase{\Rule}}$ and \HeadChoice.
  Suppose for a contradiction that $\Tuple{\Rule, \UCSubs}$ is not g-unblockable.
  We obtain the contradiction by showing that $\Tuple{\Rule, g_{\TriggerSeq} \circ \UCSubs}$
  is not \textit{uc}-unblockable.

  There exists a chase tree $\CT = \Tuple{V, E, \LF, \LT}$ for $\Tuple{\R, \RuleDatabase{\Rule}}$ and \HeadChoice such that
  $\Output_{\HeadChoice}(\Tuple{\Rule, \UCSubs}) \nsubseteq \LF(u)$ for each $u \in \Branch(\CT, \HeadChoice)$.
  Note that $\Tuple{\Rule, \UCSubs}$ is loaded for $\LF(v)$ for every $v \in \Branch(\CT, \HeadChoice)$ since it is loaded for $\RuleDatabase{\Rule}$.
  There must exists a (first) $w \in \Branch(\CT, \HeadChoice)$ such that $\Tuple{\Rule, \UCSubs}$ is obsolete for $\LF(w)$.
  Consider the path $v_{0}, \dots, v_{m}$ in \CT with $v_{0}$ the root and $v_{m} = w$.
  Let $\Tuple{\RuleAux_{1}, \SubsAux_{1}}, \dots, \Tuple{\RuleAux_{m}, \SubsAux_{m}} = \LT(v_{1}), \dots, \LT(v_{m})$.
  We aim to show that $h(\Output_{\HeadChoice}(\Tuple{\RuleAux_{i}, h \circ g_{\TriggerSeq} \circ \SubsAux_{i}})) \subseteq O$ for every $1 \leq i \leq m$
  where $h = h_{\Tuple{\Rule, g_{\TriggerSeq} \circ \UCSubs}}^{\textit{uc}}$ and $O = \mathcal{O}(\R, \HeadChoice, \Tuple{\Rule, g_{\TriggerSeq} \circ \UCSubs}, h)$.
  Then, we find that $\Tuple{\Rule, g_{\TriggerSeq} \circ \UCSubs}$ is not \textit{uc}-unblockable, i.e. the desired contradiction.

  First, we find that
  $h(g_{\TriggerSeq}(\LF(v_{0})) = h(\BF(\Rule)(g_{\TriggerSeq} \circ \UCSubs)) \subseteq O$
  by making use of the triggers in \TriggerSeq:
  We have that $h(\Output_{\HeadChoice}(\Tuple{\Rule, \UCSubs})) = \Output_{\HeadChoice}(\Tuple{\Rule, \UCSubs}) \subseteq \BirthFacts{\R}{\Tuple{\Rule, g_{\TriggerSeq} \circ \UCSubs}}$.
  Also, $h(\RuleDatabase{\Rule})$ is contained in the set of all facts that can be defined using
  any predicate and constants from $\EI{\Constants}{\Skeleton{\R}{\Tuple{\Rule, g_{\TriggerSeq} \circ \UCSubs}}} \cup \{\star\}$.
  Since \TriggerSeq is loaded, we can now show that $h(\Output_\HeadChoice(\Tuple{\Rule_{i}, h \circ \Subs_{i}})) \subseteq O$ for every $1 \leq i \leq n$.
  It is important to realize that
  $\Output_{\HeadChoice}(\Tuple{\Rule, g_{\TriggerSeq} \circ \UCSubs}) \neq \Output_{\HeadChoice}(\Tuple{\Rule_{i}, h \circ \Subs_{i}})$
  by the fact that $\Trigger_{n}$ is the first trigger to yield a \Rule-cyclic term.


  Now,
  $h(\Output_{\HeadChoice}(\Tuple{\RuleAux_{i}, h \circ g_{\TriggerSeq} \circ \SubsAux_{i}})) \subseteq O$ for $1 \leq i \leq m$
  can be verified given that $\Output_{\HeadChoice}(\Tuple{\RuleAux_{i}, h \circ g_{\TriggerSeq} \circ \SubsAux_{i}}) \neq \Output_{\HeadChoice}(\Tuple{\Rule, g_{\TriggerSeq} \circ \UCSubs})$. The latter is the case since $\Output_{\HeadChoice}(\Tuple{\Rule, \UCSubs}) \nsubseteq \LF(w)$.
  Therefore, $\Tuple{\Rule, g_{\TriggerSeq} \circ \UCSubs}$ is not \textit{uc}-unblockable; contradiction.
\end{proof}

In practice, it is infeasible to compute \RPC membership because of the exponential number of head-choices.
While this does not influence the complexity bounds (see Theorem~\ref{theorem:cyclicityComplexities}),
this exponential effort manifests often in practice.
Instead, we check $\RPC_{s}$ that considers far fewer head-choices but still explores a meaningful portion of the search space
by using each head-disjunct of each rule at least once:

\begin{definition}
\label{definition:rpcs}
For some $i \geq 1$, let $\HeadChoice_i$ be the head-choice such that, for every rule $\Rule$:
If $i \leq \BranchFactor(\Rule)$, then $\HeadChoice_i(\Rule) = i$.
Otherwise, $\HeadChoice_i(\Rule) = \BranchFactor(\Rule)$.
For a rule set $\R$, let $\BranchFactor(\R)$ be the smallest number such that $\BranchFactor(\R) \geq \BranchFactor(\Rule)$ for every $\Rule \in \R$.

A rule set \R is $\RPC_s$ if there is some $1 \leq i \leq \BranchFactor(\R)$, some (generating) $\Rule \in \R$, and some \Rule-cyclic term that occurs in $\RPCF(\R, \HeadChoice_i, \Rule)$.
\end{definition}

By Definitions~\ref{definition:rpc} and \ref{definition:rpcs}, a rule set \R is \RPC if it is $\RPC_s$.
Therefore, the following follows from Theorem~\ref{theorem:rpc-cyclicity}:

\begin{theorem}
\label{theorem:rpcs-cyclicity}
If a rule set is $\RPC_s$, then it never-terminates.
\end{theorem}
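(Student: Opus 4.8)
The plan is to obtain Theorem~\ref{theorem:rpcs-cyclicity} as an immediate consequence of Theorem~\ref{theorem:rpc-cyclicity} by arguing that being $\RPC_s$ is just a syntactic restriction of being \RPC. First I would check that, for every $i \geq 1$, the map $\HeadChoice_i$ of Definition~\ref{definition:rpcs} is a genuine head-choice for \R in the sense of Definition~\ref{definition:head-choice}: for each rule \Rule, if $i \leq \BranchFactor(\Rule)$ then $\HeadChoice_i(\Rule) = i \in \{1, \ldots, \BranchFactor(\Rule)\}$, and otherwise $\HeadChoice_i(\Rule) = \BranchFactor(\Rule) \in \{1, \ldots, \BranchFactor(\Rule)\}$, so $\HeadChoice_i(\Rule)$ always names a valid head-disjunct.

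Next I would unfold the two definitions. Assume \R is $\RPC_s$. Then there are some $1 \leq i \leq \BranchFactor(\R)$, some generating rule $\Rule \in \R$, and some \Rule-cyclic term occurring in $\RPCF(\R, \HeadChoice_i, \Rule)$. Taking $\HeadChoice := \HeadChoice_i$ as witness head-choice, these are precisely the conditions of Definition~\ref{definition:rpc} for \R to be \RPC. Note that the fact-set operator $\RPCF$ is reused verbatim by Definition~\ref{definition:rpcs}: $\RPC_s$ differs from \RPC only in restricting the existential quantifier over head-choices from all head-choices to the finite family $\HeadChoice_1, \ldots, \HeadChoice_{\BranchFactor(\R)}$. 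Hence \R is \RPC.

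Finally, I would invoke Theorem~\ref{theorem:rpc-cyclicity}, which states that every \RPC rule set never-terminates, to conclude that \R never-terminates. There is essentially no obstacle: the only points requiring a line of care are the well-definedness of $\HeadChoice_i$ as a head-choice (handled above) and the observation that $\RPCF$ is literally the same operator in both definitions, so no re-derivation of the cyclicity-prefix machinery underlying Theorems~\ref{theorem:cyclicity-sequence}, \ref{theorem:cyclicity-prefix}, and \ref{theorem:rpc-cyclicity} is needed.
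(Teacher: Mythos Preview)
Your proposal is correct and matches the paper's approach exactly: the paper simply observes that $\RPC_s$ is \RPC with the head-choice quantifier restricted to the family $\HeadChoice_1, \ldots, \HeadChoice_{\BranchFactor(\R)}$, so $\RPC_s \Rightarrow \RPC$, and then invokes Theorem~\ref{theorem:rpc-cyclicity}. Your extra line verifying that each $\HeadChoice_i$ is a well-defined head-choice is a harmless elaboration of what the paper leaves implicit.
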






\subsubsection{Deterministic Restricted Prefix Cyclicity}

We introduce \emph{deterministic \RPC} as a less general version of \RPC; our goal here is to produce a notion that is similar to \RMFC \cite{RMFA} (see Section~\ref{section:related-work}).
It is therefore our baseline in the evaluation (see Section~\ref{section:evaluation}).
For example, the rule set $\{ \eqref{main1}, \eqref{main2} \}$ is \RPC but not \DRPC.

\begin{definition}
For a rule set \R and a deterministic rule $\Rule \in \R$, let $\DRPCF(\R, \Rule)$ be a fact set that includes the database $\RuleDatabase{\Rule}$, the set $\Output_1(\Tuple{\Rule, \UCSubs})$, and $\Output_1(\Trigger)$ for every deterministic \R-trigger $\Trigger = \Tuple{\RuleAux, \SubsAux}$ such that
\begin{itemize}
\item there are no cyclic terms in the range of \SubsAux,
\item the trigger \Trigger is loaded for $\DRPCF(\R, \Rule)$,
\item the trigger \Trigger is $\star$-unblockable for \R, and
\item the substitution \Subs is injective if $\RuleAux = \Rule$.
\end{itemize}

A rule set \R is \emph{deterministic restricted prefix-cyclic} (\DRPC) if there is some deterministic (generating) rule $\Rule \in \R$ and some \Rule-cyclic term that occurs in $\DRPCF(\R, \Rule)$.
\end{definition}

\begin{theorem}
\label{theorem:drpc-cyclicity}
If a rule set \R is $\DRPC$, it never-terminates.
\end{theorem}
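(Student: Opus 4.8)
The plan is to reduce $\DRPC$ to $\RPC$ and then invoke Theorem~\ref{theorem:rpc-cyclicity}, in the same spirit as the derivation of Theorem~\ref{theorem:rpcs-cyclicity} from Theorem~\ref{theorem:rpc-cyclicity}. So the real task is to show that every $\DRPC$ rule set is also $\RPC$. I would instantiate the head-choice parameter of $\RPC$ with $\HeadChoice_{1}$ from Definition~\ref{definition:rpcs}: since $\BranchFactor(\RuleAux) \geq 1$ holds for every rule, $\HeadChoice_{1}$ maps every rule to $1$, so $\Output_{\HeadChoice_{1}}(\Trigger) = \Output_{1}(\Trigger)$ for every trigger, and the definitions of $\RPCF(\R, \HeadChoice_{1}, \Rule)$ and $\DRPCF(\R, \Rule)$ speak about the very same trigger outputs.

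The crux is the inclusion $\DRPCF(\R, \Rule) \subseteq \RPCF(\R, \HeadChoice_{1}, \Rule)$ for the deterministic generating rule $\Rule$. Since $\DRPCF(\R, \Rule)$ is the minimal fact set closed under the generation conditions in its definition, it suffices to show that $\RPCF(\R, \HeadChoice_{1}, \Rule)$ is closed under those same conditions. Indeed, it contains $\RuleDatabase{\Rule}$ and $\Output_{1}(\Tuple{\Rule, \UCSubs}) = \Output_{\HeadChoice_{1}}(\Tuple{\Rule, \UCSubs})$, and if $\Trigger = \Tuple{\RuleAux, \SubsAux}$ is any deterministic $\R$-trigger with no cyclic terms in the range of $\SubsAux$ that is loaded for $\RPCF(\R, \HeadChoice_{1}, \Rule)$, $\star$-unblockable for $\R$, and has $\SubsAux$ injective whenever $\RuleAux = \Rule$, then by Lemma~\ref{lemma:deciding-unblockability} it is also \textit{uc}-unblockable for $\R$ and $\HeadChoice_{1}$; hence it satisfies exactly the side conditions under which the definition of $\RPCF(\R, \HeadChoice_{1}, \Rule)$ forces $\Output_{\HeadChoice_{1}}(\Trigger) = \Output_{1}(\Trigger)$ into the set, and since $\RPCF(\R, \HeadChoice_{1}, \Rule)$ is a fixed point of its own generation operator, that output is already contained in it. Thus $\RPCF(\R, \HeadChoice_{1}, \Rule)$ is closed, and minimality of $\DRPCF(\R, \Rule)$ gives the inclusion.

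To conclude: if \R is $\DRPC$, then by definition there is a deterministic generating $\Rule \in \R$ and a $\Rule$-cyclic term occurring in $\DRPCF(\R, \Rule)$; by the inclusion this term also occurs in $\RPCF(\R, \HeadChoice_{1}, \Rule)$, and since $\Rule$ is a generating rule, \R is $\RPC$ (witnessed by $\HeadChoice_{1}$ and $\Rule$). Theorem~\ref{theorem:rpc-cyclicity} then yields that \R never-terminates. I do not expect a genuine obstacle: the argument is entirely bookkeeping about how the two definitions line up, the only non-routine ingredients being Lemma~\ref{lemma:deciding-unblockability} (which upgrades $\star$-unblockability to \textit{uc}-unblockability for every head-choice) and the trivial facts that $\HeadChoice_{1}$ selects every rule's first (hence, for deterministic rules, only) disjunct and that $\DRPCF$ considers a subset of the triggers that $\RPCF$ considers. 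An alternative would be to extract a cyclicity prefix directly from the $\DRPC$ witness and invoke Theorem~\ref{theorem:cyclicity-prefix}, but this would essentially replay the proof of Theorem~\ref{theorem:rpc-cyclicity} and is the less economical route.
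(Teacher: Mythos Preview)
Your proposal is correct and follows essentially the same approach as the paper: both argue that $\DRPCF(\R, \Rule) \subseteq \RPCF(\R, \HeadChoice, \Rule)$ using Lemma~\ref{lemma:deciding-unblockability} to upgrade $\star$-unblockability to \textit{uc}-unblockability, and then invoke Theorem~\ref{theorem:rpc-cyclicity}. The paper states the inclusion for every head-choice (and notes in passing that $\DRPC$ even implies $\RPC_{s}$), whereas you instantiate with $\HeadChoice_{1}$ and spell out the closure argument more explicitly, but the substance is identical.
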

\begin{proof}
  By Lemma~\ref{lemma:deciding-unblockability},
  every $\star$-unblockable trigger is also \textit{uc}-unblockable for every head-choice.
  Therefore, we find $\DRPCF(\R, \Rule) \subseteq \RPCF(\R, \HeadChoice, \Rule)$
  for every deterministic rule $\Rule \in \R$
  and every head-choice \HeadChoice.
  Hence, if \R is \DRPC, it is \RPC (even $\RPC_{s}$) and the claim follows by Theorem~\ref{theorem:rpc-cyclicity}.
\end{proof}

\subsubsection{Complexity}

The complexity of checking cyclicity is dominated by the double-exponential number of (non-cyclic) terms that may occur
during the check.
That is, checking \RPC, $\RPC_s$, or \DRPC requires at most a double-exponential number of steps of which each is possible in double-exponential time.
Hardness follows similarly to \MFA \cite[Theorem 8]{MFA}:
We extend $\Sigma_{3}$ to $\Sigma_{4}$
by adding a fresh atom $P_{\RuleAux}(\Vy)$ to the head of every $\RuleAux \in \Sigma_{3}$
where $\Vy$ is the list of all body variables in $\RuleAux$.
By this, we make sure that unblockability does not interfere with the original proof idea.
The set $\Omega$ is then defined as $\Sigma_{4} \cup \{ \Rule = R(w, x) \land B(x) \to \exists y. R(x, y) \land A(y) \}$.
We find that $\Tuple{\Sigma_{4}, \{ A(a) \}} \models B(a)$ iff $\Omega$ is \RPC, $\RPC_s$, or \DRPC.

\begin{theorem}\label{theorem:cyclicityComplexities}
Checking $\text{(D)RPC}_\text{(s)}$ is \DoubleExpTime-complete.
\end{theorem}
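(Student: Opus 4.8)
The plan is to prove the upper and lower bounds separately, the latter by the reduction sketched just before the theorem.

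\textbf{Membership in \DoubleExpTime.} The first step is to bound the number of \emph{non-cyclic} $\R$-terms. Databases are function-free, so the only available function symbols are the polynomially many skolem symbols of $\Skolemise(\R)$, each of arity at most $\max_\Rule |\Frontier(\Rule)|$; and in a non-cyclic term no function symbol may repeat along a root-to-leaf path, so such terms have depth at most $|\EI{\Funs}{\Skolemise(\R)}|$, hence polynomial depth. A routine counting argument then shows that each non-cyclic term has at most exponential size and that there are at most a double-exponential number of them, whence any fact set over the predicates of \R and such terms has at most a double-exponential number of facts. In particular, $\RPCF(\R, \HeadChoice, \Rule)$ and $\DRPCF(\R, \Rule)$ are each computed by a monotone fixpoint of at most a double-exponential number of rounds. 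Next I would argue that each round costs at most double-exponential time: the candidate triggers (a rule together with a substitution of its body variables by non-cyclic terms) number at most a double-exponential, ``loaded'' is a cheap match, and deciding \textit{uc}/$\star$-unblockability of a candidate \Trigger (Definition~\ref{definition:decidable-unblockability}) amounts to computing the auxiliary set $\mathcal{O}(\R, [\HeadChoice,] \Trigger, h)$ of Definition~\ref{definition:over-approximation} and checking obsoleteness. That set ranges only over the constants in $\Skeleton{\R}{\Trigger} \cup \{\star\}$ --- at most exponentially many, since the skeleton's terms have polynomial depth --- together with the polynomially many fresh constants $c_f$; so it has at most exponential size, is computed by an exponential-length fixpoint, and obsoleteness against it is a homomorphism search solvable in exponential time, all well inside the double-exponential budget. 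For \RPC one additionally quantifies over head-choices; there are at most $\BranchFactor(\R)^{|\R|}$, i.e. exponentially many, and an exponential times a double-exponential is still a double-exponential. For $\RPC_s$ at most $\BranchFactor(\R)$ head-choices are examined and for \DRPC none, so the bound holds a fortiori.

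\textbf{\DoubleExpTime-hardness.} I would follow the sketch preceding the theorem. Take the rule set $\Sigma_3$ from the \MFA-hardness proof~\cite[Theorem 8]{MFA}, for which deciding $\Tuple{\Sigma_3, \{A(a)\}} \models B(a)$ is \DoubleExpTime-hard and which we may assume to be \MFA (so its skolem chase never produces a cyclic term) and to not use the predicate $R$. Let $\Sigma_4$ add to the head of every rule $\RuleAux \in \Sigma_3$ a fresh atom $P_\RuleAux(\Vy)$ with $\Vy$ the list of all body variables of $\RuleAux$; this preserves fact entailment over $R$-, $A$-, $B$-atoms and preserves \MFA-ness ($P_\RuleAux$ is a predicate, not a function symbol), and it makes every $\Sigma_4$-trigger $\star$-unblockable: a datalog trigger is so by definition, and a generating trigger $\Tuple{\RuleAux, \Subs}$ outputs the atom $P_\RuleAux(\Subs(\Vy))$ recording its whole body substitution, which by inspection of Definition~\ref{definition:over-approximation} cannot occur in $\mathcal{O}(\R, \Trigger, h^\star_\Trigger)$ (that set excludes \Trigger's own output), so \Trigger is not obsolete there. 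Put $\Omega = \Sigma_4 \cup \{\Rule\}$ with $\Rule = R(w,x) \land B(x) \to \exists y.\, R(x,y) \land A(y)$, a deterministic generating rule.

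Correctness is shown by (I) $\Tuple{\Sigma_4, \{A(a)\}} \models B(a)$ implies $\Omega$ is \DRPC, and (II) $\Omega$ is \RPC implies $\Tuple{\Sigma_4, \{A(a)\}} \models B(a)$; since $\DRPC \Rightarrow \RPC_s \Rightarrow \RPC$ (Definition~\ref{definition:rpcs} and the proof of Theorem~\ref{theorem:drpc-cyclicity}), this makes all three notions equivalent to the entailment, hence \DoubleExpTime-hard. For (I): from $\RuleDatabase{\Rule} = \{R(c_w,c_x), B(c_x)\}$ the trigger $\Tuple{\Rule, \UCSubs}$ contributes $R(c_x, t_1), A(t_1)$ with $t_1 = \SF{}{y}(c_w, c_x)$; the assumed entailment, together with unblockability and cyclic-term-freeness of the $\Sigma_4$-chase, means the finite deterministic $\star$-unblockable $\Sigma_4$-derivation of $B$ from $A(t_1)$ is reproduced inside $\DRPCF(\Omega, \Rule)$, so $B(t_1) \in \DRPCF(\Omega, \Rule)$; then $\Tuple{\Rule, [w/c_x, x/t_1]}$ is loaded, has a non-cyclic injective substitution, and is $\star$-unblockable --- here one checks that, $R$ being fresh and $t_1$ functional, no $\Rule$-trigger with second frontier term $t_1$ other than this one survives in the relevant over-approximation, so \Rule's head does not match --- hence its output $R(t_1, t_2), A(t_2)$ with $t_2 = \SF{}{y}(c_x, \SF{}{y}(c_w, c_x))$ lands in $\DRPCF(\Omega, \Rule)$ and $t_2$ is \Rule-cyclic. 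For (II): if the entailment fails then $B(t_1)$ is never derived ($B$ is produced only by $\Sigma_4$, i.e. $\Sigma_3$, derivations from $A$-atoms, and the only reachable such derivation starts at $A(t_1)$), so in any $\RPCF(\Omega, \HeadChoice, \Rule'')$ the $R$-chain initiated by the rule-database never extends twice, giving no \Rule-cyclic term; and for $\Rule'' \in \Sigma_4$, \MFA-ness of $\Sigma_4$ on every database (via the critical instance) precludes a $\Rule''$-cyclic term even with the $A$-atoms that \Rule may introduce.

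\textbf{Main obstacle.} The delicate point is the hardness direction: one must verify that the $P_\RuleAux$-bookkeeping makes \emph{every} $\Sigma_4$-trigger unblockable without otherwise disturbing the reduction, and --- more fragile --- that the added rule \Rule, which carries no bookkeeping, nonetheless has exactly the two triggers needed for a cyclicity prefix unblockable, which hinges on $R$ being fresh and on $t_1$ being functional so that no spurious $\Rule$-trigger survives in the over-approximation. Carrying out these over-approximation computations precisely, and confirming that \MFA-ness of $\Sigma_4$ over arbitrary databases rules out every non-\Rule cyclicity, is where the real work lies; the membership bound is comparatively routine exponent-tracking.
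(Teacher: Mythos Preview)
Your approach is the same as the paper's, both for membership and for hardness. One step is misstated, though: the claim that \emph{every} $\Sigma_4$-trigger is $\star$-unblockable is false. Take a generating $\Sigma_4$-trigger $\Trigger = \Tuple{\RuleAux, \Subs}$ whose substitution maps all body variables to constants; then $P_\RuleAux(\Subs(\Vy))$ already lies in the critical-instance part of $\mathcal{O}(\Omega, \Trigger, h^\star_\Trigger)$ (condition~(1) of Definition~\ref{definition:over-approximation}), as does a constant witness for every existential head atom, so $\Trigger$ is obsolete there. Your justification ``that set excludes $\Trigger$'s own output'' only covers condition~(3), not condition~(1). The paper closes this gap by first arguing that every $\Sigma_4$-trigger \emph{that becomes loaded in the $\text{(D)RPC}_\text{(s)}$ construction for $\Rule$} carries a functional term in its substitution---because $R$ is fresh and $B$ occurs only in $\Sigma_4$-heads, the only usable seed for $\Sigma_4$-derivations is $A(t_1)$ with $t_1$ functional---and \emph{then} concluding that such triggers are $\star$-unblockable, since a functional skeleton term cannot be produced by $h^\star$ from anything else and hence $P_\RuleAux(\Subs(\Vy))$ is genuinely absent from $\mathcal{O}$. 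Your direction~(I) only ever uses triggers in that derivation from $A(t_1)$, so the plan still goes through once you tighten the unblockability claim to loaded triggers with a functional term in the image. A cosmetic side point: $\Frontier(\Rule) = \{x\}$, so $\SF{}{y}$ is unary and $t_1 = \SF{}{y}(c_x)$, $t_2 = \SF{}{y}(\SF{}{y}(c_x))$.
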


\section{Related Work}
\label{section:related-work}

Our main goal in this paper is to develop very general cyclicity notions for the disjunctive restricted chase.
To the best of our knowledge, the only such existing notion is \emph{restricted model faithful cyclicity} (\RMFC), which was introduced by \citeauthor{RMFA} in \shortcite{RMFA}.
While trying to extend \RMFC, 
we noticed that the proof of Theorem~11 in \shortcite{RMFA},
\footnote{Henceforth, we simply use \shortcite{RMFA} as an abbreviation of \cite{RMFA}.} 
which states that \RMFC rule sets do not terminate, is incorrect; correctness of the theorem remains open.

\begin{example}
\label{example:problem-rmfc-1}
Consider the rule set $\R = \{(\ref{rule:ce1}\text{--}\ref{rule:ce6})\}$.
\begin{align}
\FP{Cl}_1(x) &\wedge \FP{Cl}_2(y) \to \exists u . \FP{Red}(x, u) \wedge \FP{Red}(y, u) \label{rule:ce1} \\
\FP{Cl}_1(x) &\wedge \FP{Red}(x, z) \to \exists v . \FP{Gr}(x, v) \wedge \FP{Blu}(z, v) \label{rule:ce2} \\
\FP{Red}(y, z) &\wedge \FP{Blu}(z, w) \wedge \FP{Gr}(x, w) \to \FP{Gr}(y, y)  \label{rule:ce3} \\
\FP{Red}(y, z) &\wedge \FP{Blu}(z, w) \wedge \FP{Gr}(x, w) \to \FP{Blu}(z, y) \label{rule:ce4} \\
\FP{Red}(y, z) &\wedge \FP{Blu}(z, w) \wedge \FP{Gr}(x, w) \to \FP{Cl}_1(y) \label{rule:ce5} \\
\FP{Cl}_2(y) &\wedge \FP{Gr}(y, w) \to \FP{Cl}_2(w) \label{rule:ce6}
\end{align}

By Definition~11 in \shortcite{RMFA}, the rule set \R is \RMFC because the fact set $\F_\eqref{rule:ce1}$ features a \eqref{rule:ce1}-cyclic term.
As per the proof of Theorem~11 in \shortcite{RMFA}, the chase of \Tuple{\R, \Instance_\eqref{rule:ce1}} should ``contain infinitely many applications of \eqref{rule:ce1}''.
This is not the case; in fact, the result of the only chase tree of \Tuple{\R, \Instance_\eqref{rule:ce1}} is the set $\{ \F \}$ of fact sets where:
\begin{align*}
\F = &\{\FP{Cl}_1(c_x), \FP{Cl}_2(c_y), \FP{Red}(c_x, t), \FP{Red}(c_y, t), \FP{Gr}(c_x, s), \\
&\FP{Blu}(t, s), \FP{Gr}(c_x, c_x), \FP{Blu}(t, c_x), \FP{Gr}(c_y, c_y), \FP{Blu}(t, c_y)\}
\end{align*}
In the above, $t = f_{1, u}^\eqref{rule:ce1}(c_x, c_y)$ and $s = f_{1, v}^\eqref{rule:ce2}(c_x, t)$.
\end{example}

The problem stems from issues with Lemma~10 in \shortcite{RMFA}, which states that some triggers will eventually be applied if they are loaded for some vertex in the chase.
\begin{example}
\label{example:problem-rmfc-2}
By Definition~10 in \shortcite{RMFA}, a trigger such as $\Trigger = \Tuple{\eqref{rule:ce2}, [x / c_{y}, z / f_{1, u}^\eqref{rule:ce1}(c_x, c_y)]}$ with $c_x, c_y \in \Constants$ is unblockable for the rule set $\R = \{(\ref{rule:ce1}\text{--}\ref{rule:ce6})\}$.
One can verify that Lemma~10 in \shortcite{RMFA} does not hold for this trigger and the KB $\Tuple{\R, \{\FP{Cl}_1(c_x), \FP{Cl}_2(c_y)\}}$.
To do so, simply note that this trigger is loaded for the fact set \F defined at the end of Example~\ref{example:problem-rmfc-1}; however, $\F$ does not include $\Output_1(\Trigger)$.
Also, note that \Trigger is not \textit{uc}/$\star$-unblockable.
\end{example}

We have sought to ``repair'' \RMFC by introducing \DRPC.
We believe that both coincide for most real-world rule sets.

Another point of reference for us is our previous work \cite{DMFA}, where we have introduced \emph{Disjunctive Model Faithful Cyclicity (\DMFC)} for the (disjunctive) skolem chase.
We reuse many key ideas (also for the proofs) from this work, e.g. the main results for unblockability and reversiblity.
A necessary but straightforward change is the definition of obsoleteness.
While the idea of cyclicity sequences and prefixes was used in the proofs in spirit, a proper formalisation had not been presented.
Furthermore, \textit{uc}-unblockability was not considered for \DMFC.
Let us also stress that a cyclicity notion for the skolem chase is not a sufficient condition for restricted non-termination.
There are (many) rule sets that terminate for the restricted chase but not for the skolem chase.

\section{Evaluation}
\label{section:evaluation}

We have made available all evaluation materials online\footnote{\url{https://doi.org/10.5281/zenodo.8005904} \citeauthor{kr2023EvaluationMaterial}} including source code, rule sets, result files, and scripts used to obtain the counts.
In our experiments, we make use of a well-known sufficient condition for restricted chase termination to obtain an upper bound for the cyclicity notions.
\begin{definition}
A term is \emph{$k$-cyclic} for some $k \geq 1$ if it features $k + 1$ nested occurrences of the same function symbol.
For instance, $f(f(a))$ is 1-cyclic but $g(f(a), f(b))$ is not.

A rule set \R is $\RMFA_k$ for some $k \geq 1$ if there are no $k$-cyclic terms in the fact set $\RMFAF(\R)$, which is introduced in Definition~7 of \cite{RMFA}.
\end{definition}
\begin{theorem}
  $\RMFA_{k}$ rule sets (with $k \geq 1$) terminate.
\end{theorem}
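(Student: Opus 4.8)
The plan is to lift the soundness proof of ordinary $\RMFA$ from \cite{RMFA} — which is exactly the case $k=1$, since a term is ``cyclic'' in the sense of \cite{RMFA} precisely when it is $1$-cyclic here — to arbitrary $k$. First I would recall from \cite{RMFA} that the procedure computing $\RMFAF(\R)$ always halts, so $\RMFAF(\R)$ is a finite fact set mentioning only finitely many terms; this is needed for the statement to make sense. The central ingredient is the over-approximation lemma of \cite{RMFA}: for every KB $\Tuple{\R, \D}$, every chase tree $\CT$, every branch $B$ of $\CT$, and every fact in $\bigcup_{v \in B}\LF(v)$, there is a ``folding'' term map $h$ sending this fact into $\RMFAF(\R)$. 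The key additional observation I would establish is that this folding is \emph{nesting-monotone}: for every function symbol $f$ and every chase term $t$, the maximal number of nested occurrences of $f$ along a branch of $h(t)$ is at least that along a branch of $t$, because the folding only ever replaces an already sufficiently nested subterm by a canonical representative with no smaller nesting count. (If one insists that $\RMFAF(\R)$ is literally the object of \cite{RMFA}, the same role is played by the weaker fact that a $k$-cyclic term appearing in an actual chase forces a $1$-cyclic — hence recorded — term in $\RMFAF(\R)$, combined with re-running the depth bound below with threshold $k$.)

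From here the argument is routine. Suppose $\R$ is $\RMFA_k$, i.e.\ $\RMFAF(\R)$ has no $k$-cyclic term. By nesting-monotonicity of the folding, no fact set occurring in any chase tree of any KB with $\R$ contains a $k$-cyclic term either. Hence, along any root-to-leaf path inside any term that occurs in such a chase, each of the finitely many function symbols in $\EI{\Funs}{\Skolemise(\R)}$ appears at most $k$ times, so by a pigeonhole argument every such term has depth at most $k \cdot \vert \EI{\Funs}{\Skolemise(\R)} \vert$. Since the restricted chase never introduces new constants, only finitely many terms — and thus only finitely many facts over the predicates and constants of $\R$ and the input database — can ever appear. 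Finally, every trigger applied in a chase tree is not obsolete for the current fact set, so its output is not already contained in that set and each application strictly enlarges the fact set; as these sets are uniformly bounded, every branch of every chase tree is finite, and since chase trees are finitely branching, K\"onig's lemma yields that every chase tree is finite. Thus $\R$ terminates.

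The main obstacle I anticipate is the nesting-monotonicity claim in the second step: one has to inspect how the RMFA over-approximation represents Skolem terms and verify that its folding, tuned in \cite{RMFA} to the $k=1$ termination threshold, does not collapse nested occurrences of a function symbol before $k{+}1$ of them have accumulated — or, if $\RMFAF(\R)$ is kept fixed, that $k$-cyclicity-freeness of $\RMFAF(\R)$ is a genuine strengthening, monotone in $k$, whose soundness follows by the depth-bound argument above. The remaining pigeonhole and K\"onig's-lemma steps are standard and mirror the corresponding parts of the $\RMFA$ proof.
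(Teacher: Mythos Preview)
Your proposal is correct and essentially the same as the paper's approach, which consists solely of the remark that the result ``follows from the proof of Theorem~7 in \cite{RMFA}.'' Your worry about nesting-monotonicity is unfounded: the over-approximation homomorphism in that proof replaces database constants by $\star$ and leaves the Skolem-function structure of terms intact, so a $k$-cyclic term in any restricted chase maps to a $k$-cyclic term in $\RMFAF(\R)$, after which your pigeonhole and K\"onig arguments go through verbatim.
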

The above result follows from the proof of Theorem~7 in \cite{RMFA}.

\subsubsection{Test Suite}
We consider rule sets from the evaluation of \cite{DMFA}, which were obtained from OWL ontologies via normalization and translation; see Section~6 in \cite{MFA} for more details.
OWL axioms with ``at-most restrictions'' and ``nominals'' are dropped because their translation requires the use of equality.
The ontologies come from the Oxford Ontology Repository (\textsf{OXFD})\footnote{\url{https://www.cs.ox.ac.uk/isg/ontologies/}} and the Manchester OWL Corpus (\textsf{MOWL}) \cite{mowl-corpus}.
We ignore rule sets without generating rules since these are trivially terminating.

\subsubsection{Results}
For every rule set \R in our test suite; we checked if $\R$ is $\RMFA_2$, \DRPC, and $\RPC_s$ using our implementations;
we ran each check with a 4h timeout on a cloud instance with 8 threads and 32GB of RAM (comparable to a modern laptop).
We present our results in Table~\ref{table:results}.
We split the rule sets that are purely deterministic ($\land$) from the ones containing disjunctions ($\lor$).
We further split by the number of generating rules $\#\exists$ and present the total number of rule sets $\#$ for each bucket.
For example, in the third row of the table, we indicate that there are 27 deterministic rule sets in \textsf{OXFD} with at least 20 and at most 99 generating rules of which 23 are $\RMFA_{2}$, 2 are \DRPC, and 3 are $\RPC_s$
When conidering $\RMFA_{2}$ together with $\DRPC$ or $\RPC_{s}$, the percentages of rule sets that cannot be characterised as either terminating or non-terminating
drop from \DRPC to $\RPC_{s}$. For \textsf{MOWL} $\land$, \textsf{OXFD} $\lor$, and \textsf{MOWL} $\lor$, the drops are
from $5\%$ to $3\%$, $37\%$ to $6\%$, and $45\%$ to $5\%$, respectively; for \textsf{OXFD} $\land$ the percentage is around $21\%$ for both.
Our improvements are significant on the datasets with disjunctions; for these, \RPC is considerably more general than \DRPC, (which we introduced as a replacement for \RMFC).

While many of the non-classified rule sets simply result from timeouts, there are $38$ rule sets in \textsf{OXFD} for which both $\RMFA_{2}$ and $\DRPC$ finished without capturing the rule set.
Analogously, with $\RPC_{s}$, we find $7$ rule sets.
For \textsf{MOWL} the numbers are $1505$ and $110$.
This indicates that there is still room for improvement on the theoretical side but also that timeouts are indeed a big issue, which happens often when we consider large datasets.

\begin{table}
\begin{tabular}{ c | c | c || c || c | c }
    & \#$\exists$ & \# tot. & $\RMFA_{2}$ & \DRPC & $\RPC_s$ \\
    \midrule

    \multirow{4}{*}{\rotatebox[origin=c]{90}{\textsf{OXFD} $\land$}}


    & 1--19 & 58 & 58 & 0 & 0+0 \\
    & 20--99 & 27 & 23 & 2 & 2+1 \\
    & 100--999 & 109 & 61 & 8 & 8+1 \\
    & \bf 1--999 & \bf 194 & \bf 142 & \bf 10 & \bf 10+2 \\

    \midrule

    \multirow{4}{*}{\rotatebox[origin=c]{90}{\textsf{MOWL} $\land$}}


    & 1--19 & 1139 & 866 & 239 & 239+12 \\
    & 20--99 & 269 & 228 & 27 & 27+5 \\
    & 100-999 & 363 & 271 & 46 & 46+21 \\
    & \bf 1--999 & \bf 1771 & \bf 1365 & \bf 312 & \bf 312+38\\

    \midrule
    \midrule

    \multirow{4}{*}{\rotatebox[origin=c]{90}{\textsf{OXFD} $\lor$}}


    & 1--19 & 37 & 32 & 0 & 0+5 \\
    & 20--99 & 18 & 4 & 7 & 7+7 \\
    & 100--999 & 147 & 8 & 13 & 13+20 \\
    & \bf 1--999 & \bf 102 & \bf 44 & \bf 20 & \bf 20+32 \\

    \midrule

    \multirow{4}{*}{\rotatebox[origin=c]{90}{\textsf{MOWL} $\lor$}}


    & 1--19 & 1361 & 806 & 48 & 48+405 \\
    & 20--99 & 894 & 196 & 171 & 171+496 \\
    & 100-999 & 1150 & 500 & 136 & 136+470 \\
    & \bf 1--999 & \bf 3405 & \bf 1502 & \bf 355 & \bf 355+1371
\end{tabular}
\caption{Restricted Chase Termination: Generating Rule Sets}\label{table:resultsRestricted}
\label{table:results}
\end{table}

\section{Conclusions and Future Work}
\label{section:conclusions}

We make three tangible contributions:
\FirstItem We define \RPC; a very general cyclicity notion tailored for rule sets with disjunctions.
\SecondItem We discovered problems with \RMFC and defined \DRPC as a ``repaired'' version of this notion.
\ThirdItem We present an evaluation to demonstrate the usefulness of our work.
Beyond these three, we believe that our efforts provide a framework for interesting future work.

\subsubsection{Extending Cyclicity Notions}
Despite the fact that \RPC is more general than existing criteria, there are many rule sets in our evaluation that remain open; that is, rule sets cannot be characterised as terminating or non-terminating.

Our work provides three different main strategies to achieve possible extensions.
The first one is to produce ``weaker'' over-approximations and thus a more general strategy to detect g-unblockability;
even g-unblockability itself can be relaxed.
The second is to generalize cyclicity prefixes; perhaps by checking loadedness from slightly different databases.
For instance, the rule set \R from Example~\ref{example:problem-rmfc-1} is neither \DRPC  nor \RPC (as intended)
but actually it is never-terminating; note that $\Tuple{\R, \{\FP{Cl}_{1}(c), \FP{Cl}_{2}(c)\}}$ does not admit finite chase trees.
Thus, even correctness of \RMFC remains an open problem.
The third one is to develop more comprehensive search strategies to find cyclicity prefixes; for instance, we can relax the condition in the first item of Definition~\ref{definition:rpc} to look a bit further.

\subsubsection{Explaining Cyclicity}

In many real-world use-cases, the existence of infinite universal models highlights a modelling mistake.
We can use \RPC and \DRPC as methods to explain the loss of termination.
For instance, a cyclicity prefix as defined in Section~\ref{section:cyclicity-prefixes} constitutes a small and clear explanation of one way of loosing termination.
In the future, we aim to automatically compute minimal sets of rules that can be removed (or added!) to deactivate a cyclicity prefix.

\section*{Acknowledgements}

Lukas is / has been funded
by Deutsche Forschungsgemeinschaft (DFG, German Research Foundation) in project 389792660 (TRR 248, Center for Perspicuous Systems),
by the Bundesministerium für Bildung und Forschung (BMBF, Federal Ministry of Education and Research) under European ITEA project 01IS21084 (InnoSale, Innovating Sales and Planning of Complex Industrial Products Exploiting Artificial Intelligence),
by BMBF and DAAD (German Academic Exchange Service) in project 57616814 (SECAI, School of Embedded and Composite AI),
and by the Center for Advancing Electronics Dresden (cfaed).

David is funded by the ANR project CQFD (ANR-18-CE23-0003).

\bibliographystyle{kr}
\bibliography{main}

\begin{tr}
    \newpage
    \appendix
    \section{Proof of Theorem~\ref{theorem:cyclicity-sequence}}

We ellaborate on the last part of the proof:

\begin{proof}[Proof (Last part extended).]
  To prove that $\F(\CT, \HeadChoice)$ is infinite,
  we verify that (A) $\bigcup_{i \geq 0} \F_i(\KB, \HeadChoice, \TriggerSeq)$ is infinite and
  that (B) $\F_i(\KB, \HeadChoice, \TriggerSeq) \subseteq \F(\CT, \HeadChoice)$ for every $i \geq 0$.
  \begin{itemize}
    \item Claim~(A) follows from the fact that \TriggerSeq is growing.
          This requirement implies that, for every $i \geq 0$, there is some $j > i$ such that $\F_i(\KB, \HeadChoice, \TriggerSeq) \subset \F_j(\KB, \HeadChoice, \TriggerSeq)$.
    \item We show Claim~(B) via induction on $i \geq 0$.
          The base case holds since $\F(\CT, \HeadChoice)$ includes $\D = \LF(v_{1})$.
          Regarding the induction step, consider some $i \geq 1$ and assume that the induction hypothesis holds; that is, that $\F(\CT, \HeadChoice)$ includes $\F_{i-1}(\KB, \HeadChoice, \TriggerSeq)$.
          Therefore, there is some $j \geq 1$ such that $\F_{i-1}(\KB, \HeadChoice, \TriggerSeq) \subseteq \LF(v_j)$ and hence, $\Trigger_i$ is loaded for $\LF(v_j)$.
          Since $\Trigger_i$ is g-unblockable, $\Output_{\HeadChoice}(\Trigger_i) \subseteq \LF(v_k)$ for some $k \geq 1$ and hence, $\F_i(\KB, \HeadChoice, \TriggerSeq) \subseteq \F(\CT, \HeadChoice)$ since $\LF(v_k) \subseteq \F(\CT, \HeadChoice)$.
  \end{itemize}
\end{proof}

\section{Proof of Theorem~\ref{theorem:g-unblockability-undecidable}}

\begin{proof}
We present a reduction from the undecidable problem of checking if a deterministic KB $\Tuple{\R, \D}$ entails a fact $\FP{P}(\vec{c})$ \cite{DBLP:conf/icalp/BeeriV81}.
Consider the head-choice $\HeadChoice_1$ that maps all rules to 1, and the rule set $\R' = \R \cup \{\Rule\}$ where $\Rule = \FP{P}(\vec{x}) \to \exists \vec{y} . \FP{P}(\vec{y})$; note that \R and $\R'$ are equivalent since \Rule is tautological.
We show that $\Tuple{\R, \D} \models \FP{P}(\vec{c})$ iff $\Trigger = \Tuple{\Rule, [\vec{x} / \vec{c}]}$ is not g-unblockable for $\R'$ and $\HeadChoice_1$:

The trigger $\Trigger$ is obsolete for a fact set when it is loaded.
Hence, if $\Tuple{\R, \D}$ entails $\FP{P}(\vec{c})$, then \Trigger is not g-unblockable for $\Tuple{\R', \D}$ and $\HeadChoice_1$.
If $\Tuple{\R, \D} \not\models \FP{P}(\vec{c})$,
then \Trigger is never loaded in any fact-label of a chase tree of $\Tuple{\R', \D}$ and therefore trivially g-unblockable.
\end{proof}

\section{Proof of Lemma~\ref{lemma:star-uc-over-approximations}}

\begin{proof}[Proof (Second condition of Definition~\ref{def:semantic-over-approximation})]
  Consider any $u \in \Branch(\CT, \HeadChoice)$
  in any
  chase tree $\CT = \Tuple{V, E, \LF, \LT}$ of any KB $\Tuple{\R, \D}$.
  Assuming $\Output_{\HeadChoice}(\Trigger) \nsubseteq \LF(u)$,
  we prove $h(\LF(u)) \subseteq \mathcal{O}(\R, \HeadChoice, \Trigger, h)$.

  Consider the path $u_{1}, \dots, u_{n}$ in \CT with the root $u_{1}$ and $u_{n} = u$.
  We show $h(\LF(u_i)) \subseteq \mathcal{O}(\R, \HeadChoice, \Trigger, h)$
  for every $1 \leq i \leq n$ via induction:
  The base case with $i = 1$ holds since the facts in $h(\D)$ are contained in the facts defined 
  by \eqref{condition:crit-inst} in Definition~\ref{definition:over-approximation}.
  For the induction step, consider $i \geq 2$:
  By induction hypothesis,
  $h(\LF(u_{i-1})) \subseteq \mathcal{O}(\R, \HeadChoice, \Trigger, h)$
  and hence, $\Tuple{\RuleAux, h \circ \SubsAux}$ is loaded
  for $\mathcal{O}(\R, \HeadChoice, \Trigger, h)$ where $\LT(u_i) = \Tuple{\RuleAux, \SubsAux}$.
  Since $\Output_\HeadChoice(\Trigger) \nsubseteq \LF(u)$:
  $\Output_{\HeadChoice}(\LT(u_{i})) \neq \Output_{\HeadChoice}(\Trigger)$.
  Hence, we also have $\Output_{\HeadChoice}(\Tuple{\RuleAux, h \circ \SubsAux}) \neq \Output_{\HeadChoice}(\Trigger)$.
  By Def.~\ref{definition:over-approximation}:
  $h(\Output_\HeadChoice(\Tuple{\RuleAux, h \circ \SubsAux})) \subseteq \mathcal{O}(\R, \HeadChoice, \Trigger, h)$.
  Therefore, $h(\LF(u_i)) \subseteq \mathcal{O}(\R, \HeadChoice, \Trigger, h)$ holds.
\end{proof}

\section{Proof of Lemma~\ref{lem:unblk-propagates}}

We elaborate on the first and second claims made in the main part of the proof.

\begin{proof}[Proof (First claim extended).]
We show that $g^{-1}(\FAux)$ is a subset of $\F$.
  Since $g^{-1}(\FAux') \subseteq \F$ follows trivially,
  we only have to show that
  $g^{-1}(\BirthFacts{\R}{t}) \subseteq \F$ for every
  $t \in \EI{\Terms}{\FAux}$; we do so via induction over the structure of terms.
  If $t$ is a constant, then $g^{-1}(\BirthFacts{\R}{t}) = \emptyset$; hence, the base case trivially holds.
  Regarding the induction step, consider a term $t$ that is of the form $\SF{\RuleAux}{\ell, y}(\Vs)$:

  \begin{enumerate}[a.]
    \item By ind.-hyp.: $g^{-1}(\BirthFacts{\R}{s}) \subseteq \F$ for every $s \in \Vs$.
    \item Let $\Vz$ be the list of existentially quantified variables
          in $\HF_\ell(\RuleAux)$.
          Let \SubsAux be a substitution with $\Frontier(\RuleAux)\SubsAux = \Vs$.
          Moreover, let $H = \HF_\ell(\RuleAux)\SubsAux$.
    \item By definition:
          $\BirthFacts{\R}{t} = H \cup \bigcup_{s \in \Vs} \BirthFacts{\R}{s}$.
    \item By (a) and (c): We only need to show that $g^{-1}(H) \subseteq \F$
          to verify the induction step.
          In fact, $g^{-1}(H) \subseteq \F$ follows from (f), (g), (h), and (i),
          which amount to a comprehensive case-by-case analysis.
    \item We observe:
          If $g^{-1}(\SF{\RuleAux}{\ell, z}(\Vs))$ is functional
          for some $z \in \Vz$,
          then $g^{-1}(\SF{\RuleAux}{\ell, z'}(\Vs)) = \SF{\RuleAux}{\ell, z'}(g^{-1}(\Vs))$
          for each $z' \in \Vz$.
    \item We show that $g^{-1}(H) \subseteq \F$ if $g^{-1}(t)$ is a functional term:
          In this case, $g^{-1}(H) = g^{-1}(\HF_\ell(\RuleAux)\SubsAux) = \HF_\ell(\RuleAux)(g^{-1} \circ \SubsAux) \subseteq \F$
          follows directly from (e).
    \item We show that $g^{-1}(H) \subseteq \F$ if $g^{-1}(t) \in \Constants \setminus \{\star\}$:
          If $g^{-1}(t) \in \Constants \setminus \{\star\}$,
          then $g^{-1}(\SF{\RuleAux}{\ell, z}(\Vs))$
          is a constant for every $z \in \Vz$ by (e).
          Since $g$ is reversible for $\Skeleton{\R}{\Trigger}$,
          $g^{-1}(s)$ is also a constant (possibly $\star$)
          for every $s \in \Vs$.
          Therefore, $g^{-1}(H) \subseteq \F' \subseteq \F$.
    \item If $g^{-1}(t) = \star$ and $g^{-1}(t')$ is a constant (or $\star$)
          for every $t' \in \EI{\Terms}{H}$, then $g^{-1}(H) \subseteq \F' \subseteq \F$.
    \item We show that assuming $g^{-1}(t) = \star$
          and $g^{-1}(t') \notin \Constants$
          for some $t' \in \EI{\Terms}{H}$ results in a contradiction:
          Note that $t'$ is necessarily functional because $g^{-1}(t')$ is.
          By (e), $t'$ can only occur in \Vs.
          Therefore, we have that, $t' \neq t$ is a subterm of $t$ such that $g^{-1}(t')$ is functional.

          At the same time, for $t$ to occur in $\Skeleton{\R}{\Tuple{\Rule, g \circ \Subs}}$,
          there needs to be a constant $c$ that occurs in the image of \Subs restricted to $\Frontier(\Rule)$
          such that $t$ occurs in $\BirthFacts{\R}{g(c)}$.

          Suppose for a contradiction that no such constant exists,
          i.e. there exists a functional term $u$ that occurs in the image of \Subs restricted to $\Frontier(\Rule)$
          such that $t$ occurs in $\BirthFacts{\R}{g(u)}$ but $t$
          does not occur in $\BirthFacts{\R}{g(u')}$
          for any subterm $u'$ of $u$ with $u' \neq u$.
          Since $g^{-1}(t)$ is not functional, $t$ must occur in $\BirthFacts{\R}{q}$ for a subterm $q$ of $g(u)$ with $q \neq g(u)$ by (e).
          But then, there exists a subterm $u'$ of $u$ with $u' \neq u$
          that occurs in the image of \Subs restricted to $\Frontier(\Rule)$
          with $g(u') = q$ since $u$ is functional.
          Since $t$ occurs in $\BirthFacts{\R}{g(u')}$,
          we obtain the desired contradiction and know that a constant $c$ of the desired form must exist.

          But then for $c \in \EI{\Constants}{\Skeleton{\R}{\Rule, \Subs}}$,
          we have $t' \in \Subterms(g(c))$ and
          there is a functional term $g^{-1}(t') \in \Skeleton{\R}{\Rule, \Subs}$
          with $g(g^{-1}(t')) = t'$, which contradicts reversibility of $g$.
  \end{enumerate}
\end{proof}

\begin{proof}[Proof (Second claim extended).]
Namely, we show that $g^{-1}(\mathcal{O}(\R, [\HeadChoice,] \Tuple{\Rule, g \circ \Subs}, h_{\FAux})) \subseteq \mathcal{O}(\R, [\HeadChoice,] \Trigger, h_{\F})$.
Consider a finite list of triggers $\Trigger_{1}, \dots, \Trigger_{m}$
  such that all of the following hold:

  \begin{itemize}
    \item $\mathcal{O}(\R, [\HeadChoice,] \Tuple{\Rule, g \circ \Subs}, h_{\FAux}) = \FAux \cup \bigcup_{i=1}^{m} h_{\FAux}(O_{i})$
          with $O_{i} = \bigcup \Output(\Trigger_{i})$ [resp. $O_{i} = \Output_{\HeadChoice}(\Trigger_{i})$].
    \item $\Trigger_{i}$ is loaded for $\FAux \cup \bigcup_{j=1}^{i-1} h_{\FAux}(O_{j})$.
    \item Let $\Tuple{\RuleAux_{i}, \SubsAux_{i}} = \Trigger_{i}$. We have $\RuleAux_{i} \neq \Rule$ or $\Output_{k}(\Tuple{\Rule, g \circ \Subs}) \neq \Output_{k}(\Trigger_{i})$
          for some $1 \leq k \leq \BranchFactor(\Rule)$. [Resp.: We have $\Output_{\HeadChoice}(\Tuple{\Rule, g \circ \Subs}) \neq \Output_{\HeadChoice}(\Trigger_{i})$.]
  \end{itemize}

  We show that $g^{-1}(\FAux \cup \bigcup_{j=1}^{i} h_{\FAux}(O_{j}))$ is a subset of $\mathcal{O}(\R, [\HeadChoice,] \Trigger, h_{\F})$
  via induction over $0 \leq i \leq m$.
  We have already shown the base case with $i = 0$, i.e. $g^{-1}(\FAux) \subseteq \F \subseteq \mathcal{O}(\R, [\HeadChoice,] \Trigger, h_{\F})$.

  Assume for the induction hypothesis that
  $g^{-1}(\FAux \cup \bigcup_{j=1}^{i} h_{\FAux}(O_{j}))$
  is a subset of $\mathcal{O}(\R, [\HeadChoice,] \Trigger, h_{\F})$ for some $i \geq 1$.
  To verify the induction step we only need to show that $g^{-1}(h_{\FAux}(O_{i+1})) \subseteq \mathcal{O}(\R, [\HeadChoice,] \Trigger, h_{\F})$.

  \begin{enumerate}[a.]
    \item For the trigger $\Tuple{\RuleAux_{i+1}, \SubsAux_{i+1}} = \Trigger_{i+1}$,
          we find that the fact sets
          $g^{-1}(h_{\FAux}(\bigcup \Output(\Tuple{\RuleAux_{i+1}, \SubsAux_{i+1}})))$ and $h_{\F}(\bigcup \Output(\Tuple{\RuleAux_{i+1}, g^{-1} \circ \SubsAux_{i+1}}))$ are equal.
          [Respectively: The fact sets $g^{-1}(h_{\FAux}(\Output_\HeadChoice(\Tuple{\RuleAux_{i+1}, \SubsAux_{i+1}})))$ and $h_{\F}(\Output_\HeadChoice(\Tuple{\RuleAux_{i+1}, g^{-1} \circ \SubsAux_{i+1}}))$ are equal.]
    \item By ind.-hypothesis, the trigger
          $\Tuple{\RuleAux_{i+1}, g^{-1} \circ \SubsAux_{i+1}}$
          is loaded for $\mathcal{O}(\R, [\HeadChoice,] \Trigger, h_{\F})$.
    \item Assume for a contradiction that
          $\RuleAux_{i+1} = \Rule$ and
          that for every $1 \leq k \leq \BranchFactor(\Rule)$, we obtain equality of $\Output_{k}(\Trigger)$ and $\Output_{k}(\Tuple{\RuleAux_{i+1}, g^{-1} \circ \SubsAux_{i+1}})$.
          [Resp.: Assume that $\Output_{\HeadChoice}(\Trigger) = \Output_{\HeadChoice}(\Tuple{\RuleAux_{i+1}, g^{-1} \circ \SubsAux_{i+1}})$.]
          Then, the output equalities also hold for $\Tuple{\Rule, g \circ \Subs}$ and $\Tuple{\RuleAux_{i+1}, g \circ g^{-1} \circ \SubsAux_{i+1}}$.
          Furthermore, the respective outputs of $\Tuple{\RuleAux_{i+1}, g \circ g^{-1} \circ \SubsAux_{i+1}}$ and $\Tuple{\RuleAux_{i+1}, \SubsAux_{i+1}}$ are equal.
          Therefore, we find a contradiction to the definition of $\Trigger_{1}, \dots, \Trigger_{m}$ above.
    \item By (a), (b), and (c): the induction step holds.
  \end{enumerate}
\end{proof}

\section{Proof of Theorem~\ref{theorem:cyclicity-prefix}}

We ellaborate on (a):

\begin{proof}[Proof ((a) extended).]
  We show via induction over $j \geq 1$ that $\Tuple{\Rule_{i}, \Subs_{i}^{j}}$ is loaded for every $1 \leq i \leq n$.
  The base case with $j=1$ holds since $\TriggerSeq$ is loaded.
  We show the induction step from $j$ to $j+1$.
  By induction hypothesis, $\Tuple{\Rule_{i}, \Subs_{i}^{j}}$ is loaded for $\F_{(j-1) \times n + i}(\KB, \HeadChoice, \TriggerSeq^{\infty})$.
  By construction, $\Tuple{\Rule_{i}, \Subs_{i}^{j+1}}$ is loaded for $g_{\TriggerSeq}(\F_{(j-1) \times n + i}(\KB, \HeadChoice, \TriggerSeq^{\infty}))$.
  Furthermore, we have $g_{\TriggerSeq}(\F_{(j-1) \times n + i}(\KB, \HeadChoice, \TriggerSeq^{\infty})) \subseteq \F_{j \times n + i}(\KB, \HeadChoice, \TriggerSeq^{\infty})$ by an inductive argument
  over the construction of $\TriggerSeq^{\infty}$ with the
  base case of having $g_{\TriggerSeq}(\F_{1}(\KB, \HeadChoice, \TriggerSeq^{\infty})) \subseteq \F_{n+1}(\KB, \HeadChoice, \TriggerSeq^{\infty})$.
  Note that the latter holds since $\Rule_{n} = \Rule$ and $g_{\TriggerSeq} \circ \Subs_{0} = \Subs_{n}$.
  Hence, $\Tuple{\Rule_{i}, \Subs_{i}^{j+1}}$ is loaded
  for $\F_{j \times n + i}(\KB, \HeadChoice, \TriggerSeq^{\infty})$,
  which yields the induction step.
\end{proof}

\section{Proof of Theorem~\ref{theorem:rpc-cyclicity}}

We ellaborate on the last part of (B):

\begin{proof}[Proof (last part of (B) extended).]
  We show in more detail that $h(\Output_{\HeadChoice}(\Tuple{\RuleAux_{i}, h \circ g_{\TriggerSeq} \circ \SubsAux_{i}})) \subseteq O$ for every $1 \leq i \leq m$.
First, we show
  $h(g_{\TriggerSeq}(\LF(v_{0})) = h(\BF(\Rule)(g_{\TriggerSeq} \circ \UCSubs)) \subseteq O$
  by making use of the triggers in \TriggerSeq:
  \begin{enumerate}[a.]
    \item We have that $h(\Output_{\HeadChoice}(\Tuple{\Rule, \UCSubs})) = \Output_{\HeadChoice}(\Tuple{\Rule, \UCSubs}) \subseteq \BirthFacts{\R}{\Tuple{\Rule, g_{\TriggerSeq} \circ \UCSubs}}$.
          Also, $h(\RuleDatabase{\Rule})$ is contained in the set of all facts that can be defined using
          any predicate and constants from $\EI{\Constants}{\Skeleton{\R}{\Tuple{\Rule, g_{\TriggerSeq} \circ \UCSubs}}} \cup \{\star\}$.
    \item By (a) and since \TriggerSeq is loaded,
          the trigger $\Tuple{\Rule_{i}, h \circ \Subs_{i}}$
          is loaded for $h(\F_{i-1}(\Tuple{\R, \RuleDatabase{\Rule} \cup \Output_{\HeadChoice}(\Tuple{\Rule, \UCSubs})}, \HeadChoice, \TriggerSeq))$
          for every $1 \leq i \leq n$.
    \item Since $\Tuple{\Rule, g_{\TriggerSeq} \circ \UCSubs}$ is the first trigger that yields a \Rule-cyclic term,
          $\Output_{\HeadChoice}(\Tuple{\Rule, g_{\TriggerSeq} \circ \UCSubs}) \neq \Output_{\HeadChoice}(\Tuple{\Rule_{i}, h \circ \Subs_{i}})$ for every $1 \leq i \leq n-1$.
    \item By (a), (b), (c), we find
          $h(\Output_\HeadChoice(\Tuple{\Rule_{i}, h \circ \Subs_{i}})) \subseteq O$
          for every $1 \leq i \leq n-1$, which with (b) concludes the claim.
  \end{enumerate}

  Now,
  $h(\Output_{\HeadChoice}(\Tuple{\RuleAux_{i}, h \circ g_{\TriggerSeq} \circ \SubsAux_{i}})) \subseteq O$ for $1 \leq i \leq m$.
  \begin{enumerate}[a.]
    \item We find that each trigger $\Tuple{\RuleAux_{i}, h \circ g_{\TriggerSeq} \circ \SubsAux_{i}}$ is loaded for
          $h(\BF(\Rule)(g_{\TriggerSeq} \circ \UCSubs)) \cup \bigcup_{j = 1}^{i-i} h(\Output_{\HeadChoice}(\Tuple{\RuleAux_{j}, h \circ g_{\TriggerSeq} \circ \SubsAux_{j}}))$.
    \item Since $\Output_{\HeadChoice}(\Tuple{\Rule, \UCSubs}) \nsubseteq \LF(w)$, we necessarily have that
          $\Output_{\HeadChoice}(\Tuple{\RuleAux_{i}, \SubsAux_{i}}) \neq \Output_{\HeadChoice}(\Tuple{\Rule, \UCSubs})$.
          But then also,
          $\Output_{\HeadChoice}(\Tuple{\RuleAux_{i}, h \circ g_{\TriggerSeq} \circ \SubsAux_{i}}) \neq \Output_{\HeadChoice}(\Tuple{\Rule, g_{\TriggerSeq} \circ \UCSubs})$.
    \item The claim follows from (a), (b), and the first claim shown in the previous enumeration.
  \end{enumerate}
\end{proof}

\section{Proof of Theorem~\ref{theorem:cyclicityComplexities}}

\begin{proof}
  \emph{Membership.}
  The number of rules to consider is linear in \R.
  The number of head-choices to consider is (at most) exponential in \R.
  The number of non-cyclic terms and therefore the number of triggers that need to be considered for any rule and head-choice is double-exponentially bounded in the size of \R.
  In particular, checking that a trigger is loaded and \textit{uc}/$\star$-unblockable takes at most double-exponential time.
  All together, checking $\text{(D)RPC}_\text{(s)}$ requires at most a double-exponential number of steps of which each is possible in double-exponential time.

  \emph{Hardness.}
  Following the hardness result for \MFA \cite[Theorem 8]{MFA},
  we use a reduction from the problem of conjunctive query entailment over weakly acyclic rule set \R (which is called $\Sigma$ in the original proof).
  Let $\R'$ be the weakly-acyclic rule set that results from \R such that
  $\R'' = \R' \cup \{ \Rule = R(w, x) \land B(x) \to \exists y. R(x, y) \land A(y) \}$
  is \MFA iff $\Tuple{\R', \{ A(a) \}} \not\models B(a)$ according to the construction by \citeauthor{MFA}
  In the original proof $\R'$ corresponds to $\Sigma_3$ \cite[Theorem 8]{MFA}.
  In turn, the rule set $\R''$ corresponds to $\Omega$ \cite[Lemma 7]{MFA}.
  Note that $\R'$ is weakly-acyclic and thus also \MFA and that no atom with $R$ occurs in $\R'$.
  We further extend every rule $\RuleAux \in \R'$ to obtain $\R'''$ by adding a fresh atom $P_{\RuleAux}(\Vy)$ to the head of $\RuleAux$
  where $\Vy$ is the list of all universally quantified variables in $\RuleAux$.
  Then, similar to $\R''$, we set $\R'''' = \R''' \cup \{ \Rule \}$.
  Again, $\R'''$ is weakly-acyclic and \MFA.
  Since $\R''''$ is deterministic, we consider the head-choice \HeadChoice that maps all rules to $1$.

  For $\text{(D)RPC}_\text{(s)}$, $\Output_{\HeadChoice}(\Tuple{\Rule, \UCSubs})$ already includes $R(c_x, \SF{\Rule}{1, y}(c_x))$ and $A(\SF{\Rule}{1, y}(c_x))$.
  Since $\R'''$ is \MFA, there are no cyclic terms in the $\text{(D)RPC}_\text{(s)}$ construction that do not feature a term from $\Output_{\HeadChoice}(\Tuple{\Rule, \UCSubs})$.
  Therefore, if $\Tuple{\R''', \{ A(a) \}} \not\models B(a)$, then no other trigger for \Rule is loaded
  and thus, $\R''''$ is not $\text{(D)RPC}_\text{(s)}$.
  For any other rule $\RuleAux \in \R'''$, the $\text{(D)RPC}_\text{(s)}$ construction fails because $\R'''$ is \MFA and
  \Rule can never be applied because the predicate $R$ only occurs in \Rule.
  Hence, $\R''''$ is not $\text{(D)RPC}_\text{(s)}$.
  Otherwise, if $\Tuple{\R''', \{ A(a) \}} \models B(a)$,
  we show that the sequence of $\R'''$-triggers that can derive $B(a)$ from $A(a)$ can be used in the construction
  of $\text{(D)RPC}_\text{(s)}$.
  Since $R$ does not occur in $\R'''$ and $B$ only occurs in a rule head in $\R'''$,
  $A(\SF{\Rule}{1, y}(c_x))$ is the only usable fact when starting to construct the chase derivation for $\text{(D)RPC}_\text{(s)}$.
  Hence, every trigger that becomes loaded in the contruction features a functional term in the image of its substitution.
  Since every rule in $R'''$ contains a head-atom featuring all universal variables, each of the loaded triggers is also \textit{uc}/$\star$-unblockable.
  Since $\R'''$ is \MFA, every loaded trigger does not feature cyclic terms.
  By that, we obtain, if $\Tuple{\R''', \{ A(a) \}} \models B(a)$, then $B(\SF{\Rule}{1, y}(c_x))$ occurs in the construction of $\text{(D)RPC}_\text{(s)}$.
  The trigger $\Tuple{\Rule, [w / c_x, x / \SF{\Rule}{1, y}(c_x)]}$ has an injective substitution.
  Therefore we obtain a \Rule-cyclic term in $A(\SF{\Rule}{1, y}(\SF{\Rule}{1, y}(c_x)))$
  and thus, $\R''''$ is $\text{(D)RPC}_\text{(s)}$.
\end{proof}

\end{tr}

\end{document}